\documentclass{article}

\usepackage{arxiv}

\usepackage[utf8]{inputenc} 
\usepackage[T1]{fontenc}    
\usepackage{hyperref}       
\usepackage{url}            
\usepackage{booktabs}       
\usepackage{amsfonts}       
\usepackage{nicefrac}       
\usepackage{microtype}      
\usepackage{lipsum}

\usepackage{authblk}
\usepackage[pdf]{graphviz}
\usepackage{dot2texi}
\makeatletter
\@ifundefined{verbatim@out}{\newwrite\verbatim@out}{}
\makeatother
\usepackage{amsfonts}
\usepackage{booktabs}
\usepackage{dsfont}
\usepackage{siunitx}
\usepackage{multirow}
\pagenumbering{arabic}
\usepackage{graphicx}
\usepackage{listings}
\usepackage{commath}
\usepackage{epstopdf}
\usepackage{hyperref}
\usepackage{xcolor}
\usepackage{url}
\usepackage{mathtools}
\usepackage{caption}
\usepackage{alltt}
\usepackage{mathrsfs}
\usepackage{float}
\usepackage{subfig}
\usepackage[normalem]{ulem}
\captionsetup[subfigure]{labelformat=brace}
\usepackage{graphicx,fancyvrb}
\usepackage[linesnumbered,ruled,vlined]{algorithm2e}
\usepackage{algpseudocode}
\usepackage{amsmath,amssymb}
\usepackage{booktabs}
\usepackage{caption}
\usepackage{float}
\usepackage{capt-of}
\usepackage{booktabs}
\usepackage{colortbl}
\usepackage{xfrac}
\usepackage{footnote}
\usepackage{array}
\usepackage{arydshln}
\setlength\dashlinedash{0.2pt}
\setlength\dashlinegap{1.5pt}
\setlength\arrayrulewidth{0.3pt}
\usepackage{epstopdf}
\DeclareGraphicsExtensions{.png,.pdf}
%
%

\usepackage{amsmath,amsfonts,amssymb}


\newcommand{\babak}[1]{{\texttt{\color{red} Babak: [{#1}]}}}

\definecolor{ochre}{HTML}{C06000}

\newcommand{\red}[1]{{\textbf {{\color{red} {#1}}}}}

\newcommand{\sys}{\textsc{C}a\textsc{RL}}

\newcommand{\outc}{Y}

\newcommand{\cg}{ G}

\newcommand{\Dom}{Dom}
\newcommand{\nbs}{\mathbb P}
\newcommand{\arbtreat}{\mathbb S}
\newcommand{\nbsvec}{\trevec}
\usepackage{array}

\newcommand{\bx}{ X}
\newcommand{\Scm}{\mb S}
\newcommand{\Scmdef}{(\Pred,\FAtt)}
\newcommand{\Rel}{\mb R}

\newcommand{\SelfPay}{\text{SelfPay}}
\newcommand{\Ethnicity}{\text{Eth}}
\newcommand{\Religion}{\text{Religion}}
\newcommand{\Gender}{\text{Sex}}
\newcommand{\Patient}{\text{Pa}}
\newcommand{\data}{\textsc{ReviewData}}
\newcommand{\DrugDose}{\text{Dose}}
\newcommand{\Diagnosis}{\text{Diag}}
\newcommand{\Severity}{\text{Severe}}
\newcommand{\Prescribed}{\text{Drug}}
\newcommand{\Caregiver}{\text{Care}}
\newcommand{\los}{\text{Len}}
\newcommand{\Death}{\text{Death}}
\newcommand{\Administered}{\text{Given}}
\newcommand{\Doctor}{\text{Doc}}
\newcommand{\sdata}{\textsc{Synthetic ReviewData}}
\newcommand{\mimicdata}{\textsc{MIMIC-III}}
\newcommand{\nisdata}{\textsc{NIS}}
\newcommand{\rel}{R}
\newcommand{\pred}{P}
\newcommand{\Pred}{\mb P}
\newcommand{\Ent}{\mb E}
\newcommand{\ent}{ E}

\newcommand{\FAtt}{{\mb A}}
\newcommand{\fAtt}{{\tt A}}

\newcommand{\FAttrobs}{{\FAtt}_{Obs}}

\newcommand{\unit}{\mathbb{U}}
\newcommand{\runit}{\unit_{\outc}}
\newcommand{\tunit}{\unit_{\treat}}
\newcommand{\xunit}{\unit_{\FAtt}}
\newcommand{\uunit}{\unit_{(\treat,\outc)}}

\newcommand{\skl}{\mb \Delta}

\newcommand{\rse}{\phi}

\newcommand{\rcr}{rule} 
\newcommand{\rcm}{\Phi}
\newcommand{\gcg}{G(\rcm^{\skl})}
\newcommand{\egcg}{\hat{G}(\rcm^{\skl})}

\newcommand{\Do}{{\tt do}}

\newcommand{\sem}{\Leftarrow}

\newcommand{\cm}{{{ M }}}

\newcommand{\Emb}{\Psi}
\newcommand{\emb}{{{ \psi }}}

\newcommand{\treat}{{{T}}}

\newcommand{\utable}{\mb D}

\newcommand{\Hidb}{\Emb}
\newcommand{\Hid}{\emb}
\newcommand{\ate}{ \textsc{ATE}}
\newcommand{\trevec}{\vec{t}}

\newcommand{\tre}{t}


\DeclarePairedDelimiterX{\infdivx}[2]{(}{)}{%
	#1\;\delimsize|\delimsize|\;#2%
}

\definecolor{brown}{rgb}{0.8,0.1,0.1}
\definecolor{BROWN}{rgb}{0.8,0.1,0.1}
\definecolor{red}{rgb}{1,0,0}
\definecolor{RED}{rgb}{1,0,0}

\newcommand{\reva}[1]{{{{#1}}}}
\newcommand{\revb}[1]{{{#1}}}
\newcommand{\revc}[1]{{{{#1}}}}
\definecolor{dartmouthgreen}{rgb}{0.05, 0.5, 0.06}
\definecolor{burntumber}{rgb}{0.54, 0.2, 0.14}
\definecolor{BURNTUMBER}{rgb}{0.54, 0.2, 0.14}
\newcommand{\revd}[1]{{{{#1}}}}
\newcommand{\revm}[1]{{{{#1}}}}

\newcommand{\Ex}{{{ { \mathbb E}}}}

\definecolor{mygreen}{rgb}{0,0.6,0}
\definecolor{mygray}{rgb}{0.5,0.5,0.5}
\definecolor{mymauve}{rgb}{0.58,0,0.82}

\lstset{
	breaklines=true,                                     
	language=SQL,
	frame=ltrb,
	framesep=5pt,
	basicstyle=\normalsize,
	keywordstyle=\color{blue},
	identifierstyle=\ttfamily\color{mygreen}\bfseries,
	commentstyle=\color{Brown},
	stringstyle=\ttfamily,
	emph={count,sum,avg,/},
	emphstyle={\color{red}},
	showstringspaces=ture,
	classoffset=1, 
	otherkeywords={WITH, VIWE},
	keywordstyle=\color{weborange},
	classoffset=0,
}

\usepackage{listings}
\lstnewenvironment{VerbatimText}[1][]{
    
    \lstset{fancyvrb=true,basicstyle=\footnotesize,captionpos=b,xleftmargin=2em,#1}
}{}

\usepackage{booktabs}
\usepackage{pgfplots}
\usepackage{pgfplotstable}

\newcommand{\Pa}{\mb{Pa}}

\newcommand{\pa}{\mb{pa}}

\DeclareMathOperator{\EX}{\mathbb{E}}

\usepackage{amsmath}

\newcommand{\mc}[1]{\mathcal{#1}}
\newcommand{\tsc}[1]{\textsc{#1}}

\newcommand{\ignore}[1]{}

\newcommand{\defeq}{\stackrel{\text{def}}{=}}
\newcommand*{\rom}[1]{\expandafter\@slowromancap\romannumeral #1@}

\newcommand{\indep}{\mbox{$\perp\!\!\!\perp$}}

\usepackage{esvect}

\newcommand{\vo}{\vec{o}\@ifnextchar{^}{\,}{}}


\newcommand{\aoe}{ {\tt \textsc{AOE}}}

\newcommand{\aide}{ {\tt \textsc{AIE}}}

\newcommand{\arlf}{ {\tt \textsc{ARE}}}

\newcommand{\AGG}{{\tt \textsc{AGG}}}

\newcommand{\pr}{{\tt {Pr}}}

\newcommand{\mb}[1]{{\bf {#1}}}
\colorlet{lightgray}{gray!20}

\newtheorem{defn}{Definition}[section]

\newtheorem{example}[defn]{Example}
\newtheorem{definition}[defn]{Definition}
\newtheorem{theorem}[defn]{Theorem}
\newtheorem{proof}[defn]{Proof}
\newtheorem{proposition}{Proposition}[section]
\newtheorem{observation}{Observation}[section]

\newcommand{\RNum}[1]{\uppercase\expandafter{\romannumeral #1\relax}}
\newcommand{\ie}{{\em i.e.}} 
\newcommand{\eg}{{\em e.g.}} 

\newcommand{\proj}[1]{{\Pi}}
\newcommand{\sel}[1]{{\sigma}}

\newcommand{\cut}[1]{}
\newcommand{\cutr}[1]{} 
\newcommand{\eat}[1]{}

\title{Causal Relational Learning\thanks{This is an extended version of a paper that accepted for publication at the Proceedings of the 2020 International Conference on Management of Data \cite{carl2020}.}}


\begin{document}

\author[1 ]{Babak Salimi}
\author[2]{Harsh Parikh}
\author[1]{Moe Kayali}
\author[2]{Sudeepa Roy}
\author[3]{Lise Getoor}
\author[1]{Dan Suciu}

\affil[1]{ University of Washington}
\affil[2]{ Duke University}
\affil[3]{ University of California at Santa Cruz}
\maketitle

\begin{abstract}

%
%

\revm{Causal inference is at the heart of empirical research in  natural and social sciences and is critical for scientific discovery and informed decision making. The gold standard in causal inference is performing 
{\em randomized controlled trials}; unfortunately these are not always feasible due to ethical, legal, or cost constraints. As an alternative, 
methodologies for causal inference from {\em observational data} have been developed in statistical studies and social sciences\cutr{\cite{rubin1970thesis,rubin2008comment,pearl2018book}}.} 
However, existing methods critically rely \revm{on restrictive assumptions such as} the study population consisting of {\em homogeneous elements} that can be represented in a single flat table, where each row is referred to as a {\em unit}. In contrast, in many real-world settings, the study domain naturally consists of 
{\em heterogeneous elements} with complex relational structure, where the data is naturally represented in multiple related tables.
In this paper, we present a formal framework for causal inference from such relational data.   We propose a declarative language called \sys\ for capturing causal background knowledge and assumptions and specifying  causal queries using simple Datalog-like rules. \revm{ \sys\ provides a foundation for inferring causality and reasoning about the effect of complex interventions in relational domains.}   
%
\revm{
We present an extensive experimental evaluation on real relational data to illustrate the applicability of \sys\ in
social sciences and healthcare.}
%

\cut{
Causal inference is at the heart of empirical research in majority of sciences and social sciences, and is critical for making sound data-driven decisions.
The gold standard in causal inference is performing {\em controlled experiments}, which is not always feasible due to ethical, legal, or cost constraints.  As an alternative, inferring causality from {\em observational data} has been extensively used in statistical studies \cute{in public policy or} and social sciences.
However, the existing methods critically rely on a restrictive assumption that the population of study consists of {\em homogeneous  units} that can be represented as a single flat table.
In contrast, in many real-world settings, the study domain  consists of {\em heterogeneous units with complex relational structure}, where the data is naturally represented as multiple related tables.  
In this paper, we present a formal framework for causal inference from such relational data, and propose a declarative language called \sys\ for capturing users' assumptions and specifying  causal queries using simple Datalog-like rules. We develop an underlying inference engine that (1) automatically detects a sufficient set of {\em covariates} that should be adjusted for to remove confounding effects, (2) transforms relational data into a flat table amenable to easy  causal inference by carefully applying {\em embeddings} to attributes, and (3) answers a suite of causal queries considering {\em relational} and {\em isolated} effects of the applied treatment.  We give extensive experimental evaluations on synthetic and real data, and illustrate the applicability of our method on estimating the causal effect of institutional prestige on the acceptance of papers under single-blind and double-blind review processes.
}

\cut{
Causal inference is at the heart of empirical research in majority of  sciences and social sciences, and is critical for making sound data-driven decisions.
The gold standard in causal inference is performing {\em controlled experiments}, which is not always feasible.  As an alternative, inferring causality from {\em observational data} has been extensively used in statistical studies in public policy or social sciences.
However, the existing methods critically rely on a restrictive assumption that the population of study consists of {\em homogeneous 
unit with no structure}, hence the data represented as a flat table.
In contrast, in many real-world settings, the study domain  consists of {\em heterogeneous units with complex relational structure}, hence the data is naturally represented as multiple tables.  
In this paper, we present a formal framework for causal inference from such relational data, and propose a declarative language called \sys\ for capturing users' assumptions and specifying  causal queries using simple Datalog-like rules. We develop an underlying inference engine that (1) automatically detects a sufficient set of {\em covariates} that should be adjusted for to remove confounding effects, (2) transforms relational data into a flat table amenable to easy  causal inference by carefully applying {\em embeddings} to attributes, and (3) answers a suite of causal queries considering {\em relational} and {\em isolated} effects of the applied treatment.  We give extensive experimental evaluations on synthetic and real data, and illustrate the applicability of our method on estimating the causal effect of various factors on the acceptance of papers under single- and double-blind review process.
}



%

\end{abstract}

\section{Introduction}
\label{sec:intro}

The importance of causal inference for making informed policy
decisions has long been recognised in health, medicine, social
sciences, and other domains. However, today's decision-making systems
typically do not go beyond {\em predictive analytics} and thus fail to
answer questions such as ``What 
\revm{would} happen to revenue if the price of
X is lowered?'' While predictive analytics has achieved remarkable
success in diverse applications, it is mostly restricted to fitting a model to
observational data based on associational
patterns~\cite{pearl2018book}. Causal inference, on the other hand,
goes beyond associational patterns to the process that generates the
data, {thereby enabling analysts to reason about {\em interventions}  (e.g., ``Would requiring flu shots in schools reduce the chance of a future flu epidemic?") and {\em counterfactuals} (e.g., ``What would have happened if past flu shots were not taken?").} \revm{This adds significantly more information in data analysis compared to simple correlation or regression analysis; e.g., 
as the number of flu cases increases, the rate of
flu shots might also increase, but that does not imply that 
giving flu shots increases the spread of flu. 
This emphasizes the 
common saying that ``correlation is not causation'', which is known to all, but is easy to overlook if one is not careful while analyzing data for insights and possible actions.}
 
The gold standard in causal analysis is performing {\em randomized
  controlled trials}, where the {\em subjects} or {\em units} of
study are assigned randomly to a treatment or a \revm{control  (i.e., withheld
from the treatment) group}. The difference between the distribution of the outcome variable of
the treated and control groups represents the {\em causal effect} of the treatment on outcome. However,
control experiments are not always feasible due to ethical, legal,
or cost constraints \cite{rubin2008objective,angrist2008mostly}. An attractive alternative that has been used in
statistics, economics, and social sciences simulates control
experiments using available {\em observational data}.  While we can no longer assume that
the treatment has been randomly assigned, under appropriate
assumptions we can still \cutr{learn causal relationships.}
\revm{estimate causal effects. Rubin’s Potential Outcome Framework \cite{rubin1970thesis} and Pearl’s Causal Models \cite{PearlBook2000} (reviewed in Section~\ref{sec:background}) are two well-established frameworks which have been extensively studied in the literature and used in various applications for causal inference from observational data \cite{banerjee2011poor,rubin2006matched,rudin,ogburn2017causal,angrist2008mostly}.} A quick search on SemanticScholar reveals a growing interest in observational
studies compared to controlled experiments, as shown in  Figure~\ref{fig:obstrend}.

\cutr{
\revm{Two 
established frameworks  for observational causal analysis, 
Rubin's Potential
Outcome Framework~\cite{rubin1970thesis} and Pearl's Causal
Models~\cite{PearlBook2000} (reviewed in Section~\ref{sec:background}),
have been  extensively studied 
in the literature of statistics, artificial intelligence,
social science, health, 
and other application domains.}
}

 \revm{Causal frameworks, however, rely on the critical assumption that
the units of study are sampled from a population of homogeneous units;
in other words, the data can be represented in a single flat table. This assumption is called the unit homogeneity assumption \cite{Holland1986,angrist2008mostly}.} In many real-world
settings, however, the study domain consists of {\em heterogeneous units} that
have a {\em complex relational structure}; and the data is naturally represented as multiple related tables. \revm{For instance, as presented later in our experiments with real data \cite{mimic, healthcare_cost_and_utilization_project_hcup_hcup_2006}, 
hospitals can record   in several tables information about patients, medical practitioners, hospital stays, treatments performed, insurance, bills, and so on. }  Standard
notions used in causal analysis --- such as \revm{units or subjects who receive a treatment in causal analysis} ---
no longer \revm{readily} apply to relational data, prohibiting us from adopting existing causal inference frameworks to relational domains.
We illustrate these challenges with the following example.

\begin{figure}
    \centering
    \includegraphics[scale=0.3]{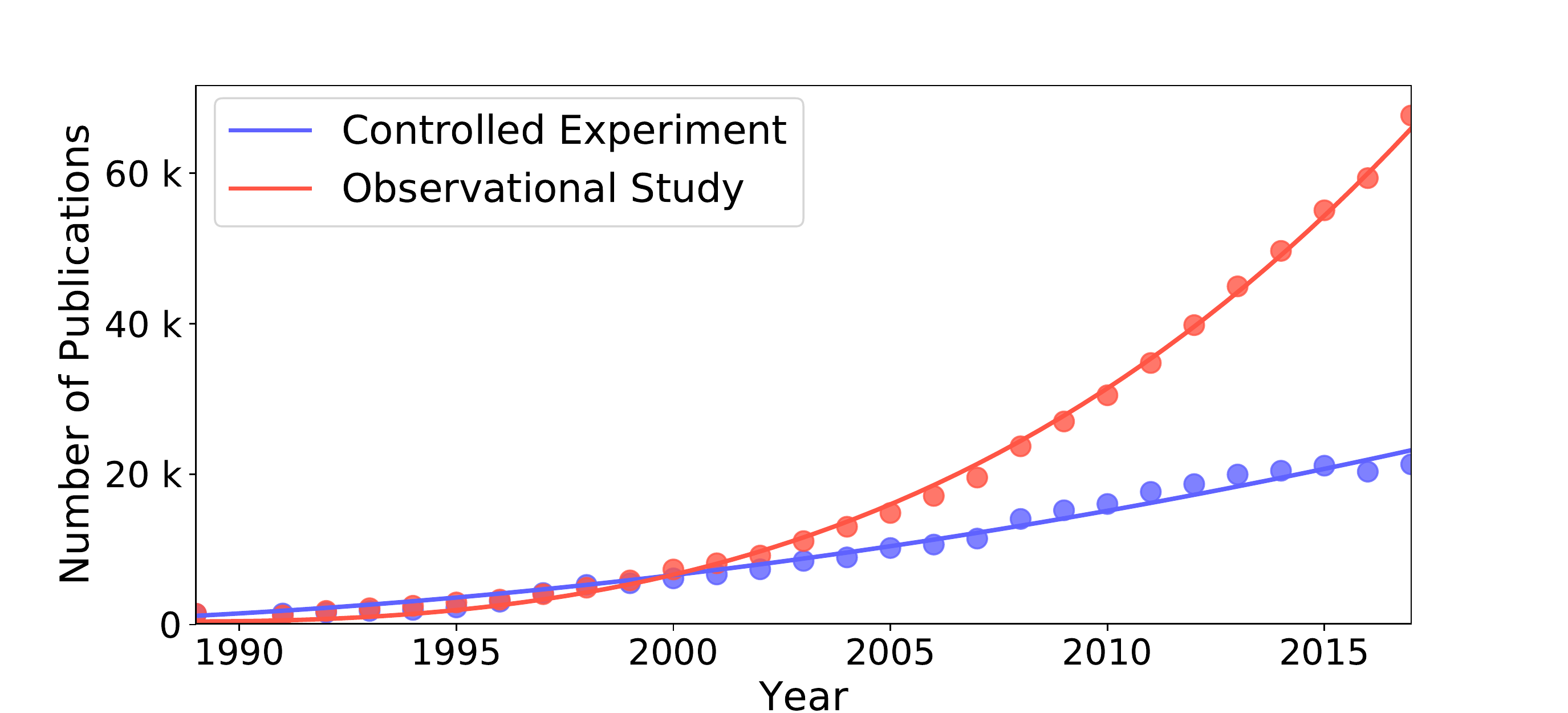}
    \caption{Number of publications that include observational studies
      vs controlled experiments (obtained from
      SemanticScholar~\cite{SemanticSchoolar}).
      }
    \label{fig:obstrend}
\end{figure}

\cut{\begin{example} \em \label{ex:univdomain:causalmodel}
\vspace{-2mm}
Consider researchers trying to understand the impact of 
single-blind and double-blind reviewing policies on the review scores of submissions, in particular, understanding 
how the prestige of authors affects the fairness of decisions. In this setting, the outcome of interest is the review scores of a submission, and the treatment is the prestige of
the authors (Figure~\ref{fig:example_instance} shows a simplified schema for the domain. See Section~\ref{sec:review-data} for full details
about the actual \data\ which comes from OpenReview \cite{OpenReview} and other datasets.)
%
For answering causal questions 
such as 
``Is there an effect of the prestige of authors on the review score received by the submission at a conference?'', 
we need to control for confounders like the quality of submissions and  conferences where they are submitted. This requires not only joining across multiple tables, but it also requires aggregating over authors since the authors table is related to paper submissions by a many-to-many authorship relation.
 \end{example}
}
\begin{example} (\data) \em \label{ex:univdomain:causalmodel}
  OpenReview~\cite{OpenReview} is a collection of paper submissions from several conferences, mostly in ML and AI, along with their reviews. Significantly, the collection contains review scores and author information for both
  accepted and rejected papers.  Scopus~\cite{scopus} is a large, well
  maintained database of peer-reviewed literature, including
  scientific journals, books and conference proceedings. The Shanghai University Ranking~\cite{shanghai} is one of three authoritative global university rankings.  We crawled
  and integrated these sources to produce a relational database,
  which we show in simplified form in
  Figure~\ref{fig:example_instance}.  Data sources like this represent a
  treasure trove of information for the leadership of scientific
  conferences and journals.  For example, they can help answer
  questions like ``Does double-blind reviewing achieve its desired effect?'' or
  ``Does increasing (or decreasing) the page limit affect paper quality, and if so, how?''.  To answer these real-life questions,
  discovering association is not sufficient; instead, decision
  makers need to know if there exists a causal effect. For
  example, suppose a conference is currently requiring double-blind
  submissions, and the leadership is questioning the effectiveness of this policy.  If leadership reverts to single-blind submissions, would that represent an
  unfair advantage for authors from prestigious institutions? 
  Given  a dataset like that in Figure~\ref{fig:example_instance}, one can run a few
  SQL queries and check whether those authors consistently get better
  reviews at single- rather than double-blind
  venues. However, this only proves or disproves correlation, 
  not causation.  Alternatively, one could apply Rubin's Potential Outcome Framework~\cite{rubin1970thesis, rubin2005causal}, but that requires 
  data to be presented as a single table of independent units.  Doing this
  naively on our dataset (\eg, computing the universal
  table~\cite{ullm82}) means one cannot account for what statisticians refer to as {\em interference} and
  {\em contagion} effects~\cite{tchetgen2012causal, ogburn2014causal}, 
  both of which prohibit standard causal
  analysis. For instance, the prestige of an author not only influences
  his or her acceptance rate, but it also has a {\em spill-over} effect on the acceptance rate of  his or her co-authors; this is called interference.
  Further, authors' qualifications can be {\em contagious}, meaning that if a junior author collaborates frequently with a
  senior author, then the overall quality of his or her own research may
  increase over time.

\end{example}

\textbf{Our contributions.}  In this paper, we propose a declarative framework for {\em Causal
  Relational Learning}, a foundation for causal inference over relational domains. Our first contribution is a declarative language, \emph{\sys\ (Causal Relational Language)}, for representing causal background knowledge and assumptions in relational domains ({\bf Section~\ref{sec:causal-framework}}). 
\sys\ can represent complex causal models  using just a few rules. The syntax of \sys~ is designed to be intuitive for users to represent complex causal models and ask causal queries, while the details of their semantics and query answering are abstracted from the users who need not be statisticians. 

Our second contribution is to define semantics for \textit{complex causal queries} where the treatment units and outcome units might heterogeneous and controlling for confounding may require performing multiple joins and aggregates ({\bf Section~\ref{sec:semantics}}). Using \sys, we can answer complex causal queries such as:``what is the effect of not having an insurance on mortality of a patient in a critical care unit?'', where we are interested in estimating the \textit{average treatment effect} (defined later), or ``what is the effect of authors' collaborators' prestige on acceptance of a paper?'', where we are interested in estimating the \textit{average relational effect}; several other types of queries are also supported.

Our third contribution consists of an algorithm for answering causal queries from the given relational data ({\bf Section~\ref{sec:causal-queries:answering}}). The algorithm performs a static analysis of the causal query, and it constructs a unit-table specific to the query and the relational causal model by identifying a set of attributes that are sufficient for confounding adjustment. The constructed unit-table is amenable to sound causal inference using existing techniques.  


Finally, we present an end-to-end experimental evaluation of \sys\ on both real and synthetic data ({\bf Section~\ref{sec:experiments}}). The experiments conducted on the following real-world relational datasets: 1) \data~ \cite{OpenReview,shanghai,rose_pybliometrics:_2019}, 2) \mimicdata~ (Medical Information Mart for Intensive Care Data) \cite{mimic}, and 3) \nisdata~ (National Inpatient Sample Data) \cite{healthcare_cost_and_utilization_project_hcup_hcup_2006}. We examine the following causal queries:
\begin{itemize}
    \item \data. What is the effect of authors' prestige on the scores given by the receivers under single-blind and double-blind review processes?
    \item \mimicdata. What is the effect of not having insurance on patient's mortality and length of hospital stay? 
    \item \nisdata. What is the effect of hospital size on healthcare affordability?
\end{itemize}
In each setting,
we report contrasts between correlation and causation, further highlighting the need for principled causal analysis. Evaluation of \sys\ on synthetic data showed that 
causal analysis ignoring the relational structure of data failed to recover the ground truth, but \sys~ successfully recovered accurate results. 
  \begin{figure}
    \centering

{\footnotesize

\begin{tabular}{ccccc}
\begin{tabular}[t]{|c|c|c|}
\hline
\multicolumn{3}{|c|}{\textbf{Authors}} \\ \hline
person & prestige & qualification \\
& & (h-index)\\\hline
 Bob & 1 & 50 \\
 Carlos & 0 & 20 \\
 Eva  & 1 & 2\\ \hline
\end{tabular}
\hspace{3mm}
\begin{tabular}[t]{|c|c|}
\hline
\multicolumn{2}{|c|}{\textbf{Submissions}} \\ \hline
sub & score \\\hline
s1 & 0.75 \\
s2 & 0.4 \\
s3 & 0.1 \\ \hline
\end{tabular}
\hspace{3mm}
\begin{tabular}[t]{|c|c|}
\hline
\multicolumn{2}{|c|}{\textbf{Authorship}}\\
\hline
person & sub \\\hline
Bob & s1 \\
Eva & s1 \\
Eva & s2 \\
Eva & s3 \\
Carlos & s3 \\ \hline
\end{tabular}
\hspace{3mm}
\begin{tabular}[t]{|c|c|}
\hline
\multicolumn{2}{|c|}{\textbf{Submitted}}\\\hline
sub & conf \\\hline
s1 & ConfDB \\
s2 & ConfAI \\
s3 & ConfAI \\ \hline
\end{tabular}
\hspace{3mm}
\begin{tabular}[t]{|c|c|}
\hline
\multicolumn{2}{|c|}{\textbf{Conferences}}\\
\hline
conf & blind \\\hline
ConfDB & Single \\
ConfAI & Double \\ \hline
\end{tabular}
\end{tabular}
}

\cutr{\footnotesize

\begin{tabular}{lll}
\begin{tabular}[t]{|c|c|c|}
\hline
\multicolumn{3}{|c|}{\textbf{Authors}} \\ \hline
person & prestige & quali-\\
& & fication \\
& & (h-index)\\\hline
 Bob & 1 & 50 \\
 Carlos & 0 & 20 \\
 Eva  & 1 & 2\\ \hline
\end{tabular}
&
\begin{tabular}[t]{|c|c|}
\hline
\multicolumn{2}{|c|}{\textbf{Submissions}} \\ \hline
sub & score \\\hline
s1 & 0.75 \\
s2 & 0.4 \\
s3 & 0.1 \\ \hline
\end{tabular}
&
\multirow{2}{*}{
\begin{tabular}[t]{|c|c|}
\hline
\multicolumn{2}{|c|}{\textbf{Authorship}}\\
\hline
person & sub \\\hline
Bob & s1 \\
Eva & s1 \\
Eva & s2 \\
Eva & s3 \\
Carlos & s3 \\ \hline
\end{tabular}
}
\\
\begin{tabular}[t]{|c|c|}
\hline
\multicolumn{2}{|c|}{\textbf{Submitted}}\\\hline
sub & conf \\\hline
s1 & ConfDB \\
s2 & ConfAI \\
s3 & ConfAI \\ \hline
\end{tabular}
&
\begin{tabular}[t]{|c|c|}
\hline
\multicolumn{2}{|c|}{\textbf{Conferences}}\\
\hline
conf & blind \\\hline
ConfDB & Single \\
ConfAI & Double \\ \hline
\end{tabular}
&
\end{tabular}

}


\cutr{
{\scriptsize
\begin{tabular}{c}
\begin{tabular}{lllll}
\begin{tabular}[t]{|c|}
\hline
\textbf{Person} \\ \hline
 Bob \\
 Carlos \\
 Eva \\ \hline
\end{tabular}
&
\begin{tabular}[t]{|c|c|}
\hline
\multicolumn{2}{|c|}{\textbf{Author}}\\
\hline
Bob & s1 \\
Eva & s1 \\
Eva & s2 \\
Eva & s3 \\
Carlos & s3 \\ \hline
\end{tabular}
&
\begin{tabular}[t]{|c|}
\hline
\multicolumn{1}{|c|}{\textbf{Submission}} \\ \hline
 s1 \\ s2 \\ s3 \\ \hline
\end{tabular}
&
\begin{tabular}[t]{|c|c|}
\hline
\multicolumn{2}{|c|}{\textbf{Submitted}}\\
\hline
s1 & DB \\
s2 & AI \\
s3 & AI \\ \hline
\end{tabular}
&
\begin{tabular}[t]{|c|}
\hline
\textbf{Confe-} \\
\textbf{rence} \\ \hline
 DB \\
 AI \\ \hline
\end{tabular}
\end{tabular}
\\
\begin{tabular}[b]{llll}
\begin{tabular}[t]{|c|c|}
\hline
\multicolumn{2}{|c|}{\textbf{Prestige}}\\
\hline
Bob & 1  \\
Carlos & 0 \\
Eva & 1  \\ \hline
\end{tabular}
&
\begin{tabular}[t]{|c|c|c|}
\hline
\multicolumn{3}{|c|}{\bfseries Qualifications} \\  \hline 
 & \textbf{Expe-}  & \textbf{h-index}  \\ 
 & \textbf{rience} & \\ \hline 
Bob & 10 & 50 \\
Carlos & 8 & 20\\
Eva & 2 & 10\\ \hline
\end{tabular}
\cutr{
\begin{tabular}[t]{|c|c|c|}
\hline
\multicolumn{3}{|c|}{\bfseries Qualifications} \\  \hline 
 & \textbf{Expe-}  & \textbf{h-index}  \\ 
 & \textbf{rience} & \\ \hline 
Bob & 10 & 50 \\
Carlos & 8 & 20\\
Eva & 2 & 10\\ \hline
\end{tabular}
}
&
\begin{tabular}[t]{|c|c|}
\hline
\multicolumn{2}{|c|}{\textbf{Score}}\\
\hline
s1 & 0.75 \\
s2 & 0.4 \\
s3 & 0.1 \\ \hline
\end{tabular}
&
\begin{tabular}[t]{|c|c|}
\hline
\multicolumn{2}{|c|}{\textbf{Blind}}\\
\hline
DB & Single \\
AI & Double \\ \hline
\end{tabular}
\end{tabular}
\end{tabular}
}
}
    \caption{{A multi-relational 
     \data\ instance.}}
     \vspace{-0.5cm}
    \label{fig:example_instance}
\end{figure}
\vspace{-0.3cm}
\section{Background on Causal Analysis}\label{sec:background}
%


In this section we \revm{review fundamental concepts in causal analysis.} \revm{We use capital letters $X$ 
to denote \revm{random} variables, and use lower case letters $x$
to denote their values.  We use boldface $\mb X$, $\mb x$ 
to denote tuples of random variables and constants respectively; and $Dom(X)$ denotes the domain of variable $X$. }


 \paragraph*{\bf Probabilistic causal models.}
A probabilistic causal model \cite{pearl2009causality} is a tuple
$\cm = \langle \mb U, \mb V , \pr_{\mb U}, \mb F \rangle$, where \revm{
$\mb U$ is a set of \emph{exogenous} variables that cannot be observed,   
$\mb V$ is a set of \emph{observable or endogenous} variables, 
and $\pr_{\mb U}$
is a joint probability distribution on the exogenous variables
$\mb U$. The set $\mb F = (F_X)_{X \in \mb V}$ 
is a set of \emph{non-parametric  structural equations}
of the form 
$F_X : Dom(\Pa_{\mb V}(X)) \times Dom(\Pa_{\mb U}(X)) \rightarrow
Dom(X)$, where  $\Pa_{\mb U}(X) \subseteq \mb{U}$ and
$\Pa_{\mb V}(X) \subseteq \mb V-\set{X}$ are called the \textit{exogenous
parents} and \textit{endogenous parents} of $X$ respectively.}
Intuitively, the exogenous variables $\mb U$ are not known,
but we know their probability distribution; the endogenous
variables are completely determined by their parents (exogenous and/or
endogenous). 
\cutr{
\sout{In this paper, given users' background knowledge and assumptions about causal models, we use observational data to validate assumptions and quantify causal effects. Background knowledge is {\em fundamentally} required for causal inference \cutr{\cite{pearl2009causal, rubin2005causal}}
}
}

\paragraph*{\bf Causal DAGs.}
A probabilistic causal model is
associated with a {\em causal DAG} (directed acyclic graph) 
$\cg$, whose nodes are the
endogenous variables $\mb V$, and whose edges are all pairs $(Z,X)$
such that $Z \in \Pa_{\mb V}(X)$. \cutr{We write $Z \rightarrow X$ for an
edge;} 
The causal DAG hides
exogenous variables (since we cannot observe them anyway) and instead
captures their effect by defining a probability distribution
\revm{$\pr_{\mb U}$} on the endogenous variables.\footnote{ This is possible under the
{\em causal sufficiency} assumption:
for any two variables $X, Y \in \mb{V}$, their exogenous
  parents are disjoint and independent
  $\Pa_{\mb U}(X) \indep \Pa_{\mb U}(Y)$.  When this assumption fails,
  one adds more endogenous variables to the model to expose their
  dependencies.}
\revm{We will only refer to endogenous variables in the rest of the paper
and drop the subscript $\mb{V}$ from $\Pa_{\mb{V}}$. 
Similarly,  we will  drop the subscript $\mb U$ from the probability distribution $\pr_{\mb U}$ when it is clear from the context.}
Then the formula for $\pr(\mb V)$ is the same as that
for a Bayesian network:

%
{
\begin{align}
\pr(\mb V) = & \prod_{X \in \mb V} \pr(X | \Pa(X))  \label{eq:bayesian}
\end{align}
}
\begin{figure}[t]
    \centering
    \includegraphics[scale=0.3]{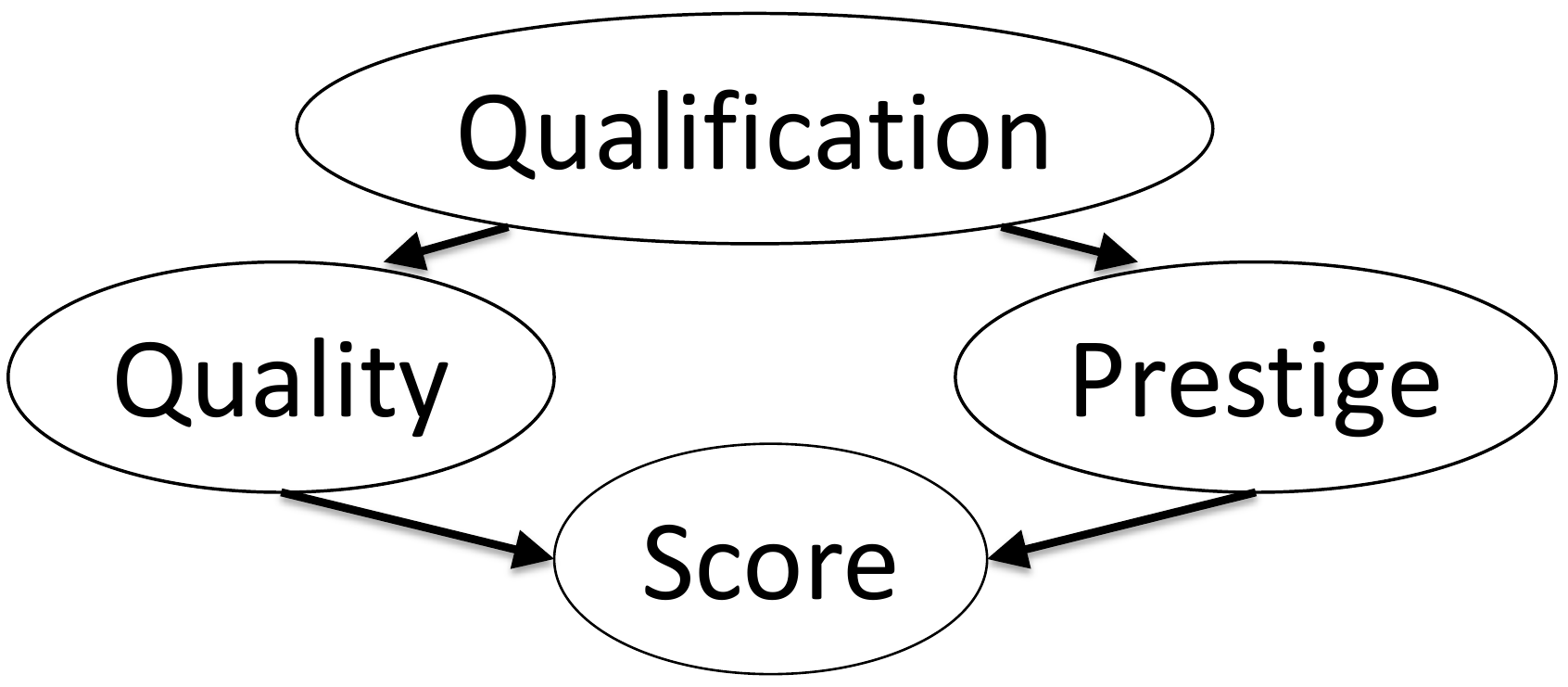}
    \caption{A standard 
    causal DAG for Example~\ref{eg:instance}. 
    }
    \label{fig:simple-causal-dag}
\end{figure}

\par
Figure~\ref{fig:simple-causal-dag} shows a simple
example of a causal graph \revm{based  on Example~\ref{ex:univdomain:causalmodel}: the Score of a paper is affected by its
Quality and by the Prestige of the author (assuming the reviews are
single blind), whereas both Quality and Prestige are affected by the author's
Qualification. Here $\mb V$ = \{\texttt{Qualification}, \texttt{Quality}, \texttt{Prestige}, \texttt{Score}\} are endogenous variables, $\mb U$ endogenous variables are unknown (e.g., mood of a reviewer while reviewing the paper, the expected number of papers to be accepted, scores of other papers the reviewer reviewed, etc.) leading to a probability distribution on $\mb V$. The  dependencies can be represented by three structural  equations:
{
\begin{align}\label{eq:se_background}
\text{Quality} &\sem   \text{Qualification}; & \ \  
\text{Prestige} &\sem  \text{Qualification};\nonumber\\
\text{Score} &\sem  \text{Quality}, \text{Prestige}.
\end{align}
}
\vspace*{-3mm}
}

\cut{In this paper we will assume that the causal DAG is
known,\footnote{There exists an extensive literature on learning the
  causal DAG from data.} and one uses some observational data in
order to learn the conditional probability distribution
(\ref{eq:bayesian}).}  


\vspace*{-3mm}
\paragraph*{\bf Interventions and the \texttt{do} operator.} \revm{Causal models give semantics to {\em interventions}.} An intervention represents actively setting an endogenous variable to some fixed value and observing the effect denoted by the $do$-operator introduced by 
Pearl~\cite{pearl2009causal}.
Formally, an intervention \revm{$\Do(\mb W = \mb w)$} consists of setting
variables \revm{$\mb W \subseteq \mb V$} to some values \revm{$\mb W= \mb w$}, and it defines the
probability distribution \revm{$\pr(\mb V | \Do(\mb W = \mb w))$} given by
(\ref{eq:bayesian}), where we remove all factors $\pr(X | \Pa(X))$,
where $X \in \mb W$.  In other words, we modify the causal DAG by
removing all edges entering the variables $\mb W$ on which we intervene; this fundamentally differs from conditioning,
\revm{$\pr(\mb V | \mb W = \mb w)$}.  Pearl has an extensive discussion of
the rationale for the $do$-operator and describes several equivalent
formulas for estimating the effect of \revm{$\Do(\mb W = \mb w)$} from \revm{an} observed distribution \cite{pearl2009causality}.

%
%
%
%
\vspace{-0.2cm}
\paragraph*{\bf Average treatment effect (ATE).}
The causal analysis estimates the effect treatment variable $T$ (typically a binary variable) on some outcome variable $Y$. This effect is often measured by the following quantity known
as the {\em average treatment effect (ATE)}, which is
expressed as follows in our notation:
\begin{align}
\ate(Y, T)& = \Ex[Y|do( T= 1)]-\Ex[ Y|do( T= 0)] 
\label{eq:ate}
\end{align}
Much of the literature on causal inference in statistics addresses efficient estimation of $\ate$ from observational data.

\revm{
\paragraph*{\bf Unit of analysis and SUTVA.} Both Pearl's \cite{pearl2009causality} and Rubin's causality \cite{rubin2005causal} rely on the assumption that the study domain consists of a set of \emph{units}, or physical objects (e.g., authors, patients, publications, etc.) that can be subject to a treatment/intervention and exhibit a response to it. Furthermore, they rely on the assumption of   {\em no interference between the units} or \emph{Stable Unit Treatment Value Assumption (SUTVA)} \cite{rubin2005causal}. Intuitively SUTVA states that intervening on or treating a unit does not have any consequences on the response of other units. In settings where the units of analysis are relationally connected, this assumption is typically violated. In Example~\ref{ex:univdomain:causalmodel}, prestige of an author (treatment) influences the acceptance chance (response) of his or her co-author(s) and collaborator(s) which leads to the violation of SUTVA.
}

\paragraph*{\bf Related Work.}
Previous work has studied causal inference in the presence of
interference
\cite{ogburn2017causal,graham2010measuring,halloran2012causal,halloran1995causal,tchetgen2012causal,vanderweele2011bounding,aronow2017,shalizi2011homophily,ogburn2014causal,ogburn2018causal}. These works address applications such as the study of infectious diseases \cite{vanderweele2011bounding, halloran1995causal} or behavior and interactions in social networks \cite{shalizi2011homophily,ogburn2018causal, sobel2006randomized,graham2010measuring, vanderweele2011bounding, maier2013reasoning,hong2006evaluating}. But in these studies the units are still homogeneous (e.g., people connected by a social network), and they are unable to capture different \emph{entities} of interests like papers, authors, reviews and their complex many-to-many relationships that we focus on in \sys.
There has been prior work on learning causality from relational data \cite{maier2013sound,DBLP:conf/uai/2019,lee2016learning,maier2010learning}; it focuses on discovering the structure of probabilistic graphical models for this data. These models were originally proposed for Statistical Relational Learning, which aims to model a joint probability distribution over relational data amenable for probabilistic reasoning rather than causal inference ~\cite{Getoor:2007:ISR:1296231}. \revd{This line of work differs from our work in that our objective is to develop a declarative framework to answer complex causal queries about the effect of interventions, given the existing background knowledge}. 
Note that {\em causality} has been used in various contexts \cite{meliou2014causality}, namely,  to understand responsibility in query answering  \cite{MeliouGMS2011, SalimiTaPP16}, in database repair \cite{meliou2011tracing, DBLP:conf/icdt/SalimiB15,bertossi2017causes}, and to motivate explanations and diagnosis \cite{RoyS14,DBLP:conf/flairs/SalimiB16,bertossi2017causes}. It has also inspired different applications such as  hypothetical reasoning \cite{balmin2000hypothetical,lakshmanan2008if, deutch2013caravan, meliou2012tiresias}. These differ from our work in that they identify parts of the input that are correlated with the output of a transformation, which is useful but
\revm{does not reflect the true causality needed for decision making.}

%
%
%
%

\cut{
In this section we \revm{review fundamental concepts in causal analysis.} \revm{We use capital letters $X$ etc., to denote \revm{random} variables, and use with lower case $x$ etc. to denote their values.  We use boldface $\mb X$, $\mb x$ (etc.) to denote tuples of random variables and constants respectively; and $Dom(X)$ denotes the domain of variable $X$. 
}
\paragraph*{\bf Intervention and treatment effect}
\revm{ 
The notion of causality is built on the idea of \emph{units}, or subjects of interest (e.g., patients, publications, etc.).
Units can be exposed to \emph{treatments}, denoted $T$. Each unit can be exposed to the treatment, $T=1$ (\emph{treated unit}) or not, $T = 0$, (\emph{control treatment}. Pearl's model defines counterfactual using the \emph{do-operator} (discussed in detail below in terms of a causal graph). It is expressed as $Y | do(T = 1)$ and $Y | do(T = 0)$ for treated and control units respectively. A standard goal in causal analysis is to compute the \emph{Average Treatment Effect (ATE)} expressed as follows:
{\footnotesize
\begin{eqnarray}
\ate(Y, T) = \Ex_X[\Ex[Y|do( T= 1),X]-\Ex[Y|do( T= 0),X]] 
\label{eq:ate}
\end{eqnarray}
}
}
\revm{\textbf{Observational causal analysis.~} For any unit under consideration, only one of the outcomes $Y | do(T = 1)$ or $Y | do(T = 0)$. Only $Y | do(T = 1)$ can be observed for a treated unit and only $Y | do(T = 0)$ can be observed for a control unit. This makes causal analysis or the computation of ATE (\ref{eq:ate}) a \emph{missing data problem}. For randomized controlled trials this problem is easily solved as treatment $T$ is assigned at random to the units, i.e., $(Y | do(T = 1), Y | do(T = 0)) \indep T$ where the symbol $\indep$ denotes independence and therefore (\ref{eq:ate}) reduces to 
{\small
\begin{align}
    \ate(Y, T) = \Ex_X[\Ex[Y|do( T= 1),X]-\Ex[Y|do( T= 0),X]] \nonumber \\
    = \Ex_X[\Ex[Y| T = 1,X]] - \Ex_X[\Ex[Y | T = 0,X]].
\end{align}
}
It can be easily estimated from the observed data. However, due to ethical, time, or cost constraints, randomized controlled trials are not feasible for many causal questions of interests like whether smoking causes cancer. Although for many such questions historical \emph{observational} or \emph{collected} data is available, since the treatment $T$ may not be assigned at random, the above expression for \ate\ no longer holds (e.g., males of a certain age range are more likely to be smoker than others). Fortunately, it has been observed that sound causal analysis is still possible on observational data under some assumptions by recording possible variables that can affect assignment of treatments to units (called \emph{confounding covariates}). If such variables $X$ are observed, then under the assumption that $(Y | do(T = 1), Y | do(T = 0)) \indep T | X$ (called \emph{strong ignorability} \cite{Rubin1983b}), (\ref{eq:ate}) becomes
{\footnotesize
\begin{align}
\ate(Y, T)  =  \Ex_X[\Ex[Y|do( T= 1),X]-\Ex[Y|do( T= 0),X]] \\
 =  \Ex_{X}[\Ex[(\Ex[Y| T = 1,X] - \Ex[Y | T = 0,X]) | X]] \label{eq:string-ignorability}
\end{align}
}
which can again be estimated from the observed data. 
}
\paragraph*{\bf Probabilistic causal model and do - operator}
\revm{Next we discuss probabilistic causal model by Pearl \cite{pearl2009causality} that allows fine-grained causal relationships among possible confounding covariates $\mb X$ themselves based on background knowledge of the users giving guidance on what to condition on in observational causal analysis. } A probabilistic causal model \cite{pearl2009causality} is a tuple
$\cm = \langle \mb U, \mb V , \mb F, \pr_{\mb U} \rangle$, where \revm{
$\mb U$ is a set of \emph{exogenous} variables that cannot be observed,  
$\mb V$ is a set of \emph{observable or endogenous} variables, $\mb F$ is a set of \emph{non-parametric  structural equations}, and $\pr_{\mb U}$
is a joint probability distribution on the exogenous variables
$\mb U$. The set $\mb F = (F_X)_{X \in \mb V}$ 
comprises structural  equations of the form 
$F_X : Dom(\Pa_{\mb V}(X)) \times Dom(\Pa_{\mb U}(X)) \rightarrow
Dom(X)$, where  $\Pa_{\mb U}(X) \subseteq \mb{U}$ and
$\Pa_{\mb V}(X) \subseteq \mb V-\set{X}$ are called the \textit{exogenous
parents} and \textit{endogenous parents} of $X$ respectively.}
Intuitively, the exogenous variables $\mb U$ are not known,
but we know their probability distribution; the endogenous
variables are completely determined by their parents (exogenous and/or
endogenous). 
\cutr{
\sout{In this paper, given users' background knowledge and assumptions about causal models, we use observational data to validate assumptions and quantify causal effects. Background knowledge is {\em fundamentally} required for causal inference \cutr{\cite{pearl2009causal, rubin2005causal}}
}
}
A probabilistic causal model is
associated with a {\em causal graph}, $\cg$, whose nodes are the
endogenous variables $\mb V$, and whose edges are all pairs $(Z,X)$
such that $Z \in \Pa_{\mb V}(X)$. \cutr{We write $Z \rightarrow X$ for an
edge;} 
It is usually assumed that $\cg$ is acyclic which, in that case, is
called a Causal DAG  \revm{(directed acyclic graph)}.  In other words, the causal DAG hides
exogenous variables (since we cannot observe them anyway) and instead
captures their effect by defining a probability distribution
\revm{$\pr_{\mb U}$} on the endogenous variables.  This is possible under the
{\em causal sufficiency} assumption.\footnote{The assumption requires
  that, for any two variables $X, Y \in \mb{V}$, their exogenous
  parents are disjoint and independent
  $\Pa_{\mb U}(X) \indep \Pa_{\mb U}(Y)$.  When this assumption fails,
  one adds more endogenous variables to the model to expose their
  dependencies.}
\revm{We will only refer to endogenous variables in the rest of the paper
and drop the subscript $\mb{V}$ from $\Pa_{\mb{V}}$. 
Similarly,  we will  drop the subscript $\mb U$ from the probability distribution $\pr_{\mb U}$ when it is clear from the context.}
Then the formula for $\pr(\mb V)$ is the same as that
for a Bayesian network:

%

%
\vspace{-0.2cm}
{\footnotesize
\begin{align}
\pr(\mb V) = & \prod_{X \in \mb V} \pr(X | \Pa(X))  \label{eq:bayesian}
\end{align}
}
\begin{figure}[t]
    \centering
    \includegraphics[scale=0.2]{Figures/causal-dag-simple}
    \vspace{-2mm}
    \caption{A standard 
    causal DAG for Example~\ref{eg:instance}. 
    }
    \vspace{-5mm}
    \label{fig:simple-causal-dag}
\end{figure}

\par
Figure~\ref{fig:simple-causal-dag} shows a simple
example of a causal graph \revm{based  on Example~\ref{ex:univdomain:causalmodel}: the Score of a paper is affected by its
Quality and by the Prestige of the author (assuming the reviews are
single blind), whereas both Quality and Prestige are affected by the author's
Qualification. Here $\mb V$ = \{\texttt{Qualification}, \texttt{Quality}, \texttt{Prestige}, \texttt{Score}\} are endogenous variables, $\mb U$ endogenous variables are unknown (e.g., mood of a reviewer while reviewing the paper, the expected number of papers to be accepted, scores of other papers the reviewer reviewed, etc.) leading to a probability distribution on $\mb V$. The  dependencies can be represented by three structural  equations:
{\scriptsize
\begin{equation}\label{eq:se_background}
\text{Quality} \sem   \text{Qualification};~
\text{Prestige} \sem  \text{Qualification};~
\text{Score} \sem  \text{Quality}, \text{Prestige}
\end{equation}
}
\cutr{
{\footnotesize
\begin{eqnarray} 
\text{Quality} &\sem &  \text{Qualification}\nonumber\\
\text{Prestige} &\sem & \text{Qualification}\nonumber\\
\text{Score} &\sem & \text{Quality}, \text{Prestige}\label{eq:se_background}
\end{eqnarray}
}
}
}

\cut{In this paper we will assume that the causal DAG is
known,\footnote{There exists an extensive literature on learning the
  causal DAG from data.} and one uses some observational data in
order to learn the conditional probability distribution
(\ref{eq:bayesian}).}

\cutr{The probabilistic causal model is
associated with a {\em causal graph}, $\cg$, whose nodes are the
endogenous variables $\mb V$, and whose edges are all pairs $(Z,X)$
such that $Z \in \Pa_{\mb V}(X)$.
}


%



\cutr{
\paragraph*{\bf $d$-Separation and  Markov compatibility} 
\sout{A common inference question in a causal DAG is how to determine whether a
CI $(\mb X \indep \mb Y | \mb Z)$ holds.  A sufficient criterion is
given by the notion of d-separation,  a syntactic condition
$(\mb X \indep \mb Y |_d \mb Z)$ that can be checked directly on the
graph.  $\pr$ and $\cg$ are
called {\em Markov compatible} if $(\mb X \indep \mb Y |_d \mb Z)$
implies $(\mb X \indep_\pr \mb Y | \mb Z)$; if the converse
implication holds, then we say that $\pr$ is {\em faithful} to $\cg$.
The following is known: 
}
}
\cutr{
\begin{proposition} \label{prop:d:separation}
\sout{If $\cg$ is a causal DAG and
  $\Pr$ is given by Eq.(\ref{eq:bayesian}), then they are Markov
  compatible.
  }
\end{proposition}
}

\vspace{-0.2cm}
In this causal model, an  intervention represents actively setting an
endogenous variable to some fixed value and observing the effect denoted by the $do$-operator introduced by 
Pearl~\cite{pearl2009causal}.
Formally, an intervention \revm{$\Do(\mb W = \mb w)$} consists of setting
variables \revm{$\mb W \subseteq \mb V$} to some values \revm{$\mb W= \mb w$}, and it defines the
probability distribution \revm{$\pr(\mb V | \Do(\mb W = \mb w))$} given by
(\ref{eq:bayesian}), where we remove all factors $\pr(X | \Pa(X))$,
where $X \in \mb W$.  In other words, we modify the causal DAG by
``removing all edges entering the variables $\mb W$ on which we intervene; this fundamentally differs from conditioning,
\revm{$\pr(\mb V | \mb W = \mb w)$}.  Pearl has an extensive discussion of
the rationale for the $do$-operator and describes several equivalent
formulas for estimating the effect of \revm{$\Do(\mb W = \mb w)$} from \revm{an} observed distribution \cite{pearl2009causality}.


%
%
%
%
\cutr{
\vspace{-0.2cm}
\paragraph*{\bf Average Treatment Effect (ATE)}
\sout{Rubin's Potential Outcome Framework~\cite{rubin1970thesis}
compares the effect of a binary treatment variable $T$ on some response
variable $Y$. This effect is often measured by the following quantity known
as the {\em average treatment effect (ATE)}, which is
expressed as follows in our notation:}
\begin{align}
\ate(Y, T)& = \Ex[Y|do( T= 1)]-\Ex[ Y|do( T= 0)] 
\label{eq:ate}
\end{align}
\sout{Much of the literature on Rubin's Potential Outcome Framework in statistics addresses efficient estimation of $\ate$ from observational data.}
}

%
%
%
%
\paragraph*{\bf Other related work and contributions of this paper over previous work}
\revm{A fundamental limitation of traditional
causal frameworks is that they assume the units of analysis are \emph{homogeneous} and satisfy SUTVA \cite{rubin2005causal}, i.e., there can be only one type of units (like tuples in a single relational table) and the units do not interfere with one other. 
Therefore, these models can not be used for causal inference in relational domains
that involve heterogeneous units with complex causal dependencies and interference.
In Example~\ref{ex:univdomain:causalmodel}, we have heterogeneous entities such as authors, submissions, conference that are relationally connected. To answer causal questions such as what is the effect of author's prestige on their submission score it is even not clear whether the authors or submissions should be regarded as units of analysis. Moreover, selecting authors as the units of analysis would lead to violation of  SUTVA. This is because the review scores associated to one author is not only affected by his or her own prestige but also influenced by  his or her collaborators' prestige. Therefore, situations like multiple authors for the same paper, multiple submissions by the same authors, or multiple reviews received by the same paper cannot be represented in traditional causal inference framework.
Some previous works studied causal inference in the presence of interference \cite{ogburn2017causal,graham2010measuring,halloran2012causal,halloran1995causal,tchetgen2012causal,vanderweele2011bounding,aronow2017,shalizi2011homophily,ogburn2014causal,ogburn2018causal}. These works address applications such as the study of infectious diseases \cite{vanderweele2011bounding, halloran1995causal} or behavior and interactions in social networks \cite{shalizi2011homophily,ogburn2018causal, sobel2006randomized,graham2010measuring, vanderweele2011bounding, maier2013reasoning,hong2006evaluating}. But in these studies the units are still homogeneous (e.g., people connected by a social network), and they are unable to capture different \emph{entities} of interests like papers, authors, reviews and their complex many-to-many relationships that we focus on \sys.
\cutr{The novel contribution of this paper is to bridge the gap between  relational data model and observational causal studies by proposing a simple declarative framework  that enables sound but easier causal analysis for multi-relational  data frequently encountered in practice.
}.
}
There has been prior work on learning causality from relational data \cite{maier2013sound,DBLP:conf/uai/2019,lee2016learning,maier2010learning}; it focuses on discovering the structure of probabilistic graphical models for this data. These models were originally proposed for Statistical Relational Learning, which aims to model a joint probability distribution over relational data amenable for probabilistic reasoning rather than causal inference ~\cite{Getoor:2007:ISR:1296231}. \revd{This line of work differs from our work in that our objective is develop a declarative framework to answer complex causal queries about the effect of interventions, given the existing background knowledge}. 
\cutr{\sout{\sys\ extends existing literature by: (1) relaxing the assumption of homogeneous units with one type of relationship in interference literature to heterogeneous units with different types of relationships, (2) using a declarative language to represent complex causal dependencies such as interference and contagion, that cannot be succinctly expressed using graphical models for relational data and (3) articulating the assumptions and conditions needed to infer causality from relational data.}
}
Note that {\em causality} has been used in various contexts \cite{meliou2014causality}, namely,  to understand responsibility in query answering  \cite{MeliouGMS2011, SalimiTaPP16}, in database repair \cite{meliou2011tracing, DBLP:conf/icdt/SalimiB15}, and to motivate explanations \cite{RoyS14}. It has also inspired different applications such as  hypothetical reasoning \cite{balmin2000hypothetical,lakshmanan2008if, deutch2013caravan, meliou2012tiresias}. These differ from our work in that they identify parts of the input that are correlated with output of a transformation, which is useful but
\revm{does not reflect the true causality needed for decision makings.}
}
\section{\revm{\sys: Declarative Framework}}\label{sec:causal-framework}
In this section, we present our declarative language called \emph{CaRL (Causal Relational Language)} 
that extends causal modeling 
to relational data by allowing the user to (1) specify assumptions and background knowledge on the interactions among heterogenous units (Section \ref{sec:language}), and (2) 
\revm{pose} various causal queries (Section \ref{sec:qlanguage}).
We start with our data model, which forms the basis for our language.

\vspace{-0.2cm}

\subsection{\revm{Data Model: Schema and Instance}}\label{sec:prelim}

\par
\textbf{\revm{Relational causal schema (schema).}~}  
\revm{The input schema for \sys\ corresponds to any standard multi-relational database, e.g., \data\ in Figure~\ref{fig:example_instance},
but we assume the data is given in the following `entity-relationship-attribute' form for a simpler generalization of Pearl's causal models.}
A {\em relational causal schema} is a tuple $\Scm=\Scmdef$, 
\revm{
where $\Pred=\Ent \cup \Rel$ 
\cutr{
$=\{\pred_1, \ldots, \pred_{n}\}$
\sout{is a standard relational schema that }
}
represents a set of \emph{entities} $\Ent$ and their \emph{relationships} $\Rel$, and 
$\FAtt$ \cutr{=$\{\fAtt_1, \ldots, \fAtt_{k}\}$}
represents a set of {\em attribute functions} \revd{(or simply {\em attributes})}
\cutr{
\sout{with fixed arity, domain and range $Range(\fAtt)$}
}
that encode the \cutr{descriptive values }
standard attributes of the entities and their relationships, with the only difference that some of these attributes may be `\emph{unobserved}' with missing values in all instances.} 
\revm{The entities and their relationships are denoted by $\pred(.)$, the attribute functions are denoted by  $\fAtt[.]$, and $\FAttrobs \subseteq \FAtt$ denotes the set of observed attribute functions.
We illustrate the mapping from standard relational model to relational causal schema using our running example\footnote{\revm{For the purpose of this paper, we assume that the input is given in the form of relational causal schema, whereas we envision that in an end-to-end system with a graphical user interface, this mapping will be semi-automatically done with user's input.}}.
}
\vspace*{-1mm}
\begin{example} \em \label{ex:univer_schema} 
The relational causal schema corresponding to the \revm{relational \data} in Figure~\ref{fig:example_instance} (with renames) is:
%
{   
\begin{align*}
\Pred& =\text{Person(A)}, \text{Author}(A, S), \text{Submission}(S),   \text{Submitted}(S, C),  \text{Conference}(C)\\
\FAtt& =\text{Prestige}[A], \text{Qualification}[A], \text{Score}[S],  \text{Blind}[C], \text{Quality}[S]
\end{align*}
} 
\cutr{\noindent The first row shows the relational schema. These consist}
\revm{Here $\Pred$ consists of
entities in the \data: $\Ent = \{\text{People}$, $\text{Submission}$,
$\text{Conference}\}$ and their relationships $\Rel = \{\text{Authors}$,
$\text{Submitted}\}$;}  \revm{The attribute function $\FAtt$
corresponds to the attributes of these entities and relationships:  $\text{Prestige}[A]$ = the prestige
of the author's institution (e.g., rankings);
$\text{Qualification}[A]$ 
= the qualification of an author 
by h-index\footnote{\revm{There can be other measures of qualifications as well, e.g., the number of publications or citations, or the experience in terms of years.}});
$\text{Score}[S] \in [0,1]$ = the average
score reviewers gave to a submission;  $\text{Blind}[C]$ =
whether a conference review policy is single or double blind; 
 $\text{Quality}[S]$ = the quality of a submission. Note that $\text{Quality}$ in $\FAtt$ is missing in the Submissions table 
 in Figure~\ref{fig:example_instance}, since it
is an unobserved attribute function, and will be used in causal  analysis based on our background knowledge that quality of a submission may have an impact on its score. 
}
\end{example}

\par
\textbf{\revm{Observed instance and relational skeleton (instance).}}  
\cutr{Our language represents the causal relationships between attribute functions given a set of entities and their relationships. Therefore, 
}
\revm{Similar to a standard database instance given a standard relational schema (as shown in Figure~\ref{fig:example_instance}), an \emph{observed relational instance} (or simply an \emph{instance}) conforms to a given relational causal schema $\Scm=\Scmdef$ with specific values (i.e., constants), however some (unobserved) attribute functions may be missing in the instance (like `Quality'). The set of (constant or grounded) entities and relationships in an instance (excluding the grounded attribute functions) is referred  to as the \emph{relational skeleton} of the instance and denoted by $\Delta$.}
\begin{example} \em \label{eg:instance}
\revm{For relational causal schema given in Example~\ref{ex:univer_schema} and the instance in Figure~\ref{fig:example_instance}, the relational skeleton comprises entities and relationships like {\tt Person(``Bob''), 
Submission(``s1''), 
Author(``Bob'', ``s1''), 
} 
etc. The observed instance comprises the relational skeleton and the attribute functions like {\tt Score[``s1''], Blind[``ConfDB''], 
} etc., but not unobserved attributes like {\tt Quality[``s1'']}. Note that all observed attribute functions assume a fixed value given any instance.
}
\end{example}

\cutr{
we refer to a  standard relational database instance from schema $\pred$ 
as a {\em relational skeleton}. A relational skeleton $\skl$, together with the values of a subset of observed attribute functions $\FAtt$, is called an {\em observed relational instance}.

Figure~\ref{fig:example_instance} shows a relational skeleton of the
observed instance. }

\vspace{-0.2cm}
\subsection{\revm{Specification of Background Knowledge by Relational Causal Rules}}
\label{sec:language}

\subsubsection{\revm{Relational causal model and rules}} \label{sec:language:rules} The first step of using \sys\ is encoding the  user's background knowledge \revm{about potential causal dependencies among attributes in an application}. 
This is expressed in  \sys\ through a set of \emph{relational causal  rules} (defined below) that capture the causal assumptions. 
\revm{We refer to the set of relational causal rules specified by the user as the \emph{relational causal model}. }
\begin{definition} \em \label{def:\rcr}
A  \emph{relational causal rule} 
over a relational causal schema $\Scm=\Scmdef$ has the following form:
{  
\begin{align}
\fAtt[\mb \bx]  \sem \fAtt_1[\mb \bx_1] , \ldots, \fAtt_k[\mb \bx_k]   \text{ WHERE } Q(\mb Y) 
\label{eq:rse}
\end{align}
}
Here, $\fAtt, \fAtt_1, \cdots, \fAtt_k \in \FAtt$ are \revm{attribute functions}, $Q$ is a \revm{(standard)} conjunctive query over the schema $\bf P$, and $\bf X$, $\bf X_i$ ($i=1,\cdots,k$), $\bf Y$ are sets of variables and/or constants.  All variables  in $\bf X\cup\bigcup_i \bf X_i$ must also occur in $\bf Y$.  We call $\fAtt[\mb \bx]$ the \emph{head} of the \rcr, $\fAtt_1[\mb \bx_1] , \ldots, \fAtt_k[\mb \bx_k]$ the \emph{body} of the \rcr, and $Q(\mb Y)$ the \emph{condition}. We denote by $\rse_\fAtt$ the set of rules with head $\fAtt$.
\end{definition}

\begin{example}\em \label{ex:univer_schema2} \revm{Consider} the following relational causal model $\rcm$ for \data\ in Figure~\ref{fig:example_instance}.
{   
\begin{align} 
\text{Prestige}[A] &\sem \text{Qualification}[A] \text{ WHERE } \text{Person}(A) \label{eq:prestige1}\\
\text{Quality}[S] &\sem \text{Qualification}[A], \text{Prestige}[A]  \text{ WHERE }   \text{Author}(A, S) \label{eq:quality1}\\
\text{Score}[S] &\sem  \text{Prestige}[A] \text{ WHERE }  \text{Author}(A, S), \label{eq:decision1} \\
\text{Score}[S] &\sem \text{Quality}[S] \text{ WHERE } \text{Submission}(S) \label{eq:decision2}
\end{align}
}
Rule (\ref{eq:prestige1}) says that the qualification of a person causally affects his or her institutions' prestige;  rule (\ref{eq:quality1}) says that the quality of a submission is affected by its authors' qualifications and prestige (authors from prestigious institutions have access to more resources);  rules (\ref{eq:decision1}) and (\ref{eq:decision2}) say that reviewers' scores are based on the quality of a submission but may also be influenced by the prestige of its authors. 
\end{example}
\reva{A major advantage of specifying background knowledge using causal rules for the users is that they simply express intuitive potential causal dependence among attributes without mentioning `how' or associating any `weight' to them\footnote{\revb{This fact, along with the declarative nature of the language, makes CaRL more friendly to users who are not causal inference experts.}}, while \sys\ uses them to answer different causal queries (Section~\ref{sec:qlanguage}).
}

\subsubsection{\revm{Grounded rules}}
A \emph{grounded rule} is a rule (\ref{eq:rse}) that contains only constants from a given instance (no variables) and has no condition (i.e., $Q \equiv \texttt{true} $).  
\revm{A relational causal rule} is a template for generating multiple grounded rules.
 
\begin{definition} \em \label{def:semantics} 
Let $\Delta$ be a relational skeleton.  Fix a rule \revm{in the form of}  (\ref{eq:rse}), and let $\mb Z$ denote all variables occurring in $\mb X \cup \mb X_1 \cup \ldots \cup \mb X_k$.  We associate to this rule the set of {\em grounded  rules}  obtained by substituting $\mb Z$ with any \revm{set of} constants $\mb z$  such that $\skl \models Q([\mb Y /  \mb z])$. In other words, the query $Q$ must be true in the database $\Delta$ after substituting the variables $\mb Z$ with the constants $\mb z$ and treating the variables $\mb Y-\mb Z$ as existentially quantified.  
\end{definition}


\subsubsection{\revm{Relational causal graphs}} 
Given a relational causal model $\rcm$ \revm{comprising a set of relational causal rules} and a relational skeleton $\Delta$ \revm{comprising the entities and relationships in an instance,}
\cutr{, we associate both with the set of all grounded rules, denoted 
}
\revm{$\rcm^\skl$ denotes the set of all grounded rules.}  
\cutr{generated by the rules in $\rcm$.
}
From $\rcm^\skl$, we construct the {\em relational causal graph} $\gcg$. \revm{The vertices of $\gcg$} (denoted $\FAtt^{\skl}$) comprise all grounded attributes $\fAtt[\mb x]$ in $\rcm^{\skl}$ \revm{denoted $\fAtt^{\skl}$ -- recall that $\mb x$ represents a tuple of constants, an attribute function $\fAtt$ corresponding to an entity has a single constant parameter as in Example~\ref{eg:instance}, but $\fAtt$ corresponding to a relationship predicate will have multiple parameters. The edges of $\gcg$} are all pairs $(\fAtt[\mb x],\fAtt_j[\mb x_j])$ where $\fAtt[\mb x]$ and $\fAtt_j[\mb x_j]$ appear in the head and body respectively of a grounded rule (\ref{eq:rse}). \revm{We assume that the relational causal model is non-recursive, therefore, the causal graph is a DAG}\footnote{\revm{While our language allows for recursive rules which capture feed-back loops and contagion, their treatment is beyond the scope of the paper and is an interesting future work.}}. 
\begin{figure}
  \centering
  \includegraphics[scale=0.55]{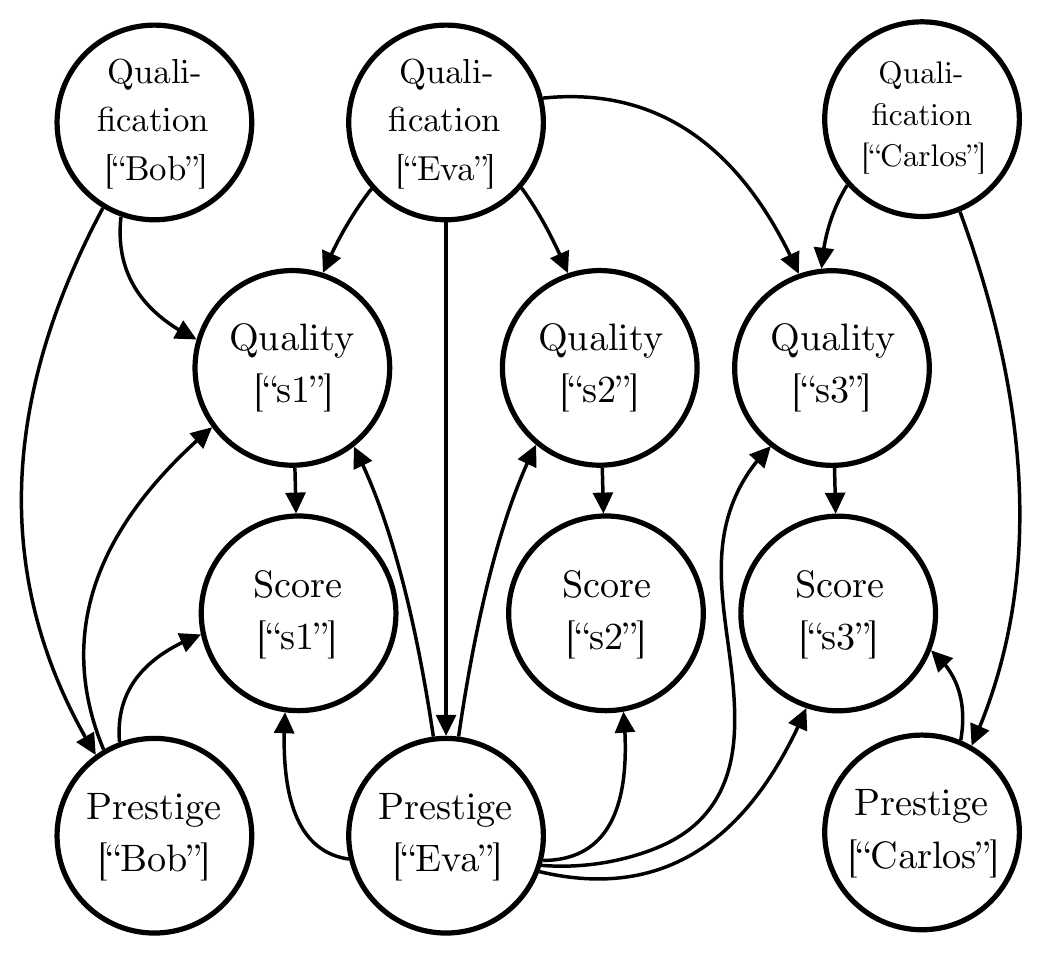}
  \caption{\revm{Relational causal graph} corresponding to the grounded rules in Example~\ref{ex:univer_schema2}.}
  \vspace{-2mm}
  \label{fig:GCD}
\end{figure}

\begin{example} \em
  Given the skeleton $\Delta$ in Figure~\ref{fig:example_instance}, $\rcm$ generates the following grounded rules:
{   
\begin{align} 
\text{Prestige}[``Bob"] &\sem  \text{Qualification}[``Bob"] ~~~~\text{-- \em(also for ``Carlos'', ``Eva'')}\nonumber\\
\text{Quality}[``s1"] &\sem \text{Qualification}[``Bob"],   \text{Qualification}[``Eva"]\nonumber\\
\text{Quality}[``s2"] &\sem \text{Qualification}[``Eva"]\nonumber\\
\text{Quality}[``s3"] &\sem \text{Qualification}[``Carlos"],  \text{Qualification}[``Eva"]\nonumber\\
\text{Score}[``s1"] &\sem \text{Quality}[``s1"], \text{Prestige}[``Bob"],    \text{Prestige}[``Eva"] 
\nonumber\\
\text{Score}[``s2"] &\sem \text{Quality}[``s2"], \text{Prestige}[``Eva"]\nonumber\\
\text{Score}[``s3"] &\sem \text{Quality}[``s3"], \text{Prestige}[``Carlos"],  \text{Prestige}[``Eva"]\label{eq:decision_s2}
\end{align}
}
These in turn lead to the causal graph shown in Figure \ref{fig:GCD}. 
\end{example}
\revm{Note that the relational causal graph in Figure~\ref{fig:GCD} is an extension of the  standard causal DAG (by Pearl's model \cite{PearlBook2000}) shown in Figure~\ref{fig:simple-causal-dag}: the latter describes the potential causal dependence of the attributes whereas the former describes a more fine grained version based on the entities and relationships in the relational data. For example, we do not have a single node $\mathit{Score}$, as in Figure~\ref{fig:simple-causal-dag}, but instead have many nodes $\mathit{Score}["s1"]$, $\mathit{Score}["s2"]$, etc. one for each submission in $\Delta$ in Figure~\ref{fig:GCD}.} As in Section~\ref{sec:background}, the causal graph $\gcg$ defines a joint probability distribution
\begin{equation}
\pr\big(\fAtt[\mb x] \mid \Pa(\fAtt[\mb x]\big) \label{eq:condl_prob}
\end{equation}
with one conditional probability for each grounded attribute $\fAtt[\mb x]$;  we describe these conditional probabilities in Section~\ref{sec:prob-rel-dags}.

\revm{
\subsubsection{Aggregated rules}\label{sec:agg_rules}
Using \sys, one can 
extend the set of attribute functions $\FAtt$ with
new aggregated attribute functions 
using one of the {\em aggregate  rules} of the following forms. For $\fAtt \in \FAtt$ 
{  
\begin{align}
\AGG\_\fAtt[\mb W] &\sem \fAtt[\mb X] \ \text{WHERE} \ Q(\mb Z) 
\end{align} 
}
Here, $\mb Z \supseteq \mb X \cup \mb W$ and $\AGG$ is an aggregate function on $\fAtt$, e.g., AVG (average) and  VAR (variance). The new aggregated attribute functions $AGG\_\fAtt$  are included in the extended attribute functions $\FAtt$ (for simplicity, we use $\FAtt$ for both given and extended attribute functions). 
\cutr{The semantics of aggregated  rules are defined similarly to rules in Section~\ref{sec:causal-framework}, \ie,
}
Similar to relational causal rules, aggregated rules define a set of grounded  rules with corresponding vertices and edges in the relational causal graph $\gcg$. However, instead of a conditional probability distribution, a deterministic  function $\AGG(\Pa({\AGG\_\outc}[\mb w]))$ will be associated with each $\AGG\_\outc[\mb w] \in \AGG\_\outc^{\skl}$.  For example, the following aggregate rule defines the average review score for each author.
{  
\begin{align}
 \text{AVG\_Score}[A]&\sem \text{Score}[S]  \ \text{WHERE} \  \text{Author}(A,S)
\label{eq:evg_grade:course}
\end{align} 
}
\noindent Figure~\ref{fig:agg-GCD} shows the extension of Figure~\ref{fig:GCD} with (\ref{eq:evg_grade:course}). 

 {\scriptsize
\begin{figure}[t]
    \centering
    \includegraphics[scale=0.6]{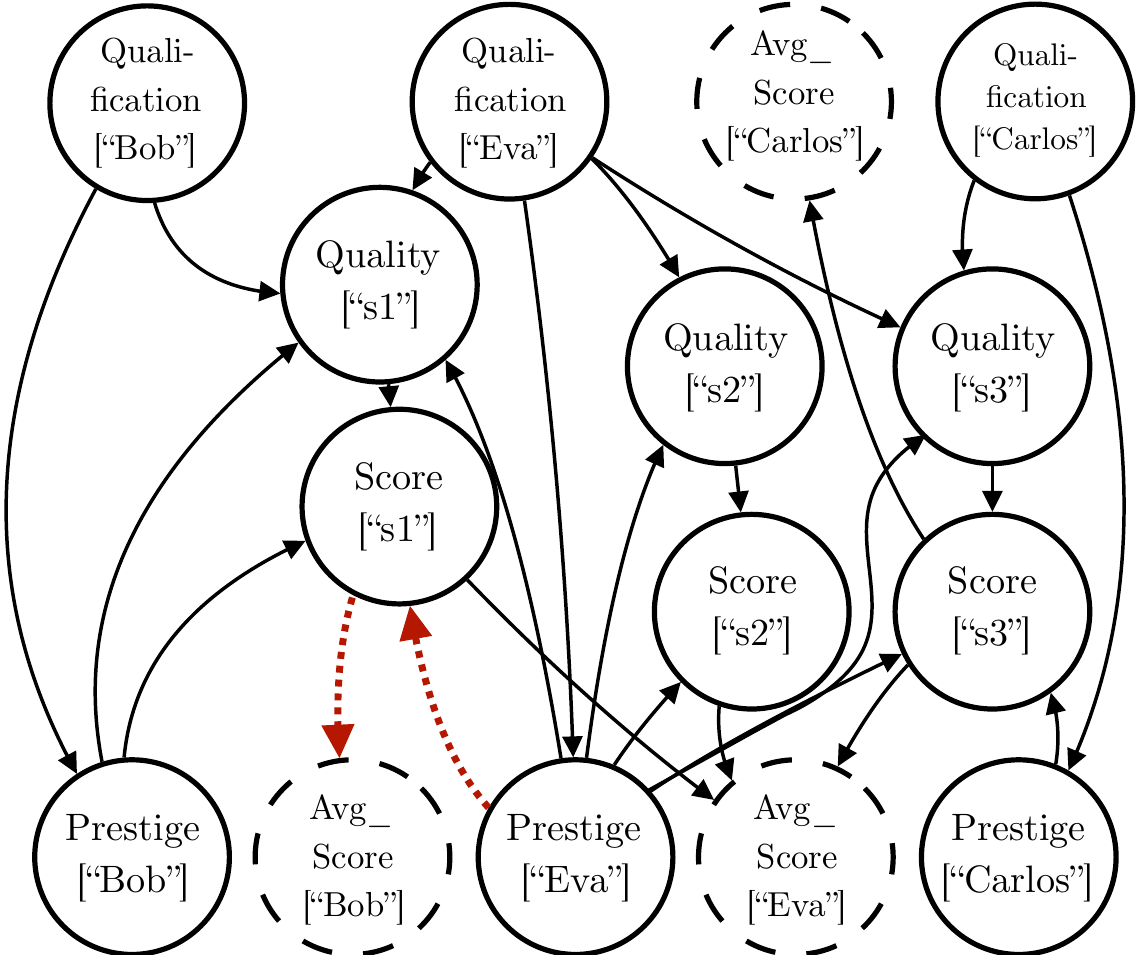}
    \caption{Extended relational causal graph from Figure~\ref{fig:GCD} with aggregated attribute $\text{AVG\_Score}[A]$ \revm{by (\ref{eq:evg_grade:course}). The directed path from relational peer Eva's prestige to average score of Bob is highlighted (Section~\ref{sec:rel-paths-peers}).}}
    \label{fig:agg-GCD}
      \vspace{-2mm}
\end{figure}
}

}

\subsection{\revm{Causal Query Language in \sys}} 
\label{sec:qlanguage}

Once the relational causal model $\rcm$ is \revm{specified}, users can start asking causal queries.  \sys\ supports three types of causal queries of the following form \revm{(their semantics are discussed in Section~\ref{sec:causal-queries} and answering these queries is discussed in Section~\ref{sec:causal-queries:answering})}.
\revm{The \emph{ATE query} extends the notion standard ATE (discussed in Section~\ref{sec:background}) for relational data.  \sys\ also supports queries for aggregated response,
isolated effect and relational effect.}

\cut{
\sout{We describe here their syntax and present their semantics only informally; the formal semantics is based on a non-trivial extension of the $do$-operator that is explained in Section~\ref{sec:causal-query}.
}
}
%
\begin{sloppypar}
  \paragraph{\bf  Average treatment effect (ATE) query.} \revd{An ATE query estimates the average treatment effect (see Section~\ref{sec:background}) of a \emph{treatment attribute} $\treat[\mb X]$ 
  on a response attribute $\outc[\mb X']$ and has the following form:} \revm{(formally defined in Section~\ref{sec:ate})}
\begin{align}
\outc[\mb X'] \sem \treat[\mb X]?\label{eq:simple_ATE}
\end{align} 
This asks ``what is the effect of $\treat$ on $\outc$?''.  
For example, the query $\mathit{Score}[\revm{S}] \sem \mathit{Prestige}[A]?$ computes the ATE of $\mathit{Prestige}$ \revm{of authors} on $\mathit{Score}$ \revm{of a paper}, \ie, it compares papers’ scores in two hypothetical worlds in which all authors are and are not affiliated with prestigious institutions \revm{(the causal effects of `some' authors being from prestigious institutions can be estimated from the relational effects queries described below)}.  
\revd{Following the standard assumption of binary treatments in the causality literature, we require the treatment attribute to be of binary domain, which can be enforced by using a threshold or a predicate on a non-binary domain.}
\end{sloppypar}

\paragraph{\bf Aggregated response query.} \cutr{Alternatively, one can ask for the}
\revm{An aggregated response query allows causal analysis on an aggregated form of the response variable and has the following syntax (formally defined in Section~\ref{sec:agg-response-queries})}:
\begin{align}
\text{\revm{AGG}}\_\outc[\mb X'] \sem \treat[\mb X]?\label{eq:agg_ATE}
\end{align} 
\revm{For example, $\text{AVG\_Score}[\revm{S}] \sem \text{Prestige}[A]?$ computes the treatment effect of the prestige of authors on the average score received by an author.}

\cutr{
single value, namely, the average of all ATEs for all submissions.

\red{SR: DON'T FOLLOW THIS -- IS NOT ATE BY DEFAULT AN AGGREGATE/EXPECTATION? IF IT IS INDIVIDUAL IT IS CATE OR CONDITIONAL ATE. WE SHOULD USE CATE FOR NO. 1 AND ATE FOR THIS ONE NO. 2.}

\babak{ATE is an expectation over all units. Aggregated outcomes define change the units of analysis. These are different.}
}

\paragraph{\bf \revm{Relational, isolated, and overall effects queries}.} In relational domains, units that are relationally connected can have a causal influence on each other. For example, the Prestige of an author not only influences their average submission scores but also their collaborators' average submission scores. To measure such complex relational causal interactions, \revm{\sys\ supports 
queries of the following form that output three quantities: relational, isolated, and overall causal effects
(formally defined in Section~\ref{sec:relational-isolated})}:
{   
\begin{align}
\outc[\mb X'] \sem \treat[\mb X] \ ? \  \mathtt{WHEN} \  \langle \mathit{cnd} \rangle \  \ \mathtt{PEERS} \ \mathtt{TREATED} 
\label{eq:relationalquery}
\end{align}
}
where $\langle \mathit{cnd} \rangle$ is a condition with the following grammar:
{   
\begin{align}
\langle \mathit{cnd} \rangle  \leftarrow  & \langle  \mathtt{LESS}\mid \mathtt{MORE} \rangle \ \mathtt{THAN} \ k\% \mid   \mathtt{AT \ \langle  \mathtt{MOST}\mid \mathtt{LEAST} \rangle \ k} \mid \nonumber \\
 &  \mathtt{EXACTLY \ k} \   \mid \mathtt{ALL} \ \mid \mathtt{NONE} \label{eq:query_grammar}
\end{align} 
}
For example, the query
{   
$\text{Score}[S] \sem \text{Prestige}[A]? \ \mathtt{WHEN} \  \mathtt{ALL} \ \mathtt{PEERS} \ \mathtt{TREATED}$ 
}
\revd{ \noindent computes three values for (i) isolated (an author's prestige), (ii) relational (his/her coauthor's prestige), and (iii) overall (all authors' prestige) effect of prestige on a submission's score.}


\cutr{
\red{SR: THE FOLLOWING SHOULD BE MOVED TO SECTION 4}
\revd{For all the treatment unit, consider the vector $\mathbf{T}[X]$ where each entry $T[X_i]$ is the treatment indicator for the corresponding treatment unit $\mb x_i$.
The isolated effect
measures the extent to which the review scores received by an author are influenced by his or her own prestige $\mathbb{E}[Y_i[X]|do(T[X_i]=1),\mathbf{T}[X]\setminus T_i[X]]-\mathbb{E}[Y_i[X]|do(T_i[X]=0),\mathbf{T}[X]\setminus T_i[X]]$. The relational effect measures the extent to which the review scores of an author are influenced
by the prestige of his or her collaborators $\mathbb{E}[Y_i[X]|do(\mathbf{T}[X]\setminus T_i[X]=\mathbf{1}), T_i[X]]-\mathbb{E}[Y_i[X]|do(\mathbf{T}[X]\setminus T_i[X] = \mathbf{0}),T_i[X]]$.}
}

\section{\revm{Semantics For Relational Causal Analysis}}
\label{sec:semantics}
This section defines semantics of the causal queries described in Section~\ref{sec:qlanguage}. \revm{We fix} a relational causal schema $\Scm $, a relational skeleton $\skl$, and a relational causal model  $\rcm$ with a corresponding grounded causal graph $\gcg$. For an attribute function $\fAtt \in \FAtt$, denote $\xunit$ to be the set of all tuples of grounded entities $\mb x$ such that
$\fAtt[\mb x] \in \FAtt^{\skl}$. For example, $\unit_\text{Prestige}$ consists of all authors, e.g.,  \{``Bob", ``Eva", ``Carlos"\}, whereas
$\unit_{\text{Score}}$ consists of all submissions, e.g., \{\emph{``s1"}, \emph{``s2"}\}.  We refer to each
element $\mb x \in \xunit$ as a {\em unit} of an attribute function $\fAtt$.  


\subsection{\revm{Probability Distribution for \sys 
}}\label{sec:prob-rel-dags}

As discussed in Section~\ref{sec:background}, a causal DAG associates a conditional probability distribution $\pr(X | \Pa(X))$ to each node $X \in \mb V$; these conditional probability distributions are unknown and must be estimated from available data \revm{even for standard causal graphs described in Section~\ref{sec:background}, while there are additional challenges for relational causal graphs.  As described in Section~\ref{sec:language}, in \sys, the relational} causal graph $\gcg$ is obtained by grounding the rules w.r.t. the skeleton database, $\Delta$, and   the number of nodes is not fixed ahead of time but depends on $\Delta$.  
\cutr{
For example, we do not have a single node $\mathit{Score}$, as in Figure~\ref{fig:simple-causal-dag}, but instead have many nodes $\mathit{Score}["s1"]$, $\mathit{Score}["s2"]$, etc. one for each submission in $\Delta$. 
}

\cutr{
We make the reasonable assumption that all nodes that are instances of the same attribute have the same structural equations and thus, the same conditional probabilities.  The assumption is also critical in \sys\ because it makes it possible to estimate the conditional probability distributions from data, and, hence, conduct causal inference.  More precisely, we denote $\fAtt^{\skl} \subseteq \FAtt^{\skl}$ the set of all groundings of an attribute $\fAtt \in \FAtt$ in $\FAtt^{\skl}$, and we 
}

%

\revm{We introduce the following {\em structural homogeneity assumption}, which is critical in \sys\ to estimate the conditional probability distributions from a given observed dataset, and thereby conduct causal inference. Recall that  $\fAtt^{\skl} \subseteq \FAtt^{\skl}$ denotes} the set of all groundings of an attribute $\fAtt \in \FAtt$ in $\FAtt^{\skl}$: 

\begin{itemize}
    \item \textbf{Structural Homogeneity:}  All grounded attributes $\fAtt[\mb x] \in \fAtt^{\skl}$ of the same  attribute $\fAtt \in \FAtt$ share the same structural equation and, hence, the same conditional probability distribution \revm{in equation} (\ref{eq:condl_prob}).
\end{itemize}

For instance, in Example~\ref{ex:univer_schema2}, we assume that all groundings of type $\text{Prestige}$ have the same  structural equations.  

Note that the structural homogeneity assumption concerns only the underlying causal model in relational domains that consist of heterogeneous units; It is fundamentally different from the assumption of homogeneous units made in traditional causal inference (cf. Section~\ref{sec:intro}).

\par
The structural homogeneity assumption, however, is not easily captured
because different groundings of the same attribute can have different number of parents. For instance, consider the atoms $\text{Score}[``s1"]$ and $\text{Score}[``s2"]$ from equation (\ref{eq:decision_s2}).  $\text{Score}[``s1"]$ has two  $\text{Prestige}$ parents (since it has  two authors, \emph{``Eva''} and \emph{``Bob''}), whereas  $\text{Score}[``s2"]$ has one  $\text{Prestige}$ parent (\emph{``Eva."}).  
We address this issue
by using another layer of aggregate functions, \revm{that we call \emph{embeddings}}, $\psi$, and change Equation (\ref{eq:condl_prob}) to
%
\begin{align}
\pr\Big(\fAtt[\mb x] \mid \Emb^{\fAtt}\big(\Pa(\fAtt[\mb x])\big) \Big) \label{eq:cpd}
\end{align}
\noindent where $\Emb^{\fAtt}$ is a collection of mappings that projects the parents of $\fAtt[\mb x]$ into a low-dimensional vector with fixed dimensionality for all $\fAtt[\mb x] \in \fAtt^{\skl}$. Intuitively, we assume that the mappings provide sufficient statistics for evaluating the underling structural questions corresponding to all $\fAtt[\mb x] \in \fAtt^{\skl}$.  More formally, we assume that $\Emb^{\fAtt}$ is known and consists of a set of mappings $\{\emb^{\fAtt}_{\fAtt_1}, \emb^{\fAtt}_{\fAtt_2}, \ldots\}$, one for each type of attribute $\fAtt_j$ occurring on the RHS of a rule (\ref{eq:rse}), where each $\emb^{\fAtt}_{\fAtt_j}$ is an {\em embedding function} that maps the set of values of all parents of type $\fAtt_j$ into a low-dimensional {\em embedding space} with fixed dimensionality.  The embedding function can be \revm{a simple } aggregate \revm{like average}; 
\revm{other types of embeddings are discussed in Section~\ref{sec:embeddings}.}

\cutr{
\babak{Reviewer4 asked for moving the embedding into this section}
}
\begin{example} \em \label{eg:mapping}
  Consider the three nodes of type $\text{Score}$ \revm{for ``$\mathit{s_1}$'', ``$\mathit{s_2}$'', ``$\mathit{s_3}$''} in Figure~\ref{fig:GCD}, and consider their parents of type $\text{Prestige}$.  The number of their parents is 2 \revm{(for ``$\mathit{s_1}$'' -- \textit{``Bob''} and \textit{``Eva''} with vector $\langle 1, 1 \rangle$ for prestige), 1 (for ``$\mathit{s_2}$'' --  \textit{``Eva''} with vector $\langle 1 \rangle$), and 2 (for ``$\mathit{s_3}$'' -- \textit{``Eva''} and \textit{``Carlos''} with vector $\langle 1, 0 \rangle$) respectively (the prestige values of the authors are in Figure~\ref{fig:example_instance}), but under the homogeneity assumption, the conditional probability \cutr{$\pr(\text{Score}| \cdots)$} of scores given prestige of authors would be computed by the same function
  \cutr{.  For that purpose, we first aggregate}
 by using a mapping $\emb^{\text{Score}}_{\text{Prestige}}$ to aggregate the vectors of $\text{Prestige}$ values;} we discuss choices for this aggregate function \revm{in Section~\ref{sec:embeddings}.}
  
\cutr{This results in a single value, which can \sout{be input to the function $\pr(\text{Score}| \cdots)$, the same for all nodes of type $\text{Score}$}
  \revm{compute the conditional probability of score given prestige using the same function for all submissions}.  
 In our running example, the vectors of Prestige values are $\langle 1, 1 \rangle$ for submission $s_1$, $\langle 0 \rangle$ for $s_2$, and $\langle 1, 0 \rangle$ for $s_3$.  The function $\emb^{\mathit{Score}}_{\mathit{Prestige}}$ maps each vector to some value, which is further input to the conditional probability $\pr(\mathit{Score} | \cdot)$.}
\end{example}

To summarize, the grounded causal graph $\gcg$ defined by a relational causal model defines a joint probability distribution given by:
%
%
%
%
\begin{align} \pr(\FAtt^{\skl})&= \prod_{\fAtt[\mb x] \in \FAtt^{\skl} } \pr\Big(\fAtt[\mb x] \mid \Emb^{\fAtt}\big(\Pa(\fAtt[\mb x])\big) \Big) \label{eq:rel-fac} \end{align}

    \revm{In some scenarios, the structural homogeneity assumption may  not hold,  for instance, the structural equations for single-blind and double-blind conferences can be different. Such situations can be expressed in \sys\ by adding multiple rules at {\em different granularities} in which the structural homogeneity assumption is perceived to hold, e.g., 
{\footnotesize
\begin{align}
\text{SBlind\_Score}[S] &\sem \text{Quality}[S] \text{ WHERE } \text{Submission}(S) \nonumber \\
\text{DBlind\_Score}[S] &\sem \text{Quality}[S] \text{ WHERE } \text{Submission}(S) \nonumber
\end{align}
}
}
\cutr{
\subsection{Discussion: Heterogeneous Data}
\red{SR: ADD A SHORT FOOTNOTE IN 4.1}
\babak{we definitely remove this}
\sout{In practice, relational data may consist of data collected from different domains, \eg, different organizations, countries, cities, etc., that could lead to the violation of the Structural homogeneity assumption. For instance, it is reasonable to assume that the structural equations associated with scoring differ across single-blind and double-blind conferences. Such situations can be expressed in \sys\ by postulating  different rules at {\em different granularities} in which the structural homogeneity assumption is perceived to hold, e.g., within the same organization, all double-blind conferences, all conference in computer science, etc. For example, the heterogeneity of scores obtained by papers can be captured using the following  rules:}  
\begin{align}
\mathit{SBlind\_Score}[S] &\sem  \mathit{Prestige}[A] \text{ WHERE } \mathit{Author}(A, S)  \nonumber \\
\mathit{SBlind\_Score}[S] &\sem \mathit{Quality}[S] \text{ WHERE } \mathit{Submission}(S) \nonumber \\
\mathit{DBlind\_Score}[S] &\sem \mathit{Quality}[S] \text{ WHERE } \mathit{Submission}(S) \nonumber
\end{align}
\sout{These rules specify that the generative models for scoring differ in double- and single-blind conferences. In general, for causal inference from observed relational data, 
it is critical to assume that structural homogeneity holds at some granularity level, which we assume in our formalization.}
}

%

%
%

\vspace*{-6mm}
\subsection{Treated and Response Units} \label{sec:treated-response}
\revm{In standard causal analysis, the units can be considered tuples in a single unit table, with one attribute corresponding to the treatment and another attribute corresponding to the response. For instance, in the schema given in Figure~\ref{fig:example_instance} and relational causal graph in Figure~\ref{fig:GCD}, one could analyze the causal effect of qualification of 
authors on their prestige,
and then the `authors' form both the treated and response units. 
In contrast, for multi-relational causal analysis in \sys, when one analyzes the causal effect of prestige of authors on scores of submissions, then intuitively the authors form the treated units and the submissions form the response units. 
Even when authors (or submissions) form both the treated and response units, \sys\ allows inclusion of additional attributes from other relations that are covariates and required for answering causal queries (see Section~\ref{sec:cov-detec}).
Next we formally define these concepts.} 
\par
\revm{
In \revm{relational} causal analysis, we are given 
a \emph{treatment attribute function} $\treat[\mb X] \in \FAtt$ and a \emph{response attribute function} $\outc[\mb X'] \in \FAtt$; The set of \emph{units} $\tunit$ (resp. $\runit$) denotes the entities or relationships  corresponding to the treatment (resp. response) attribute function $\treat$ (resp. $\outc$).
For example, to study the effect of authors' prestige on submission scores, 
$\text{Prestige}[A]$ is the treatment attribute function and $\text{Score}[S]$ is the response attribute function, $\unit_{\text{Prestige}}$ denotes all authors as treated units and $\unit_{\text{Score}}$ denotes all submissions as response units (we assume without loss of generality that the attribute function names are unique and correspond to a single entity or relationship). 
We assume the treatment attribute has binary values whereas the response can be any real number. 
}

\cutr{
In \revm{relational} causal analysis, we are given a pair of attribute functions $\treat[\mb X]$, $\outc[\mb X'] \in \FAtt$. The goal is to estimate the effect of intervening on the \emph{\revm{treatment} attribute function} $\treat$ \revm{(e.g., $\mathit{Prestige}[A]$)} of the {\em treated units}  $\tunit$ on the \emph{\revm{response} attribute function} $\outc$ \revm{(e.g., $\mathit{Score}[S]$}) of the {\em response units} $\runit$. \sout{Hence we call $\treat$ the {\em treatment attribute function} and $\outc$ the {\em response attribute function}. For example, to study the effect of prestige on submission scores, 
$\mathit{Prestige}[A]$ is the treatment and $\mathit{Score}[S]$ is the response attribute function, the treated units are $\unit_{\mathit{Prestige}}$ (all authors), and the response units are $\unit_{\mathit{Score}}$ (all submissions). For exposition, we assume $Range(\treat)=\{0,1\}$ and $Range(\outc)=\mathbb{R}$.  
}
}

\revm{Given} a set of treated units $\tunit=\{\mb x_1, \mb x_2, \ldots \}$ and a binary vector $\trevec=(t_1, t_2, \ldots)$, we are interested in the effect of a set of interventions $\Do(\treat(\mb x_i)=t_i)$ for all treated units $x_i$, where each intervention replaces the NSE associated with $\treat(\mb x_i)$ with a constant $t_i$. In our example of the effect of prestige on score, the vector $\trevec$ corresponds to a particular assignment of prestige to all authors, e.g., the vector $\Vec{1}$ identifies an intervention that {\em hypothetically changes \revm{`all'} authors' affiliations to prestigious ones}. 
By abuse of notation, we denote with $\Do(\treat[\arbtreat]=\trevec_{\arbtreat})$ a set of interventions in which an arbitrary subset of treated units $\arbtreat \subseteq \tunit$ receive $\trevec_{\arbtreat}$ (with an implicit assumption on  the order of elements in the set $\arbtreat$).  \revm{Having treated/response units and the treatment vectors allows us to have (1) \emph{non-uniform units} that may be different entities or relationships, and (2) \emph{different types of treatments}, e.g., forcing all authors to be of prestigious institutions as $\Vec{1}$ vs. one or some of the authors from prestigious institutions as $(1, 0, 0, \cdots)$.}
\sys\ aims to answer 
\cutr{a set of} 
causal queries that compare the {\em average response} of the response units $\runit$ to two alternative intervention strategies $\trevec$ and $\trevec'$ applied to the treated units
$\tunit$, which we discuss next. 


    
    
    
 

\subsection{\revm{Relational Paths and Peers}}\label{sec:rel-paths-peers}
\revm{Before we can formalize the semantics of causal queries described in Section~\ref{sec:qlanguage}, especially for the isolated and relational effects}, we need to establish a one-to-one correspondence between treated and response units by using aggregations carefully. To this end, we first define relational paths. 

\begin{definition} \em \cutr{Given a relational causal schema $\Scm=\Scmdef$ consisting of a set for entities and relations $\Pred=\Ent \cup \Rel$, and a set of attribute functions $\FAtt$ (possibly extended with aggregates as discussed in Section~\ref{sec:agg_rules}),} \revd{ A {\em relational path} is a sequence of entities and relationships of the following form:} 
{       
\begin{align}
    \mc P: \ent_1(X_1) \xleftrightarrow{R_1(X_1,X_2)} \ent_1(X_2) \cdots  \ent_{\ell-1}(X_{\ell-1}) \xleftrightarrow{\rel_{\ell-1}(X_{\ell-1},X_{\ell})} \ent_{\ell}(X_{\ell})
    \label{eq:relationalpath}
\end{align}
}
\noindent where $\ent_i(X_i) \in \Ent$ and $\rel_{i-1}(X_{i-1},X_{i}) \in \Rel$, for $i=1, \cdots, \ell$.

\end{definition}
\noindent

For instance, ${              \text{Conference}(C)  \xleftrightarrow{\text{Submitted}(S,C)} \text{Submission}(S)}$ is a relational path in our example. The treated and response units corresponding to treatment and response attribute functions $\treat$ and $\outc$ are said to be  {\em relationally connected} if there exists a relational path $\mc P$ that includes the entities or relationships \revm{for} $\treat$ and $\outc$ \cutr{describe}
either as the endpoints in the path or as the labels of the edges at the ends of the path.  For example, for $\treat[\mb X] = \text{Prestige[A]}$ and  $\outc[\mb X'] = \text{Score[S]}$, the treatment is an attribute function of the entity $\text{Author}(A)$, the response is an attribute function of the relationship $\mathit{Author(A,S)}$, and the treated and response units are relationally connected by the following relational path:
\begin{align}
{              \text{Author}(A)  \xleftrightarrow{\text{Author}(A,S)} \text{Submission}(S)}
\label{eq:relp:tutor}
\end{align}

In this paper, we make the natural assumption that the treated and response units are relationally connected by at least one relational path \revm{as otherwise the effect of treatment on the response is not meaningful}. \cutr{, as in (\ref{eq:relationalpath}).} These units can then be unified using the aggregated response $\AGG\_\outc[\mb X]$ defined with the following aggregate  rule \revm{(see Section~\ref{sec:agg_rules})} that maps attribute $Y$ 
\revm{of response units $\runit$ to treatment units $\tunit$, where the units can be either entities or relationships}.
%
{     
\begin{align}
\AGG\_\outc[\mb X] &\sem Y[\mb X'] \ \text{WHERE} \ R_1(X_1,X_2), \ldots, R_{\ell-1}(X_{\ell-1}, X_{\ell})   \label{eq:agg_rcr_path}
\end{align} 
}
For example, to unify the treated and response units associated to $\treat[\mb X] = \text{Prestige[A]}$ and  $\outc[\mb X'] = \text{Score[S]}$, the aggregate  rule\footnote{\revm{We aggregate the response and not the treatment since aggregating treatments may lead to interventions that are not well defined.}} associated with the relational path in (\ref{eq:relp:tutor}) coincides with  (\ref{eq:evg_grade:course}): \revm{$ \text{AVG\_Score[A]} \sem \text{Score}[S]  \ \text{WHERE} \  \text{Author}(A,S)$}. 

Therefore, we assume from here on that the response units  $\runit$ are the same as the 
treated unit $\tunit$. Henceforth, we simply refer to elements of $\runit$ and $\tunit$ as {\em units} and denote them with $\uunit=\tunit=\runit$. 
\revm{In our example, after the unification, the $\text{AVG\_Score}[A]$ can be considered as a new attribute function of authors (as in a `view' in relational databases), and the authors form $\uunit$.}

\textbf{Relational Peers.} 
Next, we define the notion of relational peers of a unit, which is central to the notion of relational and isolated effects. Recall that the grounded causal graph $\gcg$ is extended with vertices and edges corresponding to aggregated attributes \revm{as discussed in Section~\ref{sec:agg_rules}}.

\begin{definition} \em Given a treated attribute function $\treat[\mb X]$, and a (possibly aggregated) response attribute function $\outc[\mb X]$, we define the {\em relational peers} of a unit $\mb x \in \uunit$ as a set of units $\nbs(\mb x) \subseteq \uunit-\{\mb x\}$ such that for each $\mb p \in \nbs(\mb x)$, there exists a directed path from 
$\treat[\mb p]$ to $\outc[\mb x]$ in $\gcg$. 
\end{definition}

For example, \revm{in Figure~\ref{fig:agg-GCD}}, treatment and aggregated response functions ${ \text{Prestige[A]}}$ and  ${ \text{AVG\_Score}[A]}$,  $\nbs(``Bob")=\{``Eva"\}$ and $\nbs(``Eva")=\{``Bob",``Carlos"\}$. In practice, the relational causal model  is expected to form relational peers  $\nbs(\mb x)$ that consist only of units that are in some {\em relational proximity} of {\mb x}, e.g., authors from the same institution, same research interests, etc.\footnote{This assumption is far less strict than than the assumption of partial interference, which is standard in statistics, to extend Rubin's causality to handle interference \cite{tchetgen2012causal}. Also note that the assumption of no interference (or SUTVA)~\cite{rubin1970thesis} translates to the statement $\nbs(\mb x) = \emptyset$ for all $\mb x \in \uunit$ in relational causal models.}

The following quantity measures the expected response of a unit $\mb x \in \uunit$ when it receives the treatment $\tre$, and its relational peers receive the vector of treatments $\nbsvec=(t_1, t_2 \ldots)$.
{      
\begin{align}
\outc_{\mb x}(\tre, \nbsvec) &\defeq  \EX[\outc[\mb x] \mid \Do\big(\treat[\mb x]=\tre\big), \Do\big(\treat[\nbs(\mb x)]=\nbsvec\big)] \label{eq:effecttemp}
\end{align} 
}
In this paper, we assume  $\Do\big(\treat[\nbs(\mb x)]=\nbsvec\big)$ is a {\em well-defined intervention} for all units $\mb x$,  \ie, it uniquely determines which relational peers of a unit would receive which treatment. For instance, this holds if $\nbs(\mb x)$ is of the same size for all $\mb x$, and it either has a natural ordering or is ordering-invariant. However, 
we allow several relaxations on the size and type on $\nbsvec$ in our framework \revm{as discussed later}.

\subsection{\revm{Semantics of Causal Queries}}\label{sec:causal-queries}
\revm{In this section, we define the semantics of causal queries outlined in Section~\ref{sec:qlanguage} in terms of intervention; how these causal queries are answered in \sys\ using unification of treated and response units, embeddings, and selection of covariates is discussed in Section~\ref{sec:causal-queries:answering}.}
\subsubsection{\revm{Average treatment effect queries}}\label{sec:ate}
The primary causal query in \sys\ is
 average treatment effect (\ate) query \revm{of the form $\outc[\mb X'] \sem \treat[\mb X]?$ as given in (\ref{eq:simple_ATE})}. \revm{Given treatment and response attribute functions $\treat, \outc$, ATE is defined as follows}: 
 \cutr{
 \red{SR: DON'T WE NEED RELATIONAL PATHS AND EMBEDDING HERE? THE FOLLOWING IS NOT SAYING MUCH.}
 \babak{No we don't. The definition is in terms of interventions. Unification and embedding comes later for query answering. }
 }
{      
\begin{align}
\ate(\treat,\outc) \defeq \sum_{\mb x' \in \runit}\frac{1}{m} (\EX[\outc[\mb x'] \mid \Do(\treat[\tunit]=\Vec{0})]- 
\EX[\outc[\mb x'] \mid \Do(\treat[\tunit]=\Vec{1})]) \label{eq:rel-ate}
\end{align} 
}
Intuitively, \ate\ compares the expected response of the response units in two regimes of intervention: one in which all units receive treatment and another where none do. For example, $\ate(\tsc{Prestige},\tsc{Score})$ compares \revm{scores of submissions}
under two interventions in which all authors are and are not affiliated with prestigious institutions.




\subsubsection{\revm{Aggregated response queries}}\label{sec:agg-response-queries}
\revm{Aggregate response queries of the form $\text{\revm{AGG}}\_\outc[\mb X'] \sem \treat[\mb X]?$ as given in (\ref{eq:agg_ATE}) is defined similar to ATE above, where we replace $Y$ with $\text{AGG}\_Y$ everywhere. Note that in the extended relational causal graphs, there are nodes corresponding to $\text{AGG}\_Y$ as shown in Figure~\ref{fig:agg-GCD}. }

\cutr{
Using \sys, one can 
extend the set of attribute functions $\FAtt$ with
new aggregated attribute functions 
using one of the {\em aggregate  rules} of the following forms. For $\fAtt \in \FAtt$ 
{      
\begin{align}
\AGG\_\fAtt[\mb W] &\sem \fAtt[\mb X] \ \text{WHERE} \ Q(\mb Z) 
\end{align} 
}
Here, $\mb Z \supseteq \mb X \cup \mb W$ and $\AGG$ is an aggregate function on $\fAtt$, e.g., AVG (average) and  VAR (variance). The new aggregated attribute functions $AGG\_\fAtt$  are included in the extended attribute functions $\FAtt$ (for simplicity, we use $\FAtt$ for both given and extended attribute functions). The semantics of aggregated  rules are defined similarly to rules in Section~\ref{sec:causal-framework}, \ie, they define a set of grounded  rules with corresponding vertices and edges in the grounded causal graph $\gcg$, as described in Section~\ref{sec:causal-framework}. However, instead of a conditional probability distribution, a deterministic  function $\AGG(\Pa({\AGG\_\outc}[\mb w]))$ will be associated with each $\AGG\_\outc[\mb w] \in \AGG\_\outc^{\skl}$.  For example, the following aggregate rule defines the average review score for each author.
{      
\begin{align}
 \text{AVG}\_Score[A]&\sem {Score}[S]  \ \text{WHERE} \  \mathit{Author}(A,S)
\label{eq:evg_grade:course}
\end{align} 
}
\noindent Figure~\ref{fig:agg-GCD} shows the grounded causal graph that incorporates (\ref{eq:evg_grade:course}).
}


\cutr{
To formalize isolated and relational effects as described in Section~\ref{sec:qlanguage}, we need to establish a one-to-one correspondence between treated and response units by using aggregations carefully. To this end, we first define relational paths. 



\begin{definition}  Given a relational causal schema $\Scm=\Scmdef$ consisting of a set for entities and relations $\Pred=\Ent \cup \Rel$, and a set of attribute functions $\FAtt$ (possibly extended with aggregates),  a {\em relational path} is a sequence of entities and relationships of the following form: 
{       
\begin{align}
    \mc P: \ent_1(X_1) \xleftrightarrow{R_1(X_1,X_2)} \ent_1(X_2) \cdots  \ent_{\ell-1}(X_{\ell-1}) \xleftrightarrow{\rel_{\ell-1}(X_{\ell-1},X_{\ell})} \ent_{\ell}(X_{\ell})
    \label{eq:relationalpath}
\end{align}
}
\noindent where $\ent_i(X_i) \in \Ent$ and $\rel_{i-1}(X_{i-1},X_{i}) \in \Rel$, for $i=1, \cdots, \ell$.

\end{definition}
\noindent

For instance, ${              \mathit{Conference}(C)  \xleftrightarrow{\mathit{Submitted}(S,C)} \mathit{Submission}(S)}$ is a relational path in our example. The treated and response units corresponding to treatment and response attribute functions $\treat$ and $\outc$ are said to be  {\em relationally connected} if there exists a relational path $\mc P$ that includes the entities or relationships that $\treat$ and $\outc$ describe either as the endpoints in the path or as the labels of the edges at the ends of the path.  For example, for $\treat[\mb X] = \mathit{Prestige[A]}$ and  $\outc[\mb X'] = \mathit{Score[S]}$, the treatment is an attribute function of the entity $\mathit{Author}(A)$, the response is an attribute function of the relationship $\mathit{Author(A,S)}$, and the treated and response units are relationally connected by the following relational path:
\begin{align}
{              \mathit{Author}(A)  \xleftrightarrow{\mathit{Author}(A,S)} \mathit{Submission}(S)}
\label{eq:relp:tutor}
\end{align}
%

\cut{\begin{definition}
In a relational graph $G(\Scm)$ as in Definition~\ref{def:rel-graph}, we define {\em relational connections} between a treatment attribute function $T[X]$  and response attribute function $Y[X']$ as follows:
\begin{enumerate}
\itemsep0em
    \item If both $\treat[X]$ and $\outc[X']$ are attribute functions of two entities $\ent(X)$  and $\ent'(X')$,
    the treated and response units corresponding to $\treat$ and $\outc$ are said to be
   {\em relationally connected}, if there is a path $\mc P$ in the relational graph $G(\Scm)$ that begin and end with either $\ent(X)$  and $\ent'{_{X'}}$.
\item If either $\treat[X]$ or $\outc[X']$ are  attribute functions of a relationship between two entities 
$\ent_1(X_1)$ and $\ent_2( X_2)$, the treated and response units corresponding to $\treat$ and $\outc$  are said to be {\em relationally connected} if there is a path $\mc P$ in $G(\Scm)$ such that $\ent_1(X_1)$ and $\ent_2( X_2)$ appear consecutively at the beginning or end point of $\mc P$.  
\end{enumerate}
\end{definition}
}

In this paper, we make the natural assumption that the treated and response units are relationally connected by at least one relational path, as in (\ref{eq:relationalpath}). These units can then be unified using the aggregated response $\AGG\_\outc[\mb X]$, defined with the following aggregate  rule that maps attribute $Y$ in entity(s) $X'$ to entity(s) $X$:
{      
\begin{align}
\AGG\_\outc[\mb X] &\sem Y[\mb X'] \ \text{WHERE} \ R_1(X_1,X_2), \ldots, R_{\ell-1}(X_{\ell-1}, X_{\ell})   \label{eq:agg_rcr_path}
\end{align} 
}
For example, to unify the treated and response units associated to $\treat[\mb X] = \mathit{Prestige[A]}$ and  $\outc[\mb X'] = \mathit{Score[S]}$, the aggregate  rule associated with the relational path in (\ref{eq:relp:tutor}) coincides with  (\ref{eq:evg_grade:course}). 





Therefore, we assume from here on that the response units  $\runit$ are the same as the 
treated unit $\tunit$. Henceforth, we simply refer to elements of $\runit$ and $\tunit$ as {\em units} and denote them with $\uunit=\tunit=\runit$.

\vspace{0.1cm}
\par
\textbf{Relational Peers.} 
Next, we define the notion of relational peers of a unit, which is central to the notion of relational and isolated effects. Recall that the grounded causal graph $\gcg$ is extended with vertices and edges corresponding to aggregated attributes, as discussed in Section~\ref{sec:agg_rules}.

\begin{definition}  Given a treated attribute function $\treat[\mb X]$, and a (possibly aggregated) response attribute function $\outc[\mb X]$, we define the {\em relational peers} of a unit $\mb x \in \uunit$ as a set of units $\nbs(\mb x) \subseteq \uunit-\{\mb x\}$ such that for each $\mb p \in \nbs(\mb x)$, there exists a directed path from 
$\treat[\mb p]$ to $\outc[\mb x]$ in $\gcg$. 
\end{definition}

For example, for treatment and aggregated response functions ${ \mathit{Prestige[A]}}$ and  ${ \text{AVG}\_Score[A]}$,  $\nbs(``Bob")=\{``Eve"\}$ and $\nbs(``Eve")=\{``Bob",``Carlos"\}$. In practice, the relational causal model  is expected to form relational peers  $\nbs(\mb x)$ that consist only of units that are in some {\em relational proximity} of {\mb x}, e.g., authors from the same institution, same research interests, etc.\footnote{This assumption is far less strict than than the assumption of partial interference, which is standard in statistics, to extend Rubin's causality to handle interference \cite{tchetgen2012causal}. Also note that the assumption of no interference (or SUTVA)~\cite{rubin1970thesis} translates to the statement $\nbs(\mb x) = \emptyset$ for all $\mb x \in \uunit$ in relational causal model s.}

The following quantity measures the expected response of a unit $\mb x \in \uunit$ when it receives the treatment $\tre$, and its relational peers receive the vector of treatments $\nbsvec=(t_1, t_2 \ldots)$.
%
{      
\begin{align}
\outc_{\mb x}(\tre, \nbsvec) &\defeq  \EX[\outc[\mb x] \mid \Do\big(\treat[\mb x]=\tre\big), \Do\big(\treat[\nbs(\mb x)]=\nbsvec\big)] \label{eq:effecttemp}
\end{align} 
}


In this paper, we assume  $\Do\big(\treat[\nbs(\mb x)]=\nbsvec\big)$ is a {\em well-defined intervention} for all units $\mb x$,  \ie, it uniquely determines which relational peers of a unit would receive which treatment. For instance, this holds if $\nbs(\mb x)$ is of the same size for all $\mb x$, and it either has a natural ordering or is ordering-invariant. However, as we discuss after (\ref{eq:query_grammar}), we allow several relaxations on the size and type on $\nbsvec$ in our framework.

\cut{
\begin{itemize}
    \item \textbf{Well-defined intervention assumption on relational peers:}
 Interventions on relational peers are well-defined if (1) $\nbs(\mb x)$ has a natural ordering (see Section~\ref{sec:discussions}),
 or, (2) $\Do\big(\treat[\nbs(\mb x)]=\nbsvec\big)$ is invariant to the size and different ordering of $\nbs(\mb x)$. 
\end{itemize}
For instance, this assumption holds if $\nbs(\mb x)$ has a natural ordering and is of the same size for all units $\mb x$.
}

%
}

\subsubsection{\revm{Relational and isolated effects queries}}
\label{sec:relational-isolated}
The \sys\ query (\ref{eq:relationalquery}) computes the following \revm{three} quantities, which compare the average isolated (\aide), relational (\arlf), and overall (\aoe) effects of two alternative intervention strategies $(\tre,\nbsvec)$ and $(\tre',\nbsvec')$ over $n$ response units\footnote{\revm{We do not need the treatment vectors $\nbsvec, \nbsvec'$ applied to the peers to have the same size although they are applied to all units $\mb x$.} \revm{ We also do not need all units $\mb x$ to have the same number of peers in $\nbs(\mb x)$. As the grammar defined in (\ref{eq:query_grammar}) describes, we can assign treatments to ``at least/most $k$ or $k\%$'' neighbors, and that is well-defined for all units $\mb x$ even if they do not have the exact same number of peers in $\nbs(\mb x)$. On the other hand, for such conditions, we do need to assume that the effects of interventions to the peers are \emph{ordering-invariant}, e.g., the intervention can be applied to any of the $k$ peers (with possible truncations for smaller peer sets) in $\nbs(\mb x)$. }}.
%
%
{     
 \begin{align}
     \aide(\tre; \tre'  \mid \nbsvec) \defeq \frac{1}{n} \sum_{\mb x \in \uunit} \outc_{\mb x}(\tre, \nbsvec)-\outc_{\mb x}(\tre', \nbsvec) \label{eq:aie}
 \end{align}}
%
{      
 \begin{align} \arlf(\nbsvec;\nbsvec' \mid \tre)\defeq  \frac{1}{n} \sum_{\mb x \in \uunit}\outc_{\mb x}(\tre, \nbsvec)-\outc_{\mb x}(\tre, \nbsvec') \label{eq:are}
 \end{align}
}
%
%
{      
 \begin{align}
     \aoe(\tre, \nbsvec ; \tre', \nbsvec') \defeq \frac{1}{n} \sum_{\mb x \in \uunit} \outc_{\mb x}(\tre, \nbsvec)-\outc_{\mb x}(\tre', \nbsvec') \label{eq:aoe}
 \end{align}
}
%
%
Intuitively, the isolated  causal effect of a treatment fixes the treatment of the relational peers of a unit and compares its expected response under two treatment strategies assigned to the unit. On the other hand, the relational causal effect of a treatment fixes the treatment of a unit $\mb x$ and compares its expected response under two treatment strategies assigned to its relational peers. For example, the relational effect of $ \text{Prestige[A]}$ on $\text{AVG\_Score}[A]$ fixes the prestige of an author such as $``Bob"$ and compares the counterfactual response $\text{AVG\_Score}[``Bob"]$ under two regimes of interventions in which the relational peers of $``Bob"$, e.g., $\mathit{``Eva"}$, receive two different treatment strategies, e.g., all of them have prestigious affiliations versus none of them having such affiliations. Note that the overall causal effect is an extension of ATE (\ref{eq:rel-ate}) for two arbitrary treatment strategies. Indeed, ATE coincides with $\aoe(1, \Vec{1} \mid 0, \Vec{0})$ when the treated and response units are unified. 
The following proposition shows the connection between relational, isolated and overall effects \revm{(we omit the proof due to lack of space)}.

\begin{proposition} 
\label{eq:decomposition}
The average overall effect can be decomposed into the average isolated and average relational effects, as follows:
{       
\begin{align}
\aoe(\tre, \nbsvec ; \tre', \nbsvec') & =\aide(\tre,\tre' \mid \nbsvec) + \arlf(\nbsvec,\nbsvec' \mid \tre') \nonumber \\
& =\aide(\tre,\tre' \mid \nbsvec') + \arlf(\nbsvec,\nbsvec' \mid \tre)
\label{eq:rlf}
\end{align} 
}

\vspace{-0.8cm}
\end{proposition}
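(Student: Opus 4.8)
The plan is to prove both identities by directly expanding the definitions (\ref{eq:aie}), (\ref{eq:are}), and (\ref{eq:aoe}) and exploiting the linearity of the average over units $\mb x \in \uunit$. Each of $\aide$, $\arlf$, and $\aoe$ is $\frac{1}{n}\sum_{\mb x \in \uunit}$ of a difference of two values of the potential-response function $\outc_{\mb x}(\cdot,\cdot)$ defined in (\ref{eq:effecttemp}). It therefore suffices to verify each identity \emph{termwise} inside the summation; the common prefactor $\frac{1}{n}$ and the sum over $\mb x$ then carry through verbatim, so no additional assumptions are needed.

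For the first identity, I would add the isolated effect at fixed peer-treatment $\nbsvec$ to the relational effect conditioned on unit-treatment $\tre'$. The contribution of a single unit $\mb x$ is
\[
\big(\outc_{\mb x}(\tre,\nbsvec)-\outc_{\mb x}(\tre',\nbsvec)\big)+\big(\outc_{\mb x}(\tre',\nbsvec)-\outc_{\mb x}(\tre',\nbsvec')\big),
\]
and the intermediate term $\outc_{\mb x}(\tre',\nbsvec)$ occurs once with each sign, cancelling to leave precisely $\outc_{\mb x}(\tre,\nbsvec)-\outc_{\mb x}(\tre',\nbsvec')$, which is the summand of $\aoe$. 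For the second identity, I would instead add the isolated effect at fixed peer-treatment $\nbsvec'$ to the relational effect conditioned on unit-treatment $\tre$; here the cancelling intermediate term is $\outc_{\mb x}(\tre,\nbsvec')$, and one again recovers the $\aoe$ summand.

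Conceptually, the two decompositions correspond to the two monotone routes from the treatment configuration $(\tre',\nbsvec')$ to $(\tre,\nbsvec)$ in the two-coordinate ``own-treatment $\times$ peer-treatment'' lattice: one may change the unit's own treatment first and the peers' treatment second, or the reverse, and both routes accumulate the same overall effect. There is no genuine obstacle beyond careful bookkeeping; the only subtle point is to pin down the correct conditioning values---$\tre'$ versus $\tre$ in the relational term and $\nbsvec$ versus $\nbsvec'$ in the isolated term---so that the intermediate potential-response values match and telescope. An inconsistent choice would leave a residual term such as $\outc_{\mb x}(\tre',\nbsvec)-\outc_{\mb x}(\tre,\nbsvec')$ rather than the clean cancellation, so matching the shared conditioning value is the crux of the (short) argument.
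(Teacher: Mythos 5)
Your proof is correct and is essentially the same argument as the paper's: the paper also adds and subtracts the intermediate potential response $\outc_{\mb x}(\tre', \nbsvec)$ inside the sum so that the summand telescopes into the $\aoe$ summand, yielding the first identity, and handles the second identity symmetrically with the intermediate term $\outc_{\mb x}(\tre, \nbsvec')$. Your termwise bookkeeping and the observation about matching the shared conditioning value are exactly the content of that derivation.
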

\cut{
\begin{proof}
{      
 \begin{align}
     \aoe(\tre, \nbsvec ; \tre', \nbsvec') & = \frac{1}{n} \sum_{\mb x \in \uunit} \outc_{\mb x}(\tre, \nbsvec)-\outc_{\mb x}(\tre', \nbsvec') \nonumber \\
    &= \frac{1}{n} \sum_{\mb x \in \uunit} \outc_{\mb x}(\tre, \nbsvec)- \outc_{\mb x}(\tre', \nbsvec)+ \outc_{\mb x}(\tre', \nbsvec)- \outc_{\mb x}(\tre', \nbsvec') \nonumber \\
    & = \aide(\tre,\tre' \mid \nbsvec) + \arlf(\nbsvec,\nbsvec' \mid \tre')\nonumber
 \end{align}
}
Similarly the second equality in (\ref{eq:rlf}) holds.
\end{proof}
}


\cutr{
\red{SR: THE FOLLOWING HAS BEEN MOVED FROM SECTION 3 -- NEED TO REVISE}
\revd{
\sout{
For all the treatment unit, consider the vector $\mathbf{T}[X]$ where each entry $T[X_i]$ is the treatment indicator for the corresponding treatment unit $\mb x_i$.
The isolated effect
measures the extent to which the review scores received by an author are influenced by his or her own prestige $\mathbb{E}[Y_i[X]|do(T[X_i]=1),\mathbf{T}[X]\setminus T_i[X]]-\mathbb{E}[Y_i[X]|do(T_i[X]=0),\mathbf{T}[X]\setminus T_i[X]]$. The relational effect measures the extent to which the review scores of an author are influenced
by the prestige of his or her collaborators $\mathbb{E}[Y_i[X]|do(\mathbf{T}[X]\setminus T_i[X]=\mathbf{1}), T_i[X]]-\mathbb{E}[Y_i[X]|do(\mathbf{T}[X]\setminus T_i[X] = \mathbf{0}),T_i[X]]$.}
}
}

\cutr{
\subsection{Discussion: Relaxations}
In the definitions of the average isolated, relational, and overall causal effects in (\ref{eq:aie}), (\ref{eq:are}), and (\ref{eq:aoe}), we do not need the treatment vectors $\nbsvec, \nbsvec'$ applied to the peers to have the same size, although they are applied to all units $\mb x$; we also do not need all units $\mb x$ to have the same number of peers in $\nbs(\mb x)$. As the grammar defined in (\ref{eq:query_grammar}) describes, we can assign treatments to ``at least/most $k$ or $k\%$'' neighbors, and that is well-defined for all units $\mb x$ even if they do not have the exact same number of peers in $\nbs(\mb x)$. On the other hand, for such conditions, we do need to assume that the effects of interventions to the peers are \emph{ordering-invariant}, e.g., the intervention can be applied to any of the $k$ peers (with possible truncations for smaller peer sets) in $\nbs(\mb x)$. 
}
\section{Answering Causal Queries}
\label{sec:causal-queries:answering}
\revm{Given the syntax of different causal queries in Section~\ref{sec:qlanguage} and their semantics in Section~\ref{sec:causal-queries}, now we describe how we answer these queries in \sys.}
The query answering component of \sys\ consists of {\em covariate detection} (Section~\ref{sec:cov-detec}) and {\em covariate adjustment} (Section~\ref{sec:cov-adjust}).
The goal of covariate detection is to identify a sufficient set of covariates that should be adjusted for to remove confounding effects.  Then, in the process of covariate adjustment, the data is transformed into a flat, single-table format so that 
\revm{causal inference} can be performed using standard methods. 


\cutr{
 {         
\begin{figure}
    \centering
    \includegraphics[scale=0.55]{Figures/agg.pdf}
    \caption{Extending the relational causal graph in Fig~\ref{fig:GCD} with the aggregated attribute $\mathit{AVG\_Score}[A]$.}
    \label{fig:agg-GCD}
\end{figure}
}
}
\subsection{Covariate detection}
\label{sec:cov-detec}

\revm{Given treatment and response attribute functions $\treat[\mb X]$ and $\outc[\mb X']$, to answer all types of causal queries defined in Section \ref{sec:causal-queries}, we need to estimate the effect of interventions
of the form $\Do\big(\treat[\arbtreat]=\trevec_{\arbtreat}\big)$ on a set of treated units $\arbtreat \subseteq \tunit$, on a response unit $\mb x' \in \runit$. This section proves a graphical criterion to select
a sufficient set of covariates from a relational causal graph $\gcg$ that enable the estimation of quantities of the form $\EX\big[\outc[\mb x'] \mid \Do\big(\treat[\arbtreat]=\trevec_{\arbtreat}\big)\big]$ and thereby the queries in Section \ref{sec:causal-queries}. For this purpose we use the extended relational causal graph as shown in Figure~\ref{fig:agg-GCD} to map possibly varying number of parent nodes to a fixed and smaller dimension by adopting the idea of 
embedding functions introduced in Section~\ref{sec:prob-rel-dags}. We illustrate this with an example below.}

\cut{
\sout{First, we extend a \revm{relational causal graph} $\gcg$ to account for the embedding functions $\Emb^{\fAtt}= \{\emb^{\fAtt}_{\fAtt_1}, \emb^{\fAtt}_{\fAtt_2}, \ldots \}$ for each $\fAtt \in \FAtt$ that maps the parents $\Pa(\fAtt(\mb x))$ of $\fAtt(\mb x)$ in relational causal graph $\gcg$ into a low-dimensional embedding space as given in (\ref{eq:cpd}). For instance, in \revm{Example~\ref{eg:mapping}},
$\emb^{\mathit{Score}}_{\mathit{Prestige}}$ embeds the $\mathit{Prestige}$ of the (possibly varying number of) authors of submissions into vectors with a fixed dimension for all submissions. The motivation is to enable the search for a low-dimensional embedding representation of covariates, \revm{to make covariate adjustment feasible using observed data} (see Section~\ref{sec:cov-adjust}). To this end, with abuse of notation, we extend the set of attribute functions $\FAtt$ with a collection of {\em relational embedding} attribute functions $\Hidb^{\fAtt}=\{\Hid^{\fAtt}_{\fAtt_1}[\mb X],  \Hid^{\fAtt}_{\fAtt_2}[\mb X], \ldots\}$
for each $\fAtt \in \FAtt$, such that for each $\mb x \in \xunit$, $\Hid^{\fAtt}_{\fAtt_i}[\mb x]$ corresponds to the result of applying the mapping $\emb^{\fAtt}_{\fAtt_i}$ to a subset of
$\Pa(\fAtt(\mb x))$ consisting of the grounding of $\fAtt_i$. Then, we augment a \revm{relational causal graph} $\gcg$ by inserting the grounding of the relational embeddings as intermediate vertices between a ground atom and its parents.
}
}
\begin{figure}[t]
    \centering
\includegraphics[scale=0.55]{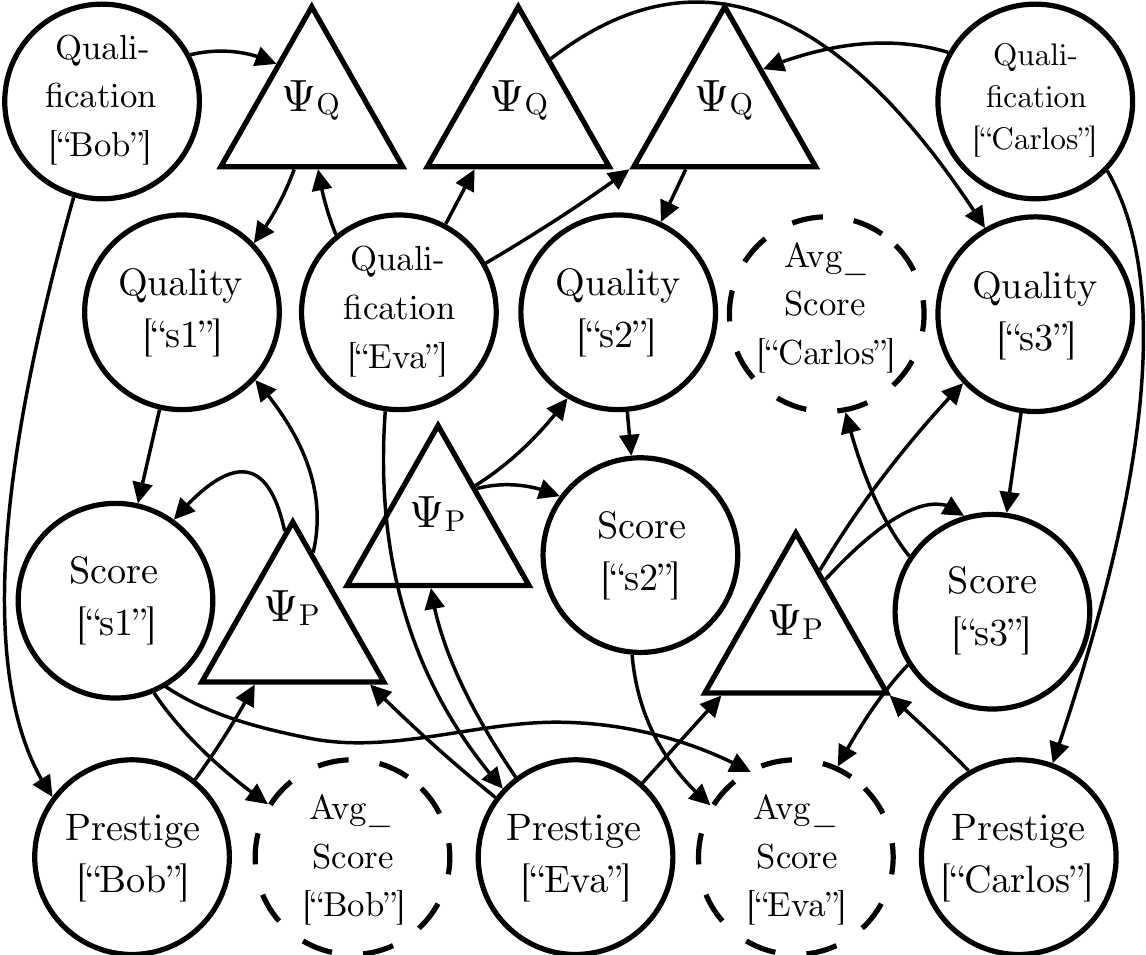}
    \caption{\revm{Final relational causal graph obtained by  (further) augmenting the graph in Figure~\ref{fig:agg-GCD} with embedding functions.} For clarity,  $\Hid^{\mathit{Quality}}_{\mathit{Qualifications}}[S]$ is represented as $\Hid_{Q}$, 
and  $\Hid^{\mathit{Quality}}_{\mathit{Prestige}}[S]$, $\Hid^{\mathit{Score}}_{\mathit{Prestige}}[S]$ 
as $\Hid_{P}$. 
}.
    \label{fig:aug-GCD}
\end{figure}

\begin{example}\label{eg:map-to-attribute}
\revm{In Example~\ref{eg:mapping},
{$\emb^{\mathit{Score}}_{\text{Prestige}}(S)$} 
now corresponds to a new attribute of a submission that maps the $\text{Prestige}$ attribute of {\em Author}s of that submission.
Figure~\ref{fig:aug-GCD} shows the relational causal graph with augmented attributes computed using the mapping functions or embeddings represented by the triangles.}
\end{example}

\par





\revm{The following theorem formalizes how the do-operator for relational causal graph can be estimated from observed data. This theorem uses the concept of \emph{d-separation} from conditional independence in graphical models \cite{PearlBook2000}, denoted by $X \indep Y |_G Z$. The review of these concepts and the proof of the theorem is deferred to the full version of the paper due to lack of space.}
\revm{
\begin{theorem}[Relational Adjustment Formula]  
\label{theore:reladj}
Given an augmented \revm{relational causal graph} $\gcg$, treatment and response attribute functions $T, Y$, and a set of treatment units (entities or relationships) $\arbtreat$ and their treatment assignment vector $\trevec_{\arbtreat}$, we have the following {\em relational adjustment formula}:
\cutr{the quantity  {        $\Ex\big[\outc[\mb x'] \mid \Do\big(\treat[\arbtreat]=\trevec_{\arbtreat}\big)\big]$} the following {\em relational adjustment formula holds}:
}
{        
\begin{align} 
\EX\big[\outc[\mb x'] \mid \Do\big(\treat[\arbtreat]=\trevec_{\arbtreat}\big)\big]& = 
 &  \hspace{-3cm}  \sum_{\mb z \in \Dom(\mb Z)} \
 \EX\big[\outc[\mb x'] \mid \mb Z= \mb z,\treat[\arbtreat']=\trevec_{\arbtreat'} \big] \ \
 \pr(\mb Z=\mb z)
   \label{eq:relational:adjustment}
\end{align}
}
\noindent where $\arbtreat' \subseteq \arbtreat$
is such that, for each unit $\mb x \in \arbtreat'$, there exists a directed path from the node $\treat[\mb x]$ to the node $\outc[\mb x']$ in $\gcg$, and $\mb Z$ is set of nodes in $\gcg$ corresponding to the groundings of a subset of observed attribute functions $\FAttrobs$ such that:
{       
\begin{align} 
\outc[\mb x'] \ \ \ \indep \ \ \ \Big(\bigcup_{\mb x\in \arbtreat}  \Pa\big(\treat[\mb x]) \Big) \ \ \bigr\vert_{\gcg} \ \ \ \Big(\bigcup_{\mb x\in \arbtreat} \treat[\mb x], \ \mb Z\Big) 
\label{eq:raf_idep}
\end{align}    
}
 Further, choosing $Z$ to be the parent nodes of $\arbtreat'$ in  $\gcg$ always satisfies (\ref{eq:raf_idep}) as a sufficient condition.
\label{eq:minimal:indep}
\end{theorem}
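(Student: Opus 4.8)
The plan rests on a single structural observation: since the relational causal model is non-recursive, $\gcg$ is an ordinary finite causal DAG whose nodes are the grounded attributes and embedding nodes, and the joint law~(\ref{eq:rel-fac}) is exactly the Bayesian-network factorization of that DAG. Hence $d$-separation is sound with respect to~(\ref{eq:rel-fac}) and the truncated-factorization semantics of $\Do(\cdot)$ applies unchanged, so the whole theorem is an instance of Pearl's adjustment theory on this particular graph. I would organize the proof into three steps: discarding the treated units that are irrelevant to $\outc[\mb x']$, proving the parent set is admissible (the final ``sufficient condition'' claim), and upgrading from the parents to an arbitrary $\mb Z$ satisfying~(\ref{eq:raf_idep}).

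Step one is the reduction from $\arbtreat$ to $\arbtreat'$. I would write the post-intervention law $\pr(\cdot\mid\Do(\treat[\arbtreat]=\trevec_{\arbtreat}))$ as the product in~(\ref{eq:rel-fac}) with every factor $\pr(\treat[\mb x]\mid\Pa(\treat[\mb x]))$ for $\mb x\in\arbtreat$ deleted and the remaining occurrences of $\treat[\mb x]$ frozen at $t_{\mb x}$, then marginalize down to $\outc[\mb x']$. Summing out variables in reverse topological order, every factor attached to a node that is not an ancestor of $\outc[\mb x']$ collapses to $1$; the only frozen treatment values surviving in the remaining factors belong to nodes on a directed path into $\outc[\mb x']$, i.e. precisely the units of $\arbtreat'$. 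This yields $\EX[\outc[\mb x']\mid\Do(\treat[\arbtreat]=\trevec_{\arbtreat})]=\EX[\outc[\mb x']\mid\Do(\treat[\arbtreat']=\trevec_{\arbtreat'})]$ (equivalently, Rule~3 of the $do$-calculus applied to the non-ancestor actions), so from here on only $\arbtreat'$ matters.

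Step two proves the final sentence of the theorem, that $\mb Z=\bigcup_{\mb x\in\arbtreat'}\Pa(\treat[\mb x])$ is admissible. Again via the truncated factorization, I would collect the intervened treatment nodes against their parents and establish the classical parent-adjustment identity $\EX[\outc[\mb x']\mid\Do(\treat[\arbtreat']=\trevec_{\arbtreat'})]=\sum_{\mb z}\EX[\outc[\mb x']\mid\treat[\arbtreat']=\trevec_{\arbtreat'},\mb Z=\mb z]\,\pr(\mb Z=\mb z)$, which is exactly~(\ref{eq:relational:adjustment}) for this $\mb Z$; this needs the mild assumption that no treatment node is an ancestor of another (true whenever the rules impose no peer effect on the treatment attribute itself). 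Step three then upgrades to a general $\mb Z$: condition~(\ref{eq:raf_idep}) states that, given the treatments and $\mb Z$, the response is $d$-separated from the parents $\bigcup_{\mb x}\Pa(\treat[\mb x])$, which are exactly the entry points of the backdoor paths into the treatments; by soundness of $d$-separation this is a genuine conditional independence, which lets me substitute $\mb Z$ for the full parent set inside the adjustment sum and recover~(\ref{eq:relational:adjustment}). The embeddings enter only here: they are the fixed-dimensional nodes through which the varying-arity parent sets are funneled, so that one $\mb Z$ blocks the backdoor paths uniformly across all groundings.

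The step I expect to be most delicate is reconciling the $\arbtreat$ written inside~(\ref{eq:raf_idep}) with the reduced set $\arbtreat'$ used everywhere else. The criterion must be verified after the reduction: if one conditions on a treatment node $\treat[\mb x]$ with $\mb x\in\arbtreat\setminus\arbtreat'$, that node is a collider on paths of the form $W\to\treat[\mb x]\leftarrow V$, and conditioning can \emph{open} a backdoor path $W\to\treat[\mb x]\leftarrow V\to\cdots\to\outc[\mb x']$ even though $\treat[\mb x]$ itself never reaches $\outc[\mb x']$. Handling this requires either restricting~(\ref{eq:raf_idep}) to $\arbtreat'$ (after which $\Pa(\arbtreat')$ sits inside the conditioning set and the required $d$-separation is immediate, recovering Step two as the base case) or an explicit enumeration of collider/non-collider status along every such path to show it remains blocked given $\big(\bigcup_{\mb x\in\arbtreat}\treat[\mb x]\big)\cup\mb Z$. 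This collider bookkeeping, rather than any of the algebraic manipulations, is the real graph-theoretic obstacle.
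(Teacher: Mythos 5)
Your proposal follows essentially the same route as the paper's own proof: both use the truncated factorization of the grounded DAG (Eq.~(\ref{eq:rel-fac})) to write the post-intervention law, establish adjustment for the treatment parents $\bigcup_{\mb x}\Pa(\treat[\mb x])$ as the base case via the chain-rule factorization, and then substitute a general $\mb Z$ using the conditional independence~(\ref{eq:raf_idep}). The only substantive differences are points of care rather than of method: your Step one (reducing $\arbtreat$ to $\arbtreat'$ by marginalizing non-ancestors) and your collider caveat make explicit two issues the paper's sketch leaves implicit, while the paper's derivation additionally invokes the local Markov statement $\treat[\mb x]\indep \mb Z,\bigcup_{\mb x\in\arbtreat}\Pa(\treat[\mb x])\mid\Pa(\treat[\mb x])$ to collapse $\pr(\mb Z\mid\cdot)$ to the marginal $\pr(\mb Z)$ --- a step your Step three should also state explicitly, since it is not implied by~(\ref{eq:raf_idep}) alone when $\mb Z$ may contain descendants of the treatments.
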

}
\vspace{-0.25cm}

\cutr{
\red{SR: directed path is not clear -- what happens if it is a tuple?}

\babak{Both treatment and outcome are atomic nodes.}
}



\ignore{
\revm{\sout{The following observation immediately follows:}}

\begin{observation}\label{obs:parents-condn}
\sout{The $d$-separation (independence) condition (\ref{eq:raf_idep}) is always satisfied for $\mb Z=\bigcup_{\mb x\in \arbtreat}  \Pa\big(\treat[\mb x])$ (i.e., any subset of vertices $d$-separates itself from other vertices).
Hence, adjusting for the joint distribution of the parents of the treated atoms is {\em always} sufficient \revm{for estimation of $\Ex\big[\outc[\mb x'] \mid \Do\big(\treat[\arbtreat]=\trevec_{\arbtreat}\big)\big]$. However, one can use the condition (\ref{eq:raf_idep}) to select a {\em minimal set of observed covariates} for adjustment.}}
\red{SR: This cannot be parsed without a lot of background and details on d-separation}
\end{observation}
}
\revm{(Intuitively, it is always sufficient to condition for the `parents' of treated units as they separate them from the rest of the graph ensuring independence.) }
Here we illustrate with an example.

\begin{example} 
\label{ex:prestigue:identification}
{\em
To compute {        $\ate(\mathit{Prestige},\mathit{Score})$} in our 
example, we need to compute expectations of the form 
{        
\begin{eqnarray}
\revm{\Ex\big[\text{Score}[s] \mid \Do\big(\text{Prestige}[\{``Bob", ``Eva", ``Carlos"\}]=\Vec{t}\big)\big] \  \text{for} \ \Vec{t} \in \{\Vec{0},\Vec{1}\}}  \label{eq:inter}
\end{eqnarray}
}
\noindent where we intervene on all three authors in the example. 
\revm{ By applying Theorem~\ref{theore:reladj} 
 for submission $s=\mathit{``s_1"}$, note that directed paths to $\text{Score}[``s_1"]$ exists only from $\mathit{Prestige[``Bob"]}$ and $\text{Prestige}[``Eva"]$, which form the subset $\arbtreat'$. Further, it is sufficient to condition on the parents of these two Prestige nodes, i.e., $\mb Z=${\small{$ \{\text{Qualifications}[``Bob"], \text{Qualifications}[``Eva"] \}$}}. Therefore,
 (\ref{eq:inter}) reduces to:}
%
{         
\begin{align}
\sum_{\mb z \in \Dom(\mb Z)} 
 \revm{\Ex\big[\text{Score}[``s_1"] \mid \mb Z= \mb z,  \text{Prestige}[\{``Bob", ``Eva"\}]=\Vec{t} \big)\big] \ \pr(\mb Z=\mb z)} \label{eq:s1}
\end{align}
}
\noindent Similarly, for $s=\mathit{``s_2"}$ and $\mb Z=$ {          $ \{\text{Qualifications}[``Eva"]\}$}, we obtain 
{         
\begin{align}
 \sum_{\mb z \in \Dom(\mb Z)} \
  \revm{\Ex\big[\text{Score}[``s_2"] \mid \mb Z= \mb z, \text{Prestige}[\{``Eva"\}]=t \big)\big] \
 \pr(\mb Z=\mb z) } \label{eq:s2}
\end{align}
}

\revm{Note that the relational adjustment formula in (\ref{eq:relational:adjustment}) {\em controls} for an adequate set of
covariates  $Z$ that \emph{confound} the causal effect of a treatment on an outcome ($Z$ is called the set of confounding covariates or covariates). 
For example, the causal effect of  $\text{Prestige}$ on $\text{Score}$
is confounded by $\text{Qualifications}$. This is because, qualified researchers are likely to belong to prestigious universities
and qualified researchers are more likely to submit high quality papers. Therefore, to compute the $\ate$ of  $\text{Prestige}$ on $\text{Score}$ we need to control for author's qualifications as in (\ref{eq:s1})}.
For estimating {        $\ate(\text{Quality},\text{Score})$} (assuming quality is observed) by applying (\ref{eq:raf_idep}) we find that for each submission $s$, {        $\EX\big[\text{Score}[s] \mid \Do\big(\text{Prestige}[\tunit]=\Vec{1}\big)\big]$} can be estimated by adjusting for the embedded attribute functions $\mb Z$ {          $= \{ \Hid^{\text{Score}}_{\text{Prestige}}[s], \Hid^{\text{Score}}_{\text{Qualifications}}[s] \}$}.
}
\end{example}

\begin{sloppypar}
To estimate {        $\ate(\text{Quality},\text{AVG\_Score})$} (the effect on average acceptance rate of an author), on the other hand, we need to estimate {        $\pr\Big(\text{AVG\_Score}[A]=y \mid \Do\big(\text{Prestige}[\tunit]=\Vec{1}\big)\Big)$}, for each author. According to \revd{Equation} (\ref{eq:raf_idep}), this can be done by adjusting for the joint distribution of the qualifications of \revm{{\em all} their \cutr{past}
coauthors, which is  potentially very high-dimensional, and therefore we need another round of embeddings to aggregate that information as discussed next.}
\end{sloppypar}





\begin{algorithm}[t]        
	\DontPrintSemicolon
	\KwIn{ An augmented \revm{relational causal graph} $\gcg$, treated and outcome attribute functions $\treat[\mb X]$ and $\outc[\mb X']$.
}   
	\KwOut{ The unit table $\utable(Y,\mb \emb_{T},\mb \emb_{\mb Z})$.}
		\For{$\mb x' \in \runit$}{
		$\tunit' \gets$ \text{A minimal subset of $\tunit$ such that there exits a } \text{ directed path in $\gcg$ from $\treat[\mb x]$ to $\outc[\mb x']$ for all $\mb x \in  \tunit'$} \\
		$\mb Z \gets \text{A minimal set of vertices in $\gcg$}$ \text{that satisfies the $d$-separation statement in Eq~(\ref{eq:raf_idep})} \\
	 $\emb_{\treat} \gets \emb^{\outc}_{\treat}(\langle \treat[\mb x_1], \ldots,  \treat[\mb x_{|\tunit'|}] \rangle)$\\ 
	 	 $\Emb_{\mb Z} \gets \Emb^{\mb \outc}_{Z}(\mb Z)$  \\
	Insert the tuple $(Y[\mb x],\mb \emb_{T}[\mb x],\mb \emb_{\mb Z}[\mb x]))$ to unit table $\mb D$ 
		}    
	\caption{        Constructing a unit table.} \label{algo:unit-table}
\end{algorithm}

\subsection{Covariate adjustment}
\label{sec:cov-adjust}
\revm{There are two challenges in estimating the causal queries in Section~\ref{sec:causal-queries} using the relational adjustment formula (\ref{eq:relational:adjustment}): (1) when the set of confounding covariates $\mb Z$ has high dimensionality, estimating the conditional expectation in (\ref{eq:relational:adjustment}) from data is challenging, and (2) the causal queries need to compute {\em averages} across all response units. Hence, we need to estimate the formula (\ref{eq:relational:adjustment}) separately for each response unit that is not feasible. For instance, in Example~\ref{ex:prestigue:identification}, (\ref{eq:s1}) and (\ref{eq:s2}) need to be estimated separately.
}

\par

\revm{To address these issues, similar to Section~\ref{sec:prob-rel-dags}, we use a set of embedding functions $\emb^{\outc}_{\treat}$ and $\Emb^{\outc}_{\mb Z}$ to project the treatment and covariate vectors, respectively, into a low-dimensional embedding space with {\em fixed dimensionality} and for all response units. \cut{Note that the homogeneity assumption in Section \ref{sec:causal-framework} implies that the embedding representations of the treatment and covariate vectors \reva{have the same distribution}.} This enables  us to transform a (multi-) relational instance to a single low-dimensional flat table.}

\subsubsection{Unit table} \revm{In the classical causal inference framework model discussed in Section~\ref{sec:background}, the units of interest are stored in a single unit table with attributes corresponding to the treatment, response, and confounding covariates as the columns. Here we generalize this concept to capture units in relational causal analysis.}
\par
Given a \revm{relational causal graph} $\gcg$ and  treatment and response attribute functions $\treat[\mb X]$ and $\outc[\mb X']$, we use Algorithm~\ref{algo:unit-table} to construct a {\em unit table}, which is a standard relation (table) with schema $\utable(\outc,\mb \emb^{\outc}_{\treat}, \Emb^{\outc}_{\mb Z})$ \revm{(note that $\Emb^{\outc}_{\mb Z}$ denotes a vector of values for possible multiple covariates $\mb Z$)}. It consists of tuples $(\outc[\mb x'],\emb^{\outc}_{\treat}[\mb x'], \Emb^{\outc}_{\mb Z}[\mb x'])$ for each response unit $\mb x' \in \runit$, where $\emb^{\outc}_{\treat}[\mb X']$ and  $\Emb^{\outc}_{\mb Z}[\mb X']$ (with abuse of notation) are relational embedded attribute functions that correspond to the result of applying $\emb^{\outc}_{\treat}$ and $ \Emb^{\outc}_{\mb Z}$ to the treatment and covariate vectors respectively.
%

\begin{table}
\centering          
\begin{tabular}{|c|c|p{3.0cm}|p{3.0cm}|p{3.0cm}|}
    \hline
             \multicolumn{1}{|c|}{\bfseries Unit} &  
             \multicolumn{1}{|c|}{\bfseries Outcome ($\outc$)} &  
     \multicolumn{1}{|p{2.5cm}|}{\bfseries Embedded coauthors' treatments ($\emb^{\outc}_{\treat}$)} &  
    \multicolumn{2}{|p{4cm}|}{\bfseries Embedded Collaborators' Covariates ($\Emb^{\outc}_{\mb Z}$)}\\  \hline
      Author ID&     {\normalfont \text{AVG}}\_Score  & Prestige (AVG) & Centrality (COUNT) &  H-index (AVG)   \\ \hline 
            Bob&    0.75  & 1 & 1 & 2 \\  \hline
                        Carlos &  0.1  & 1 & 1 & 2 \\  \hline
                                    Eva & 0.41 & 0.5 & 2 & 35 \\  \hline
\end{tabular}
\caption{The unit table for $\treat[\mb X] = \mathit{Prestige[A]}$ and  $\outc[\mb X'] = \text{AVG}\_Score[A]$ based on Figure~\ref{fig:example_instance}. } 
 \label{fig:unittable:author}
\end{table}

\begin{example} {\em Table~\ref{fig:unittable:author} shows the unit table corresponding to  $\treat[\mb X] = \text{Prestige[A]}$  and  $\outc[\mb X']$ =  $\text{AVG\_Score}[A]$. \revm{Here {\em Authors} constitute the response units and the aggregated response is an attribute of authors}.  In this table simple mappings such as average and count are used for embedding.  \revm{Note that Table~\ref{fig:unittable:author} also serves as the unit table for $\treat[\mb X] = \text{Prestige[A]}$  and  $\outc[\mb X']$ =  $\text{Score}[A]$. In this case since the treated and response units are different \sys\ uses the aggregated response $\text{AVG\_Score}[A]$ for unification (see Section~\ref{sec:rel-paths-peers}).}
}
\end{example}
By rewriting the RHS of the relational adjustment formula (\ref{eq:relational:adjustment}) in terms of the  attributes of the unit table and $\trevec_{\arbtreat'}^e$ the embedded representation of the treatment assignment $\trevec_{\arbtreat'}$, (\ie, $\trevec_{\arbtreat'}^e=\emb^{\outc}_{T}(\trevec_{\arbtreat'})$), we obtain
{        
\begin{align} 
\sum_{\mb z \in \Dom(\mb \Emb^{\outc}_{\mb Z})} \
 \revm{\Ex\big[Y \mid  \Emb^{\outc}_{\mb Z}= \mb z,\emb^{\outc}_{\treat}= \trevec_{\arbtreat'}^e \big] \
 \pr(\mb \Emb^{\outc}_{\mb Z}=\mb z)}
   \label{eq:relational:adjustment:embeded}
\end{align}
}
\revb{Once we have a flat unit table with columns for treatment, response, and covariates as in Section~\ref{sec:background}, the causal queries in Section~\ref{sec:causal-queries} can be estimated using 
(\ref{eq:relational:adjustment:embeded}) by applying the standard approaches to causal analysis like regression \cite{angrist2008mostly}
(the conditional expectation in (\ref{eq:relational:adjustment:embeded}) is a regression function) or matching methods \cite{ho2007matching, Rosenbaum93, iacus2009cem} (matching treatment and control units with the same/similar values).} \reva{The validity of treatment effect estimates is conditional on the assumption that the background knowledge is accurate.}


\subsubsection{\revd{Choice of embedding functions}}\label{sec:embeddings} Embedding as a technique addresses both the issues of the high dimensionality and the variable size of the treatments and covariates that correspond to the response units, thereby making the estimation of causal queries more convenient. However, (\ref{eq:relational:adjustment:embeded}) only approximates (\ref{eq:relational:adjustment}), hence the quality of the answers depends on  whether the embeddings preserve sufficient statistics. 
\revd{In this work, we use the following natural choices of embeddings, a formal study of the choices of embedding functions in multi-relational causal analysis is an interesting direction of future work.
    (1) {\em Mean} and {\em median:} Uses basic aggregation functions, such as mean and median, together with the cardinality of the vectors (to account for the underlying topology of the relational skeleton, e.g, number of authors or collaborators). 
    (2) {\em Padding:}  Pads each variable size vector with out-of-band ``empty markers" to make create same-sized vectors to use directly as the embedding.
    (3) {\em Moments:} Uses a vector consists of $k$ moments (\ie, mean, variance,  skewness, etc.), where $k$ is chosen to minimize response prediction loss.
}
\vspace*{-3mm}

\section{Experiments}

\label{sec:experiments}
\allowdisplaybreaks

\begin{figure}[!tbp]
    \centering
    \subfloat[]{\includegraphics[width=0.48\textwidth]{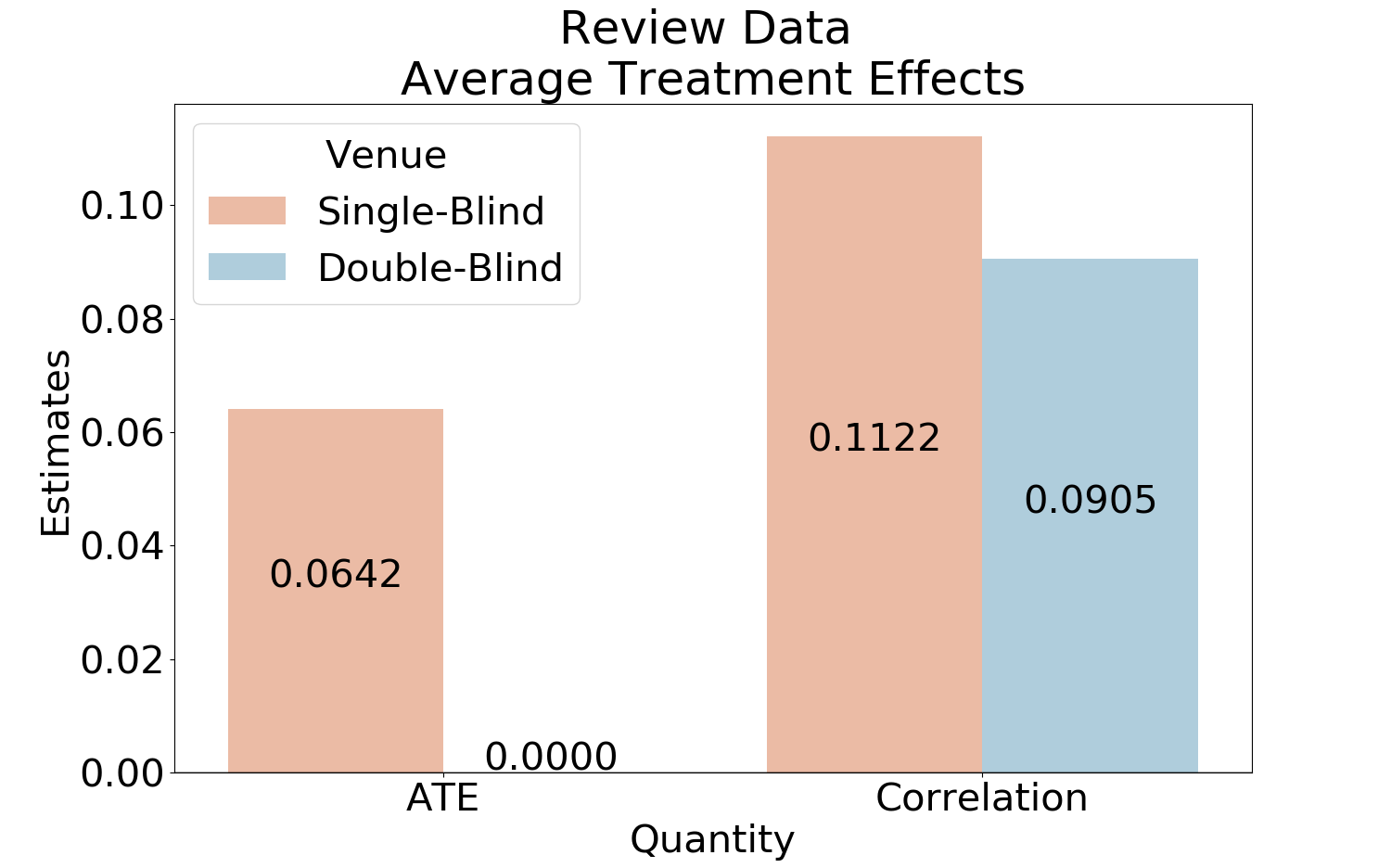}}
    \quad
    \subfloat[]{\includegraphics[width=0.48\textwidth]{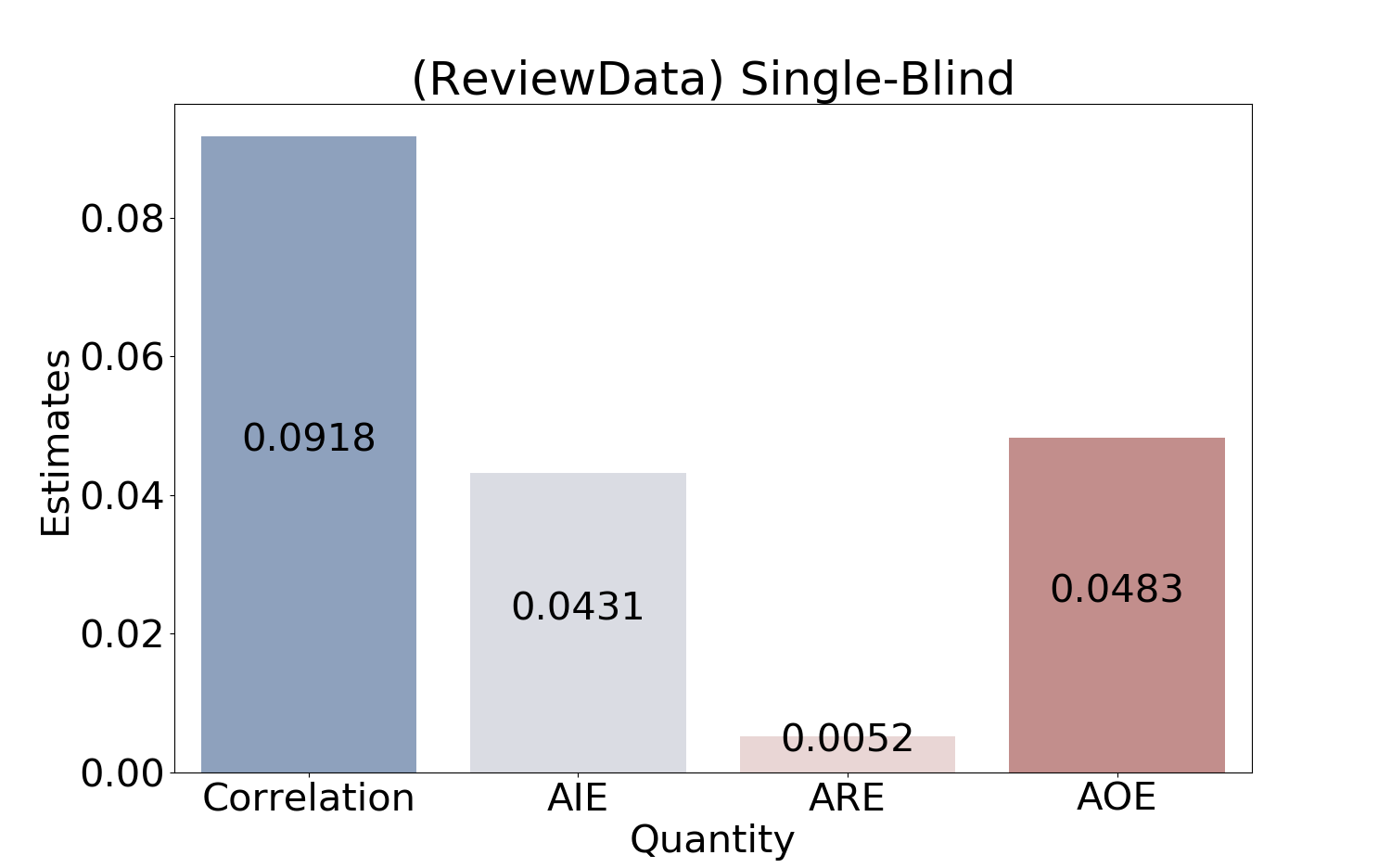}}
    \caption{(a) The average treatment effect estimates and Pearson's correlation for single-blind and double-blind submissions. (b) Pearson's correlation, average isolated effect, average relational effect and average overall effect for all authors on submissions in single-blind venues.}
    \label{fig:end2end}
\end{figure}
\cutr{
\revm{This section discusses the experiments to evaluate the feasibility and efficacy of \sys.  We used real data from three unique datasets from different domains to understand \sys's end-to-end performance. These real datasets are (a) \data\ \cite{OpenReview, scopus, shanghai} (an integrated review dataset as used in Example~\ref{ex:univdomain:causalmodel}, Section~\ref{sec:expt-reviewdata}) (b) \mimicdata\ \cite{mimic} (the Multiparameter Intelligent Monitoring in Intensive Care dataset, Section~\ref{sec:expt-mimicdata}), and (c) \nisdata\ \cite{healthcare_cost_and_utilization_project_hcup_hcup_2006} (the Nationwide Inpatient Sample dataset, Section~\ref{sec:expt-nisdata}), described in detail in their respective sections. Since in any observed real dataset ground truth for the treatment effect in causal analysis is unknown, we
also used a synthetic review dataset following \data\ with known ground truth treatment effects to validate: (1) \sys's end-to-end performance and comparison with standard causal analysis on a combination (join) of multi-relational data}
\cutr{
(particularly 
with regards to avoiding ignoring the relationality of data and naively conducting a causal analysis on the table obtained by joining all base relations)
}, (2) the recovery of the underlying true treatment effect using \sys, and (3) the sensitivity of query answering using \sys\ for different embedding methods.
}

\revm{

  In this section, we conduct an experimental evaluation of \sys,
  addressing three questions. \textbf{End-to-end performance}: is
  \sys\ effective in answering causal queries on relational data? Can
  it avoid simply discovering correlations instead of true causation?
  Can it distinguish isolated effects from relational effects?
  \textbf{Quality of estimates}: when ground truth is available, can
  \sys\ recover the true treatment effects?  And is the relational
  structure necessary for recovering the correct treatment effect?
  \textbf{Sensitivity to embeddings}: how sensitive is \sys's
  performance to the choice of the type of embedding strategy?

\textbf{Experimental Setup.~}
The experiments were performed locally on a 64-bit Linux server with 1TB RAM and 4 Intel Xeon processors with 15 cores @ 2.8GHz each. }
\revm{
\subsection{Datasets} \label{subsec:datasets}
\revm{{
\begin{table}[] \centering
		\revm{
		\begin{tabular}{@{}lcccccc@{}}\toprule
		    {Dataset} & Tables [$\#$] & {Att. [$\#$]} & {Rows [$\#$]} &Unit Table Cons. & Query Ans.\\ \midrule
		    \textbf{\mimicdata} & 26 & 324 & 400M & 6h & 4.5h \\ \hdashline
		    \textbf{\nisdata} & 4 & 280 & 8M & 4m & 30s \\ \hdashline
			\textbf{\data} & 3 & 7 & 6K &  10.6s & 1.2s \\ \hdashline
			\textbf{\sdata} & 3 & 7 & 300K & 17.2s & 1.3s \\ \hdashline
		\end{tabular}}
		\caption{\revm{Data description and query runtime.}}
		\label{tab:data}
\end{table}
}
}
We used three real datasets, two from the medical domain, and one
about conferences, summarized in Table~\ref{tab:data}.  All datasets
contain interesting relationships that inform \sys's causal analysis.
In addition, we generated a synthetic dataset in order to have control
over the ground truth.

\paragraph{\bf \mimicdata} The Multiparameter Intelligent Monitoring in
Intensive Care III (\mimicdata) database is a large-volume,
multi-parameter dataset collected from the ICUs of Beth Israel Deaconess
Medical Center from 2008 to 2014 representing 38,597 adult patients,
58,976 hospital admissions \cite{mimic}.  There are 26 tables with 400M rows and
324 attributes (see Table~\ref{tab:data}),
which include patients' information like demographics, length of stay,
medications, laboratory test results and health
insurance data. 
We specified the following causal model in \sys, where  \Ethnicity\ = ethnicity, \Patient\ = patient, \Diagnosis\ = diagnosis, \Doctor\ = doctor, and \los\ = length of stay:

\revm{
{
\begin{align*} 
\SelfPay[P] &\sem \Ethnicity[P], \Religion[P], \Gender[P] \text{ WHERE } \Patient(P)\\
\Diagnosis[P] &\sem \Ethnicity[P],\Religion[P],\Gender[P] \text{ WHERE } \Patient(P)\\
\DrugDose[D] &\sem \Diagnosis[P], \Severity[P], \Doctor[C]   \text{ WHERE }  \Prescribed(C, D), \Caregiver(C, P) \\
\Death[P] &\sem \los[P],\Diagnosis[P], \DrugDose[D], \Doctor[C] \\
& \text{ WHERE } \Caregiver(C, P), \Administered(D, P)\\
\los[P] &\sem \DrugDose[D],\Diagnosis[P] \text{ WHERE }  \Administered(D, P)
\end{align*}
 }
 }

\paragraph{\bf \nisdata} The Nationwide Inpatient Sample (\nisdata) \cite{healthcare_cost_and_utilization_project_hcup_hcup_2006} is a dataset of hospital stays across the US, produced by the Department of Health and Human Services once annually. We use the sample for the year 2006, which comprises 8 million hospital admissions across 1035 hospitals. Each admission is associated with a hospital and the patient's demographic information, admission source, health history, performed procedures, and new diagnoses. Information available about each hospital includes size, location, and ownership.
We specified a casual model in \sys\ using  16 intuitive causal rules,
using attributes whether the hospital is classified as large \cite{agency_for_healthcare_research_and_quality_bedsize_nodate}, patient's medical bill, etc.; we mention a few below:
%
{
\begin{align*}
\text{Bill}[P] & \sem \text{Illness\_Severity[P]} \\
\text{Bill}[P] & \sem \text{Private\_Ownership[H]} \text{ WHERE } \text{Admitted}(P, H) \\
\text{Bill}[P] & \sem \text{Surgery\_Performed[P]}\\
\text{Admitted\_to\_large}[P] & \sem \text{Illness\_Severity[P]}
\end{align*}
}
}
\paragraph{\bf \data.}\label{sec:review-data}
 \data\ consists of 2,075 papers submitted for review between 2017 and 2019 at 10 computer science conferences and workshops, which have acceptance rates between 40\%--84\%. Each submission is associated with a number of referee reviews and an acceptance or rejection decision. About half of all submissions are double-blind, while the other half reveal author names to the reviewer. All submissions were unblinded after the conferences concluded. The dataset also contains an authors table, with the citation count, h-index, publishing experience (in years), and university ranking for each of the 4490 authors who contributed to a paper in the dataset.  \data\ was built by scraping, cleaning and normalizing data from OpenReview~\cite{OpenReview}, Scopus~\cite{scopus} and the Shanghai University Rankings~\cite{shanghai}. Scraping Scopus was done using the tool proposed in ~\cite{rose_pybliometrics:_2019}. We plan to make \data\ publicly available.

\paragraph{\bf \sdata.}
We generated  \sdata\
 mimicking the probability distributions observed in \data. The relational skeleton was generated keeping in mind the correlations we observed in the real data, \eg, authors with high productivity tend to be affiliated with more prestigious institutions, and authors from more prestigious institutions tend to collaborate with each other more. However, for each paper we let the number of authors and each submission's choice of venue be determined randomly. We generated $10,000$ authors with affiliations to $200$ different institutions, along with $75,000$ papers submitted to $100$ different venues. Next, we generated two datasets to explore \sys's performance with and without relational effects. The first dataset had a treatment effect of prestige on review score of 0 for double-blind and 1 for single-blind venues, for all submissions. In the second dataset, the isolated effects stay the same for both double- and single-blind venues while there is a constant effect of $1/2$ on the review score of each submission if authors' collaborators are prestigious.

\cutr{\textbf{Result Summary.~}\label{sec:resultsummary} The experiments on \data, \mimicdata~ and \nisdata~ displays \sys's end-to-end ability to setup the relational causal schema, encode background knowledge and answer varied causal queries on data from different domains. On \data, we answer two sets of causal queries. The first set investigates the effect of prestige of paper's authors on paper's review scores. The second set of queries investigates the isolated, relational and overall effects of an author's prestige on their average chance of paper acceptance. We compare these results with naive correlation to highlight the contrast between causal effect and association. Analysis on \mimicdata~ highlights that while there is minuscule causal linkage between insurance coverage and the quality of care provided, we still observe stark difference when we just look at associations. The showcases that the \textit{a priori} differences between the groups with and without insurance is reflected their medical outcome. We see a similar reversal of trend for \nisdata, where we look at the effect of the size of hospital on cost of treatment. These examples illustrates the differences between naive association based inference and inferences based on causal analysis. 
}
\begin{figure}
    \centering
    \includegraphics[width=0.5\textwidth]{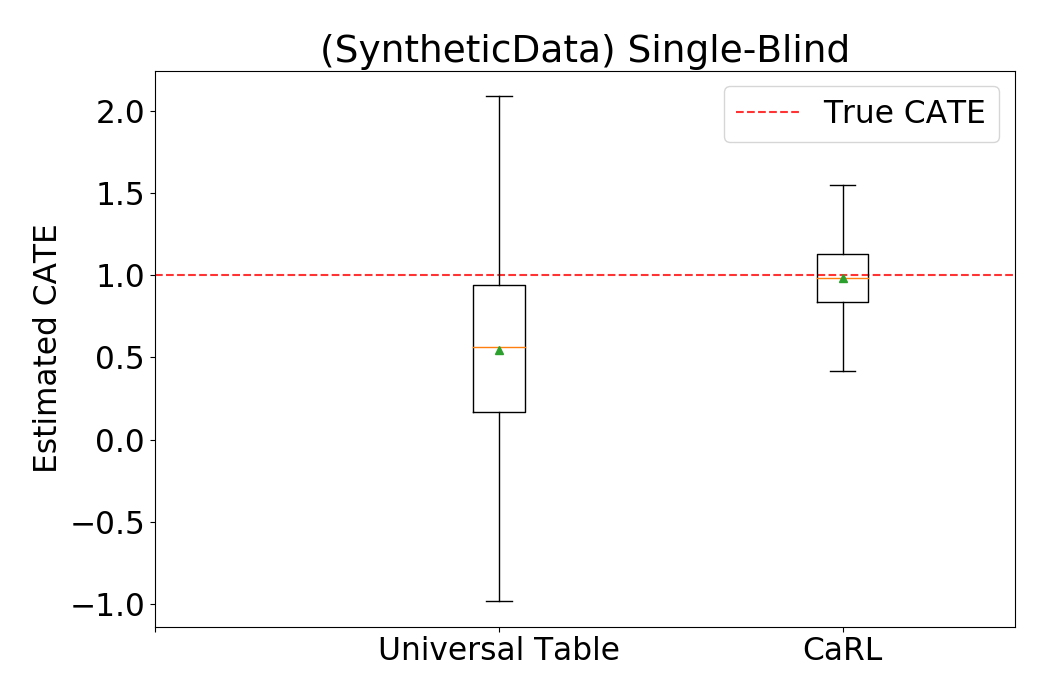}
    \caption{Comparing CATEs estimated using the universal table obtained by joining all base relations and \sys.}
    \label{fig:unitable}
\end{figure}
\subsection{End-to-end results}

\label{sec:exp_end_to_end}


In this section we used \sys\ to answer several causal queries (including all kinds defined in Section~\ref{sec:qlanguage}) on the
real datasets and evaluated their quality.  Since we do not have ground
truth for this data, we discuss which results are more in agreement
with the intuition or the literature in the field.  We also compared
\sys's answer with more naive, correlation-based answers.  \reva{All
  runtimes for these experiments are reported in
  Table~\ref{tab:data}.}

\revm{\paragraph{\bf \mimicdata} We asked the following causal
  queries: what is the effect of not having health insurance on the
  mortality rate? And what is the effect of non-insurance on the
  length-of-stay (in the hospital)?

{          
\vspace*{-3mm}
\begin{align}
(a)~ \Death[P] &\sem \SelfPay[P] ? 
&
(b)~ \los[P] \sem \SelfPay[P] ?
 \label{eq:query_LOS}
\end{align}
\vspace*{-5mm}
}

The treated and control groups consist of patients without insurance
(self-payers) and with insurance
respectively. Table~\ref{tab:mimic-res} shows the results for both the
average treatment effect (ATE) and the naive difference of averages
between the two groups.  Computed naively, there is a significant
difference in both mortality rate and the length of stay between
insured and non-insured groups. However, after adjusting for
confounders and mediators, we observe that there is almost no effect
on mortality rate; in other words, care givers do not discriminate in
treating patients with and without insurance.  The discrepancy is due
to the fact that self-payers tend to defer checking into a hospital
until the problem is severe.  The treatment effect on the length of stay is also
attenuated compared to the estimated difference between the average of the outcomes for treated and control groups.

{          
\begin{table}[H]
\centering
\resizebox{0.5\textwidth}{!}{%
\begin{tabular}{|l|c|c|c|c|}
\hline
\textit{\textbf{\begin{tabular}[c]{@{}l@{}}Causal Query\end{tabular}}} & \multicolumn{1}{l|}{\textbf{\begin{tabular}[c]{@{}l@{}}Avg. of\\ Treated\end{tabular}}} & \multicolumn{1}{l|}{\textbf{\begin{tabular}[c]{@{}l@{}}Avg. of\\ Control\end{tabular}}} & \multicolumn{1}{l|}{\textbf{\begin{tabular}[c]{@{}l@{}}Diff. of\\ Averages\end{tabular}}} & \multicolumn{1}{l|}{\textbf{ATE}} \\ \hline
\textbf{MIMIC~1 (\ref{eq:query_LOS}-a)} & 15.5\% & 9.8\% & 5.7\% & 0.5\% \\ \hline
\textbf{MIMIC~2 (\ref{eq:query_LOS}-b)} & 154.23h & 244.15h & -89.92h & -26.04h \\ \hline
\textbf{\nisdata~1 (\ref{eq:query_sizebill})} & 64\% & 31\% & 33\% & -10\% \\ \hline
\end{tabular}%
}
\caption{\revm{The Average Treatment Effect (ATE) compared to naively
    computing the difference between the averages of the treated and
    control groups.
}}
\label{tab:mimic-res}
\end{table}
}
\paragraph*{\bf \nisdata} We asked the following causal query: are
patients admitted to large hospitals charged more than those admitted
to small hospitals?  Expressed in \sys, the query is:
%
\begin{align}
\text{AVG\_Bill}[H] &\sem \text{Admitted\_to\_Large}[P]? \label{eq:query_sizebill}
\end{align}
%

The treated and control groups are large and small hospitals
respectively.  As before, we compared the ATE with the naive
difference of the average bills of the two groups, and show the
results in Table~\ref{tab:mimic-res}.  The naive computation shows
that the average bill at large hospitals is 33\% more likely to be larger per patient
(\textit{i.e.,} less affordable).
%
However, when computing the ATE, \sys\ adjusts for the profile of the
patients each hospital receives, and we obtain a surprising reversal
of the trend. The reason for this discrepancy is that patients with
more severe (and, thus, more costly) conditions tend to go to large
hospitals, while small hospitals tend to have patients with milder
conditions.  In fact, the medical literature reports that, all
else being equal, a larger hospital will provide more affordable
treatment than a small one.  One meta-analysis
\cite{giancotti_efficiency_2017} reports that economies of scale are
present in the healthcare sector and so finds support for the policy
of several national governments to consolidate smaller hospitals to
increase productivity and efficiency.}
\revm{\paragraph{\bf \data} We asked two casual queries: what is the
  effect of an author's {\em prestige} on the average score of his/her submissions?  And what
  is the effect on the submission score when more than 1/3 of her
  co-authors are treated?  Expressed in \sys, the queries are:
{          
\begin{align}
& \text{AVG\_Score}[A] \sem \text{Prestige}[A]?\label{eq:review:query:b}\\
& \text{Score}[S] \sem \text{Prestige}[A]? \  \mathtt{WHEN} \  \mathtt{MORE  \ THAN} \ 1/3  \  \ \mathtt{PEERS} \ \mathtt{TREATED}\label{eq:review:query:c}
\end{align}
}
}
\cutr{Notice that these cover all three query types supported by \sys\ as
described in Section~\ref{sec:qlanguage}.}  We ran each query twice,
once on single-blind conferences, and once on double-blind; in \sys,
this is achieved by adding a \texttt{where} condition to the queries
(not shown here), and computed the $\ate$ in both cases.
%
%
\revm{In addition, we also computed the Pearson correlation between
  the score distributions of prestigious and non-prestigious authors.
  The results are shown in Figure~\ref{fig:end2end}(a),
} and show a significant correlation, both for single-blind and
double-blind conferences. However, \sys\ found that the causal effect
of prestige on submission scores was {\em significant} for
single-blind venues, but {\em not significant} for double-blind
venues. A naive interpretation of correlation-as-causation leads to
the false conclusion that double blinding is not effective in reducing
bias.  While the validity of our these findings depend on the validity
of the underling assumptions made in this paper, we believe they
surpass naive correlation. In particular, we note that our results are
in accordance with a series of controlled experiments that suggest
double-blind reviewing does indeed reduce institutional prestige
bias~\cite{ross2006effect,snodgrass2006single,okike2016single,tomkins2017reviewer}.

Given its primarily networked structure, \data\ offers a great
opportunity to compute peer effects.  (\reva{In contrast, there are no
  relational peers for the causal queries on \mimicdata\ and
  \nisdata.})  We computed the effect of prestige across peers on
review scores in single-blind conferences, and used \sys\ to compute
the isolated, the relational, and the overall effects as in
\eqref{eq:review:query:c}. Figure~\ref{fig:end2end}(b) reveals that
the isolated effect ($\aide$) is more significant than the relational
effect ($\arlf$), meaning that an author's own prestige has a stronger
effect on his or her average submission score than their
collaborators' prestige has, as we might expect. Furthermore, one can
verify that we obtained $\aoe=\aide+\arlf$, which independently
conforms with \revm{Proposition~\ref{eq:decomposition}}.

\subsection{Quality of estimates}

{          
\begin{table}
\centering
\revm{
\resizebox{0.45\textwidth}{!}{%
\begin{tabular}{|l|l|c|c|c|}
\hline
\multicolumn{2}{|l|}{~}&
\multicolumn{1}{l|}{\textbf{AIE}} & \multicolumn{1}{l|}{\textbf{ARE}} & \multicolumn{1}{l|}{\textbf{AOE}} \\ \hline
\multirow{2}{*}{\textbf{Single-Blind}} & \textbf{Estimated} & 1.138 & 0.434 & 1.573 \\ \cline{2-5}
~ & \textbf{Ground Truth} & 1.000 & 0.500 & 1.500 \\ \hline
\multirow{2}{*}{\textbf{Double-Blind}} & \textbf{Estimated} & 0.101 & 0.429 & 0.538 \\ \cline{2-5}
~ & \textbf{Ground Truth} & 0.000 & 0.500 & 0.500 \\ \hline
\end{tabular}%
}
}
\caption{\revm{Averages for isolated, relational and overall effects for \sdata\ by query in (\ref{eq:review:query:b}).}}
\label{tab:rel_pdf_single}
\end{table}
}

As the ground truth is not known for the real datasets, we use \sdata\ to evaluate the quality of the estimates \sys\ provides.

We report estimated and true $\ate$, $\arlf$, $\aide$ and $\aoe$ to scrutinize \sys's performance. \revm{ As seen in Table~\ref{tab:rel_pdf_single}, \sys\ is able to disentangle the isolated and relational effects present in \sdata. It is able to do so for both sub-populations, which have different generative rules. The different estimates are correctly recovered, and the property $\aoe=\aide+\arlf$ from Proposition~\ref{eq:decomposition} is again respected.

\begin{table}[]
\centering
\resizebox{0.6\textwidth}{!}{%
\revm{
\begin{tabular}{|l|l|c|c|c|c|}
\hline
\multirow{2}{*}{\textbf{Method}} & \multirow{2}{*}{\textbf{Embedding}} & \multicolumn{2}{c|}{\textit{\textbf{Single-Blind}}} & \multicolumn{2}{c|}{\textit{\textbf{Double-Blind}}} \\ \cline{3-6}
 & & \textbf{Estimated} & \textbf{True} & \textbf{Estimated} & \textbf{True} \\ \hline
\multirow{4}{*}{\textbf{\sys}} & \textbf{Mean} & $1.124\pm0.43$ & $1.00$ & $0.192\pm0.40$ & 0.00 \\ \cline{2-6}
 & \textbf{Median} & $1.119\pm0.36$ & 1.00 & $0.115\pm0.37$ & 0.00 \\ \cline{2-6}
 & \textbf{\begin{tabular}[c]{@{}l@{}}Moment\\ Summary\end{tabular}} & $1.020\pm0.36$ & 1.00 & $0.109\pm0.32$ & 0.00 \\ \cline{2-6}
 & \textbf{Padding} & $1.011\pm0.29$ & 1.00 & $0.013\pm0.30$ & 0.00 \\ \hline
\textbf{\begin{tabular}[c]{@{}l@{}}Universal\\ Table\end{tabular}} & \textbf{N/A} & $0.54\pm0.73$ & 1.00 & $0.201\pm0.64$ & 0.00 \\ \hline
\end{tabular}%
}
}
\caption{\revm{Comparing the sensitivity of the quality of query answer to different choice of embeddings on \sdata, using the query in (\ref{eq:review:query:c}).}}
\label{tab:cate_violin}
\end{table}

To test the ability to utilize relational structure, we computed the treatment effect estimates to the causal queries \eqref{eq:review:query:c} using \sys\ and compared to propensity score matching on the universal table obtained by joining all base relations.}
\revm{Table~\ref{tab:cate_violin} compares the estimates by these two approaches with the ground truth. As shown, in all tested cases \sys\ approximately recovered the ground truth within a reasonable error bound.} However,  causal inference on the universal table resulted in an {\em incorrect ATE with a considerable variance.} This experiment reveals that ignoring the relational structure in relational domains can lead to incorrect estimates and erroneous conclusions.

\subsection{Sensitivity to embeddings}
\revm{
Assessing the effect of embeddings requires access to the ground truth, so we restrict ourselves to testing on \sdata\ in this subsection. Table~\ref{tab:cate_violin} shows that while \sys\ consistently recovers the ATE, the correct choice of embedding can improve its performance. We observe that simple embeddings (such as mean or median) recovered approximately the true average treatment effect. However, their estimate was less centered around the ground truth compared to embeddings like padding or moment summarization. While padding had the tightest variance,  moment summarization also showed promising results. These trends apply regardless of whether we consider single- or double-blind venues, each of which has different generative models and ground truths. It is important to note that moment summarization is one of the simpler approaches for set embedding\footnote{More sophisticated approaches exist, \eg, recurrent neural networks or kernel density estimators. We leave these for future work.}. Additionally, the padding technique tends to create vectors that grow in proportion to the size of the relational skeleton, which limits to its applicability.
}

\cut{
\begin{table}[]
    \centering
    \begin{tabular}{|c|c|c|c|c|}
    \hline
    \textbf{Embedding} & \textbf{Classifier} & \textbf{AUC} & \textbf{Accuracy} & \textbf{F1} \\ \hline
    Mean & Logistic & 0.660 & 0.639 & 0.690 \\
    Mean & RF & \textbf{0.693} & 0.639 & 0.672 \\
    Mean & MLP & 0.679 & 0.644 & \textbf{0.772} \\
    RNN & Logistic & 0.691 & 0.648 & 0.688 \\
    RNN & RF & 0.613 & 0.565 & 0.597 \\
    RNN & MLP & 0.681 & \textbf{0.654} & 0.700 \\
    \hline 
    \end{tabular}
    \caption{\babak{put the mean and RNn for each classifier next to each other so they can compare. Mean, RNN, mean, ENN, ... i.e., sot by classifier Learned embedding performance}}
    \label{tab:embed_compare}
\end{table}
}

\begin{figure}
    \subfloat[]{\includegraphics[width=0.5\textwidth]{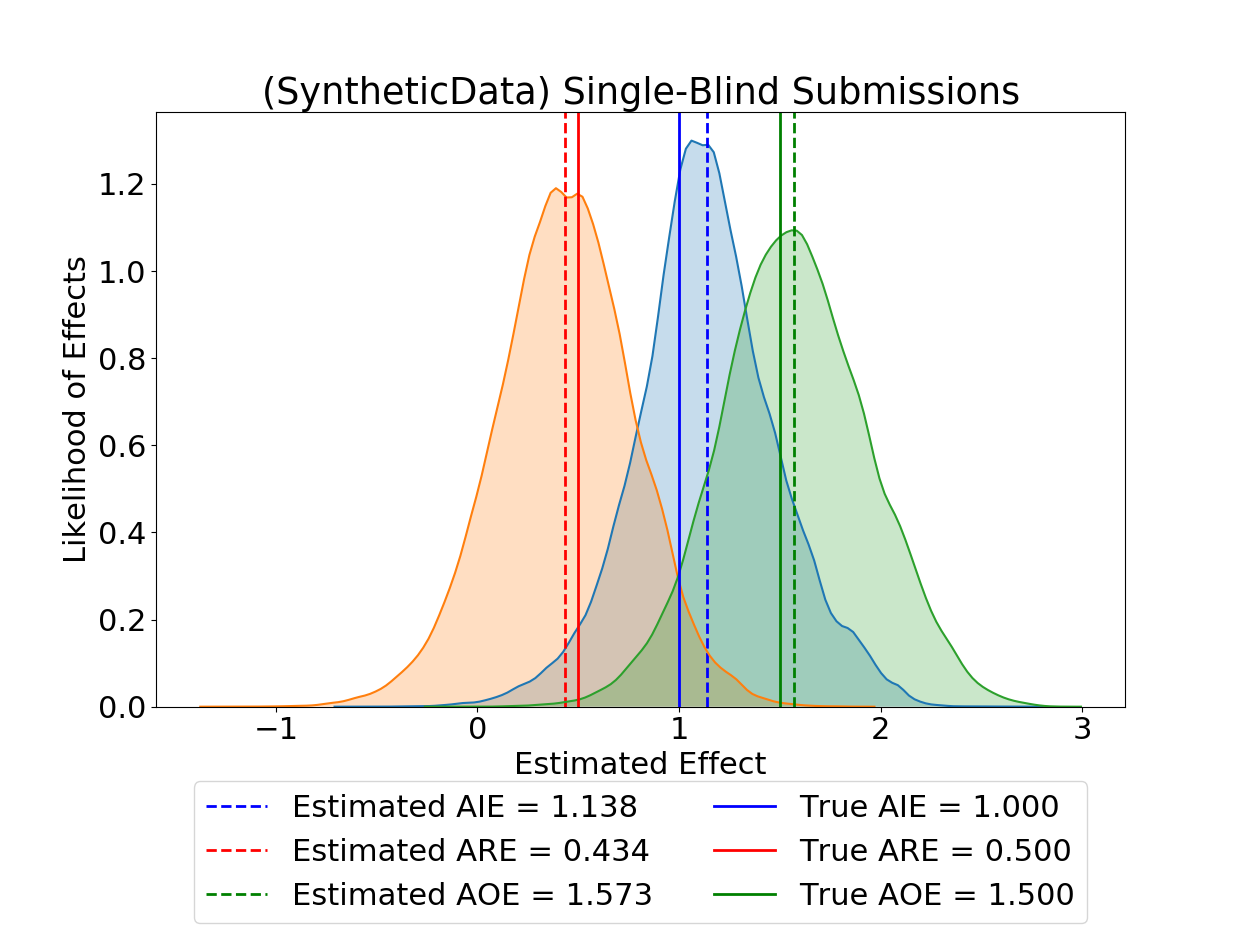}}
    \quad
    \subfloat[]{\includegraphics[width=0.5\textwidth]{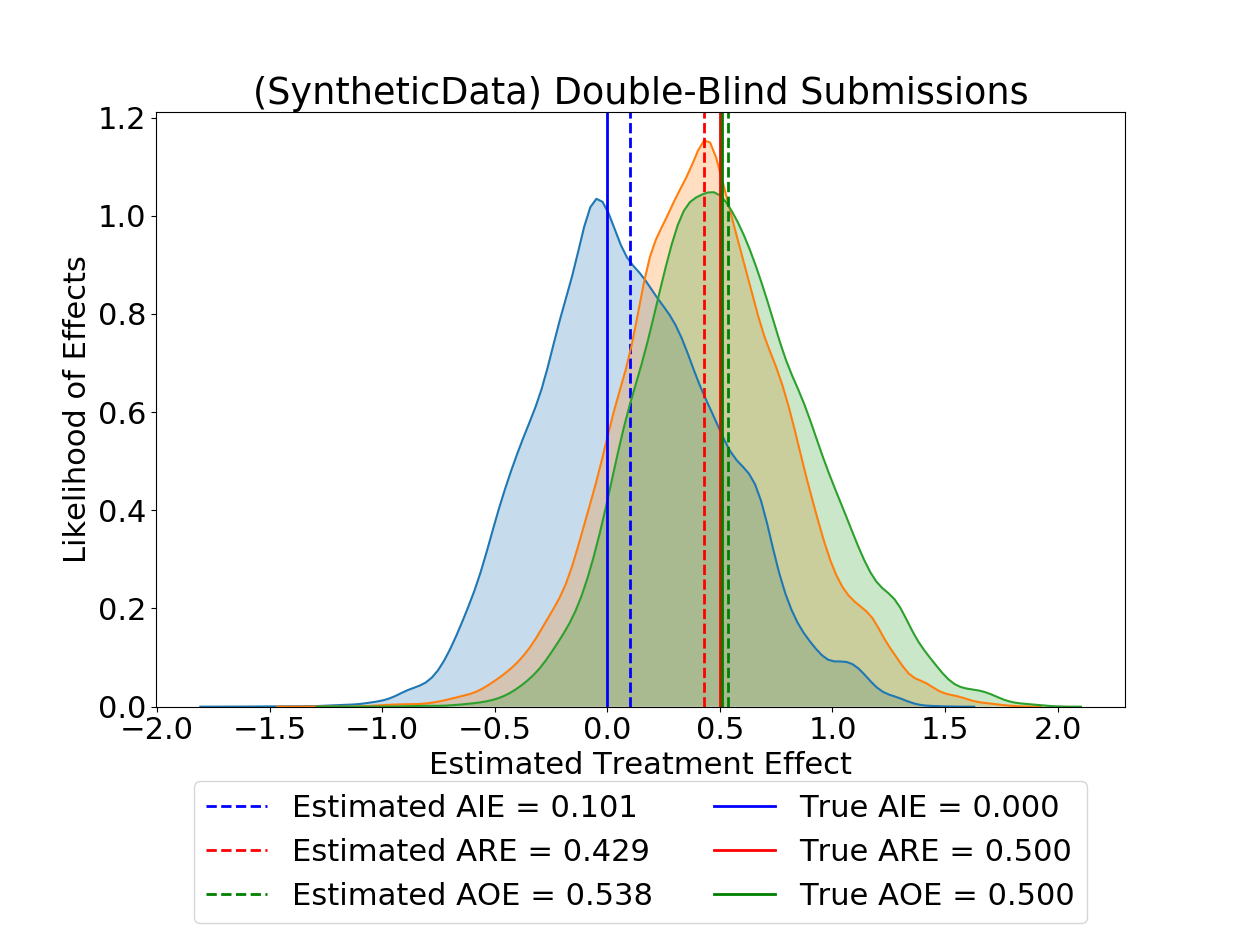}}
    \caption{Relative likelihood of effects and corresponding averages for isolated, relational and overall effects for (a) single-blind venues and (b) double-blind venues.}
    \label{fig:rel_pdf}
\end{figure}
\begin{figure}
    \centering
    \subfloat[]{\includegraphics[width = 0.48\textwidth]{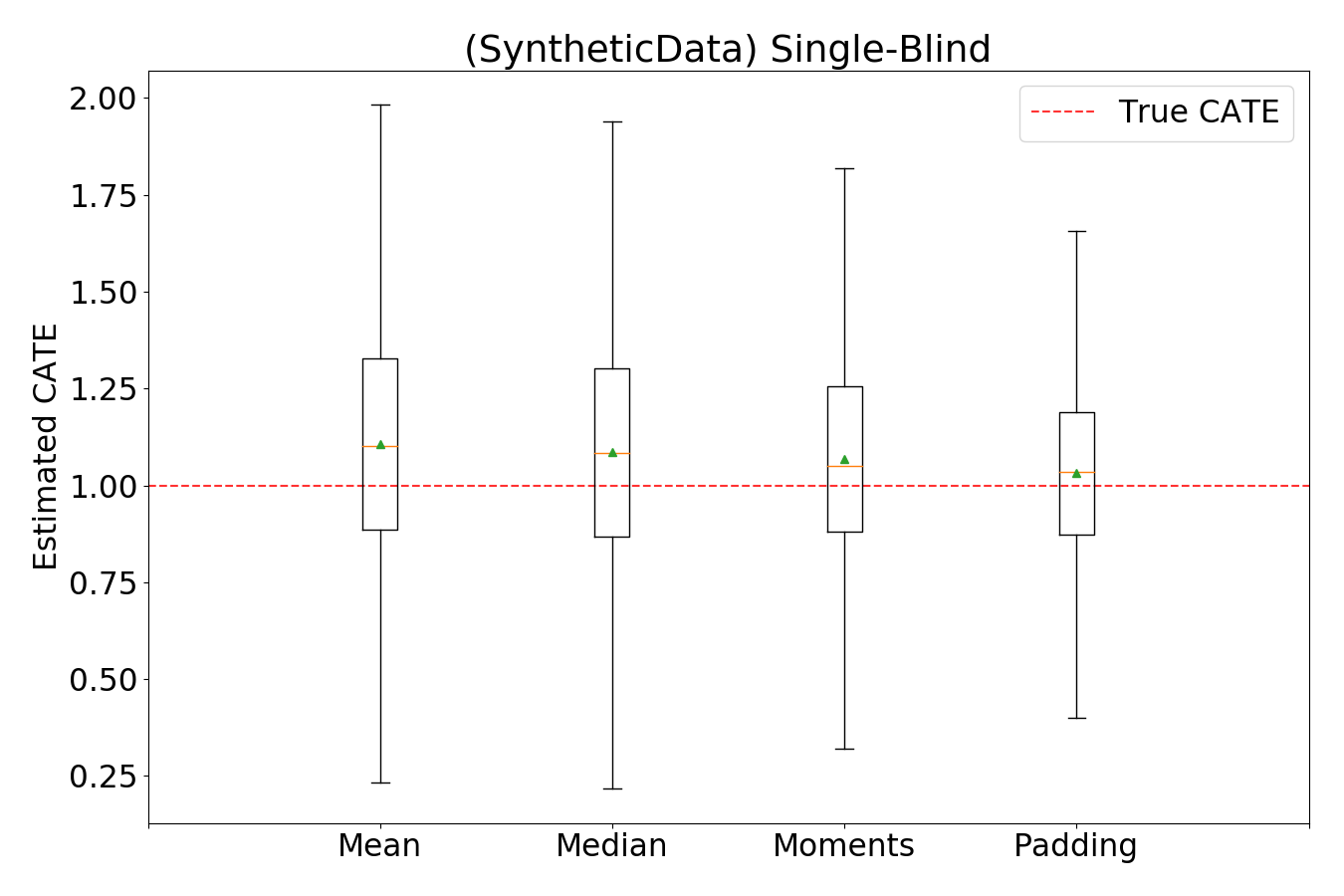}}
    \subfloat[]{\includegraphics[width = 0.48\textwidth]{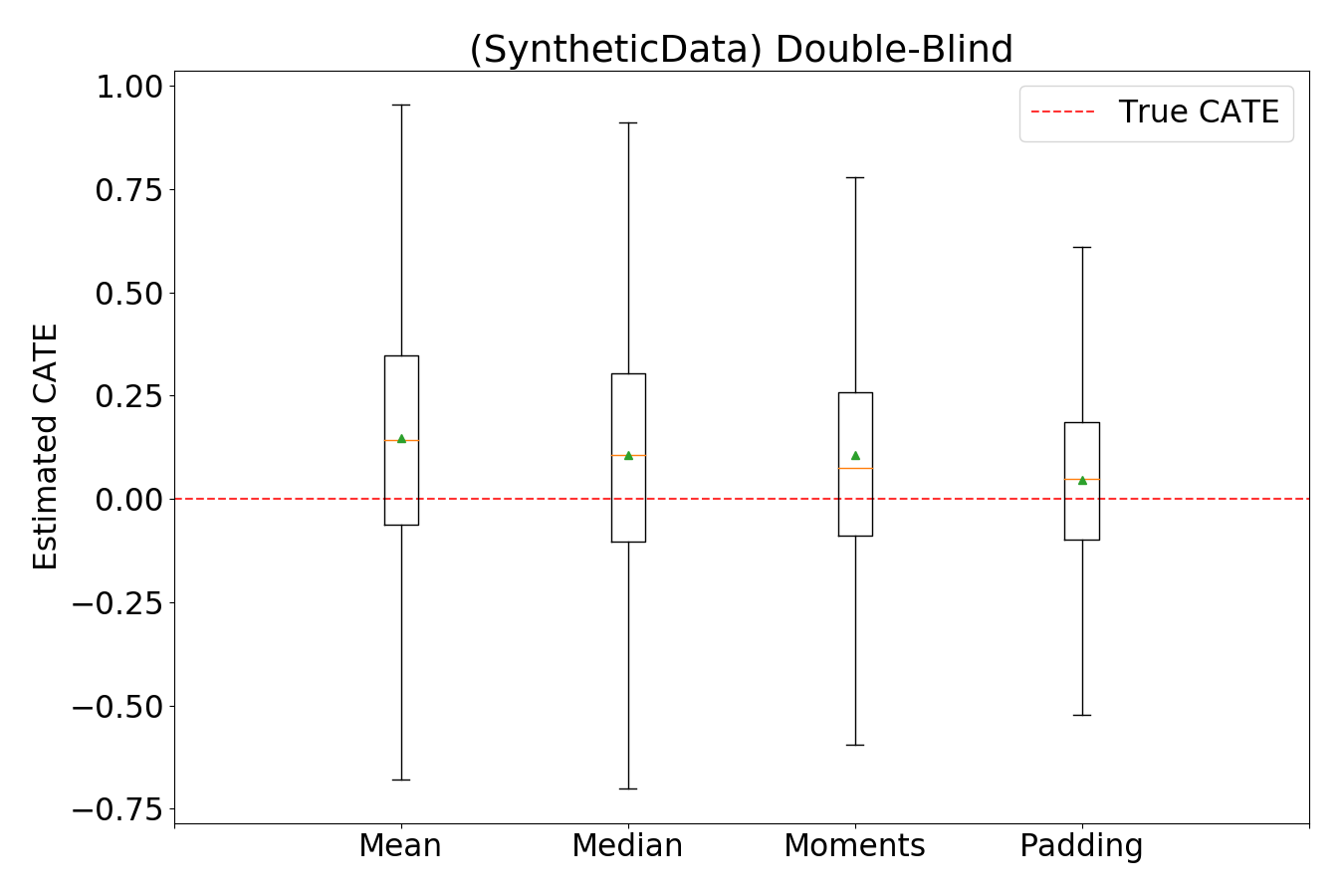}}
    \caption{Comparing the sensitivity of of the quality of query answer (CATE) to different choice of embeddings for (a) Single-Blind submissions and (b) Double-blind submissions.}
    \label{fig:cate_violin}
\end{figure}

\cut{
\subsection{\revm{\mimicdata} [To Be Removed]}\label{sec:expt-mimicdata}

\revm{

\paragraph{\bf Causal model and queries.} We are interested in answering causal query if having no insurance has any direct effect on the length of stay, and if having no insurance increases one's chances of mortality. As we are interested in directed effect of self paying on the outcomes, we need to control for the common confounders and adjust for mediators.

\paragraph{\bf Results}  
}

\vspace{-0.2cm}
}

\cut{
\revm{
 \vspace{-0.5cm}
\subsection{\textsc{Nationwide Inpatient Sample [To be removed]}}\label{sec:expt-nisdata}

\paragraph{\bf Causal Model and Query} 

We studied 
the causal effect of the size of the hospital on the cost of healthcare expressed in (\ref{eq:query_sizebill}). Note that attributes such as illness severity and admission source are confounders. For example, if the source of admission is a transfer from another hospital, both the chances that the patient is moved to a larger hospital and the chances that the patient needs more expensive care increase.

{\bf Results.~} 
}

}

\cut{
\revm{Based on the results of the above mentioned experiments, we conclude that \sys~ can recover true treatment effects on relational data from multiple domains at granular level and helps the investigator to dissociate causal effects from correlations.}
}

\section{Conclusions and Future Work} \label{sec:conclusion}
\revm{We introduced the Causal Relational Learning framework for performing causal inference on relational data. This framework allows users to encode background knowledge using a declarative language called \emph{\sys\ (Causal Relational Language)} using simple Datalog-like rules, and ask various complex causal queries on relational data.} \revb{\sys\ is designed for researchers and analysts with a social science, healthcare, academic or legal background who are interested inferring causality from a complex relational data.} \revm{\sys\ adds on to existing causal inference literature by relaxing the \textit{unit-homogeniety assumption} and allowing the confounders, treatment units and outcome units to be of different kinds. We evaluated \sys's completeness and correctness on real-world and synthetic data from academic and healthcare domains. \sys~ is successfully able to recover the treatment effects for complex causal queries that may require multiple joins and aggregates.}

\revm{In future, we aim to extend \sys\ to deal with complex cyclic causal dependencies using stationary distribution of stochastic  processes.}  We plan to study stochastic interventions and complex interventions on relational skeletons, which are assumed to be fixed in this paper.  \revc{We also plan a theoretical and methodological study the functionality of different types of embeddings. We aim to develop principled learning approach for finding efficient embeddings using graph representation learning and graph embedding.}

Recently, it has been shown that causality is foundational to the emerging field of algorithmic fairness \cite{salimi2019capuchin}. In future work we plan to use causal relational learning to  study a causality-based framework for fairness and discrimination in relational domains.
\clearpage 
\bibliographystyle{plain}
\bibliography{Ref}

\clearpage
\pagebreak
\revd{\begin{proof}[Proof of Theorem~\ref{theore:reladj}](Sketch) Since embeddings $\Emb^{\skl}$ are deterministic functions of ground attributes $\FAtt^{\skl}$, they are also random variables, hence we can define the joint probability distribution $\pr(\FAtt^{\skl},\Emb^{\skl})$.
Also note that the parents of an atom in the augmented ground causal diagram $\egcg$ corresponded to the embedded parents of the same node in ground causal diagram $\gcg$.

Since each atomic intervention $do(\treat[\mb x_i]=t_i)$ modifies the augmented ground causal diagram  $\egcg$ by removing the parents of $\treat[\mb x_i]$ from $\egcg$ (implied from the factorization Eq~\ref{eq:rel-fac}), the post intervention distribution $\pr\Big(\FAtt^{\skl}, \Emb^{\skl} \mid \Do\big(\treat[\arbtreat]=\trevec_{\arbtreat}\big)\Big)$ can be obtained from the pre-intervention (observed)  distribution $ \pr(\FAtt^{\skl})$  by removing all factors
	$\pr\Big(\fAtt(\mb x),  \mid \Pa(\fAtt(\mb x)\big)\Big)$, from $ \pr(\FAtt^{\skl},\Emb^{\skl})$ (cf. Eq~\ref{eq:rel-fac}),
	hence we obtain the following: 
{\small
\begin{align}
    \pr\Big(\FAtt^{\skl},\Emb^{\skl} \mid \Do\big(\treat[\arbtreat]=\trevec_{\arbtreat}\big)\Big) =  \frac{\pr(\FAtt^{\skl})}{\prod_{\mb x \in \arbtreat} \pr\Big(\fAtt(\mb x) \mid \Emb^{\fAtt}\big(\Pa'(\fAtt(\mb x)\big)\Big)}  \label{eq:proff_intervetional}
\end{align}
}
The following factorization implied by the chain rule of probability:

{ \small
\begin{align}
 \pr(\FAtt^{\skl}, \Emb^{\skl})  = & 
\pr\Big(\bigcup_{\mb x\in \arbtreat}  \Pa\big(\treat[\mb x]) \Big) \  \pr\Big(\treat[\mb x_1] \mid \bigcup_{\mb x\in \arbtreat} \Pa(\treat[\mb x])\Big)  \\&
 \pr\Big(\treat[\mb x_2] \mid \treat[\mb x_1], \bigcup_{\mb x\in \arbtreat} \Pa(\treat[\mb x])\Big)  \\&
 \nonumber  \hspace{0.5cm} \vdots  \\&
\pr\Big(\treat[\mb x_i] \mid \bigcup^{i-1}_{j=0} \treat[\mb x_j],  \bigcup_{\mb x\in \arbtreat}  \Pa\big(\treat[\mb x])\Big)  \\&
 \hspace{0.5cm} \vdots  \nonumber  \\&
\pr(\overline{\fAtt^{\skl}}, \overline{\Emb^{\skl}} \mid  \bigcup_{\mb x\in \arbtreat}  \Pa\big(\treat[\mb x]), \bigcup_{\mb x\in \arbtreat}  \treat[\mb x])  
\label{eq:proof_fac}
\end{align}
}
where $\overline{\fAtt^{\skl}} \cup \overline{\Emb^{\skl}}$        consist of all ground atoms in $\FAtt^{\skl} \cup \Emb^{\skl}$ except for  $\bigcup_{\mb x\in \arbtreat} \Pa\big(\treat[\mb x])$ and  $\bigcup_{\mb x\in \arbtreat}  \treat[\mb x])  \big\}$.
 The acyclicity of $\egcg$ implies $ \bigcup_{\mb x\in \arbtreat}  \Pa\big(\treat[\mb x])$ and  $\bigcup_{\mb x\in \arbtreat}  \treat[\mb x]$ are disjoint, hence the above factorization is valid.

The following implied from the factorization in Eq.~\ref{eq:proof_fac} and Eq~\ref{eq:proff_intervetional},
{\small
\begin{align} 
\pr\Big(\outc[\mb x']=y \mid \Do\big(\treat[\arbtreat]=\trevec_{\arbtreat}\big)\Big)& = \nonumber\\
 &  \hspace{-3cm}  \sum_{\bigcup_{\mb x\in \arbtreat}  \pa\big(\treat[\mb x]\big)} \
 \pr\big(Y[\mb x']=y \mid \bigcup_{\mb x\in \arbtreat}  \pa\big(\treat[\mb x]), \treat[\arbtreat]=\trevec_{\arbtreat} \big)
 \pr(\bigcup_{\mb x\in \arbtreat}  \pa\big(\treat[\mb x])\big)
   \label{eq:relational:adjustment:parent}
\end{align}
}
Now, given a set $\mb Z \subseteq \Emb^{\skl}$, we can rewrite the RHS of Eq.~\ref{eq:relational:adjustment:parent}  into the following equivalent formula:

{\small
\begin{align} 
\text{RHS}= \nonumber\\
 &  \hspace{-1cm}  \sum_{\bigcup_{\mb x\in \arbtreat} \pa\big(\treat[\mb x]\big)} \   \sum_{\mb z \in \Dom(\mb Z)}
 \pr\big(Y[\mb x']=y \mid \bigcup_{\mb x\in \arbtreat}  \pa\big(\treat[\mb x]), \treat[\arbtreat]=\trevec_{\arbtreat}, \mb Z=z \big) \nonumber \\ & \pr(\mb Z=z \mid \bigcup_{\mb x\in \arbtreat}  \pa\big(\treat[\mb x]), \treat[\arbtreat]=\trevec_{\arbtreat})
 \  \pr(\bigcup_{\mb x\in \arbtreat}  \pa\big(\treat[\mb x])\big) 
 \nonumber \\ & 
   \label{eq:proof-rewritten}
\end{align}
}

Now from the conditional independence in Eq.\ref{eq:raf_idep} and the conditional independence statement  $\treat[\mb x] \indep \mb Z , \bigcup_{\mb x\in \arbtreat}  \pa\big(\treat[\mb x]) \mid \pa(\treat[x])$ 
for each $\mb x \in \arbtreat$, Eq~\ref{eq:proof-rewritten} can be simplified as follows:
{\small
\begin{align} 
\text{RHS}= \nonumber\\
 &  \hspace{-1cm}  \sum_{\bigcup_{\mb x\in \arbtreat} \pa\big(\treat[\mb x]\big)} \   \sum_{\mb z \in \Dom(\mb Z)}
 \pr\big(Y[\mb x']=y \mid \bigcup_{\mb x\in \arbtreat}  \pa\big(\treat[\mb x]), \treat[\arbtreat]=\trevec_{\arbtreat}, \mb Z=z \big) \nonumber \\ & \pr(\mb Z=z \mid \bigcup_{\mb x\in \arbtreat}  \pa\big(\treat[\mb x]), \treat[\arbtreat]=\trevec_{\arbtreat})
 \  \pr(\bigcup_{\mb x\in \arbtreat}  \pa\big(\treat[\mb x])\big) 
 \nonumber \\ &=  
  \sum_{\bigcup_{\mb x\in \arbtreat} \pa\big(\treat[\mb x]\big)} \   \sum_{\mb z \in \Dom(\mb Z)}
 \pr\big(Y[\mb x']=y \mid  \treat[\arbtreat]=\trevec_{\arbtreat}, \mb Z=z \big) \nonumber \\ & \pr(\mb Z=z \mid \bigcup_{\mb x\in \arbtreat}  \pa\big(\treat[\mb x]))
 \  \pr(\bigcup_{\mb x\in \arbtreat}  \pa\big(\treat[\mb x])\big) 
 \nonumber \\ & 
 = \sum_{\mb z \in \Dom(\mb Z)} \
 \pr\big(Y[\mb x]=y \mid \mb Z= \mb z,\treat[\arbtreat]=\trevec_{\arbtreat} \big)
 \pr(\mb Z=\mb z) \label{eq:lasteq}
\end{align}
}
Taking expectation from both sides of Eq.~\ref{eq:lasteq} completes the proof. 
\end{proof}
}
\end{document}